\def \ee{\mathbb{E}}
\def \sd{\mathbf{s}_d}
\def \yd{\mathbf{y}_d}
\def \zd{\mathbf{z}_d}
\def \yr{\mathbf{Y}_r}
\def \vr{\mathbf{V}_r}
\def \sr{\mathbf{s}_r}
\def \bh {\mathbf{h}}
\def \bg {\mathbf{g}}
\def \bs {\mathbf{s}}
\def \by {\mathbf{y}}
\def \bv {\mathbf{v}}
\def \bz {\mathbf{z}}
\def \bA {\mathbf{A}}
\def \bI {\mathbf{I}}
\def \bN {\mathbf{N}}
\def \pr{P_r}
\def \pd{P_d}
\def \td{T_d}
\def \plim{\lim\limits_{P\rightarrow \infty}}
\def \dRlGen{ \Delta R^{(l)}}
\def \dRuGen{ \Delta R^{(u)}}
\def \RANdata {R_{\rm conv}}
\def \tRANdata {\tilde{R}_{\rm conv}}
\def \dRlANdata { \Delta R^{(l)}_{\rm conv}}
\def \dRuANdata { \Delta R^{(u)}_{\rm conv}}
\def \RANboth {R_{\rm DCE}}
\def \tRANboth {\tilde{R}_{\rm DCE}}
\def \dRlANboth { \Delta R^{(l)}_{\rm DCE}}
\def \dRuANboth { \Delta R^{(u)}_{\rm DCE}}
\def \sh{\sigma_{h}^2}
\def \sg{\sigma_{g}^2}
\def \sdh{\sigma^2_{\Delta h}}
\def \sdg{\sigma^2_{\Delta g}}
\def \sdhr{\sigma_{\Delta h_r}^2}
\def \shh{\sigma_{h}^4}
\def \sgg{\sigma_{g}^4}
\def \sdhh{\sigma^4_{\Delta h}}
\def \sdgg{\sigma^4_{\Delta g}}
\def \hh{\|\mathbf{\hat h}\|^2}
\def \gg{\frac{\mathbf{\hat g^{\dag} \hat h \hat h^{\dag} \hat g}}{\hh}}
\def \ng{\|\mathbf{N}_{\hat \bh}\mathbf{\hat g}\|^2}
\def \ngg{\|\mathbf{N}_{\hat \bh} \mathbf{\overline{g}}\|^2}
\def \dhr{\mathbf{\Delta h}_r}
\def \effh {\|\mathbf{\overline h}\|^2}
\def \effg {\frac{\mathbf{\overline{g}^{\dag}\overline{h} \overline{h}^{\dag}\overline{g}}}{\effh}}
\def\calP{\mathcal{P}}
\def\calQ{\mathcal{Q}}
\def\tr{{\rm tr}}
\newtheorem{Theo}{Theorem}
\newtheorem{Prop}{Proposition}
\newtheorem{Lem}{Lemma}
\newtheorem{Cor}{Corollary}
\begin{document}

\title{On the Role of Artificial Noise in Training and\\[-.2cm] Data Transmission for Secret Communications}

\author{Ta-Yuan Liu, Shih-Chun Lin, and Y.-W. Peter Hong

\thanks{T.-Y. Liu and Y.-W. P. Hong (emails: {\tt tyliu@erdos.ee.nthu.edu.tw} and {\tt ywhong@ee.nthu.edu.tw}) are with the Institute of Communications Engineering, National Tsing Hua University, Hsinchu, Taiwan 30013.
S.-C. Lin (email: {\tt sclin@mail.ntust.edu.tw}) is with the Department of Electronic and Computer Engineering, National Taiwan University of Science and Technology, Taipei, Taiwan 10607.}\thanks{This work was presented in part at IEEE International Conference on Communications (ICC) 2012, and was supported in part by the Ministry of Science and Technology, Taiwan, under grants 104-3115-E-007-003 and 102-2221-E-007-016-MY3.}}

\maketitle

\vspace{-.8cm}

\begin{abstract}\vspace{-.2cm}
This work considers the joint design of training and data transmission in physical-layer secret communication systems, and examines the role of artificial noise (AN) in both of these phases. In particular, AN in the training phase is used to prevent the eavesdropper from obtaining accurate channel state information (CSI) whereas AN in the data transmission phase can be used to mask the transmission of the confidential message. By considering AN-assisted training and secrecy beamforming schemes, we first derive bounds on the achievable secrecy rate and obtain a closed-form approximation that is asymptotically tight at high SNR. Then, by maximizing the approximate achievable secrecy rate, the optimal power allocation between signal and AN in both training and data transmission phases is obtained for both conventional and AN-assisted training based schemes. We show that the use of AN is necessary to achieve a high secrecy rate at high SNR, and its use in the training phase can be more efficient than that in the data transmission phase when the coherence time is large. However, at low SNR, the use of AN provides no advantage since CSI is difficult to obtain in this case. Numerical results are presented to verify our theoretical claims.
\end{abstract}

\begin{keywords}\vspace{-.2cm}
Secrecy, wiretap channel, channel estimation, artificial noise, power allocation.
\end{keywords}

\section{Introduction} \label{sec.into}

Information-theoretic secrecy has received renewed interest in recent years, especially in the context of wireless communications, due to the broadcast nature of the wireless medium and the increasing amount of confidential data that is being transmitted over the air. Most of these studies stem from the seminal works by Wyner in \cite{ADWyner1975} and by Csiszar and Korner in \cite{ICsiszar1978}, where the so-called secrecy capacity was characterized for degraded and nondegraded discrete memoryless wiretap channels (i.e., channels consisting of a source, a destination, and a passive eavesdropper), respectively. The notion of secrecy capacity was introduced in these works as
the maximum achievable secrecy rate between
the source and the destination
subject to a
constraint on the information attainable by the eavesdropper. These issues were also examined
for Gaussian channels by Leung-Yan-Cheong and Hellman in \cite{SLeung-Yan-Cheong1978}, where Gaussian signalling was shown to be optimal. 
These works show that the secrecy capacity of a wiretap channel increases with the difference between the channel quality at the destination and that at the eavesdropper.

In recent years, studies of the wiretap channel have also been extended to multi-antenna wireless systems, e.g., in \cite{AKhisti2010_MISOME1,AKhisti2010_MIMOME2,SShafiee_Ulukus2009_221,FOggier_Hassibi2011,TLiu_Shamai2009}, where the achievable secrecy rates were examined under different channel assumptions and techniques were proposed to best utilize the available spatial degrees of freedom. In particular, the work in \cite{AKhisti2010_MISOME1} examined the secrecy capacity of a multiple-input single-output (MISO) wiretap channel and showed that transmit beamforming with Gaussian signalling is optimal. However, perfect knowledge of both the main and the eavesdropper channel state information (CSI) was required at the source in order to determine the optimal beamformer. In \cite{AKhisti2010_MIMOME2,SShafiee_Ulukus2009_221,FOggier_Hassibi2011,TLiu_Shamai2009}, more general results were obtained for cases with multiple antennas at the destination. Precoding techniques were proposed as a generalization of the beamforming scheme in \cite{AKhisti2010_MISOME1} to higher dimensions and, thus, perfect CSI of all links was also required to derive the optimal precoder.
On the other hand, when the eavesdropper CSI is unavailable, which is often the case in practice,
the secrecy capacity and its corresponding optimal
transmission scheme are both unknown. However, an artificial noise (AN) assisted secrecy beamforming scheme, where data is beamformed towards the destination and AN is placed in the null space of the main channel direction to jam the eavesdropper's reception,
is often adopted and was in fact shown to be asymptotically optimal in \cite{AKhisti2010_MISOME1}.
Even though knowledge of the eavesdropper channel is not required in this transmission scheme, perfect knowledge of the main channel CSI is still needed, which
can also be unrealistic
due to the presence of noise in the channel estimation.

In practice, CSI is typically obtained through training and channel estimation at the destination. In conventional systems (without secrecy constraints), training signal designs
have been studied in the literature for both
single-user \cite{MBiguesh_Gershman2006,TFWong_Park2004}
and multiuser systems \cite{JWChen_Ng2008}.
In these cases, training
is often done by having the source transmit a pilot signal to enable channel estimation at the destination (and CSI at the source is obtained by having the destination feedback its channel estimate to the source).  However, this approach may not be favorable for systems with secrecy considerations since the emission of pilot signals by the source also enables channel estimation at the eavesdropper
(and in this way enhances its ability to intercept the source's message). More recently, a secrecy enhancing training scheme, called the discriminatory channel estimation (DCE) scheme, was proposed in \cite{Chang_Chi2010_DCE,CWHuang_Hong2013_TwoWay}, where AN is super-imposed on top of the pilot signal in the training phase to disrupt the channel estimation at the eavesdropper.
These works showed that DCE can indeed enhance the difference between the channel estimation qualities at the destination and the eavesdropper in the training phase (before the actual data is transmitted), but did not discuss its impact on the achievable secrecy rate in the data transmission phase.

The main objective of this work is to examine the impact of both conventional and DCE-type training on the achievable secrecy rate of AN-assisted secrecy beamforming schemes. Different from previous works in the literature that focus on either training or data transmission,
we consider the joint design and examine the role of AN in both of these phases. In this work, the two-way DCE scheme proposed in \cite{CWHuang_Hong2013_TwoWay}  is employed in the training phase to prevent CSI leakage to the eavesdropper, and the AN-assisted secrecy beamforming scheme is used in the data transmission phase to mask the transmission of the confidential message.
We first derive bounds on the achievable secrecy rate of these schemes, which are shown to be asymptotically tight as the transmit power increases, and utilize them to obtain closed-form approximations of the achievable secrecy rate.
Then, based on the approximate secrecy rate expressions, optimal power allocation policies for the pilot signal, the data signal, and AN in both phases are obtained for systems employing conventional and AN-assisted training schemes, respectively. We show that the use of AN (in either training or data transmission) is often necessary to achieve a significantly higher secrecy rate at high SNR, and that its use in training can be more efficient than that in data transmission when the coherence time is long. However, in the low SNR regime, the use of AN provides no advantage in either training or data transmission. In fact, allocating resources for training can be strictly
suboptimal in this regime since it is difficult to obtain useful CSI when power is scarce. Numerical results are provided to verify our theoretical claims.

The joint design of training and data transmission have been investigated for conventional MIMO
point-to-point and multiuser scenarios (without secrecy constraints) in
\cite{BHassibi_Hochwald2003_Training} and \cite{GCaire2010}, respectively.
However, these issues have not been discussed before for physical layer secret communications, where
finding a reasonable approximation for the achievable secrecy rate under channel estimation errors, and coping with the non-Gaussianity caused by the combination of AN and channel estimation errors can be challenging. The impact of imperfect CSI due to channel estimation errors and limited feedback on the achievable secrecy rate have been examined in \cite{ZRezki_Khisti2014,Zhou_McKay_2010} and \cite{SCLin_Chi2011_Quantized_Feedback}, respectively. However, these works focus on the achievable secrecy rate for given estimation error statistics without consideration on how training should be performed and how it can impact the error statistics. Moreover, CSI at the eavesdropper is often assumed to be perfect in these works to avoid the need to analyze the impact of channel estimation error at the eavesdropper. A preliminary study of our work was presented in \cite{TYLiu_ICC} for the case of conventional training. The current work further considers the case of AN-assisted training, provides rigorous proofs of the theoretical claims, and examines the low SNR case.

The remainder of this paper is organized as follows. In Section \ref{sec.syst}, the system model and the training-based transmission scheme are introduced. In Section \ref{sec.secrecy rate}, upper and lower bounds of the achievable secrecy rate under channel estimation error are obtained. In Sections \ref{sec.no AN training} and \ref{Sec.DCE}, closed-form secrecy rate expressions and optimal power allocation policies are derived for cases with conventional and DCE training, respectively. The analysis of the secrecy rate with training-based transmission scheme in the low SNR regime is discussed in Section \ref{Sec.low power regime}. Finally, numerical results are provided in Section \ref{sec.simulation}, and a conclusion is given in Section \ref{sec.conclusion}.

\section{System Model}\label{sec.syst}


\begin{figure}[t]
\begin{center}
  \includegraphics[width=11cm]{./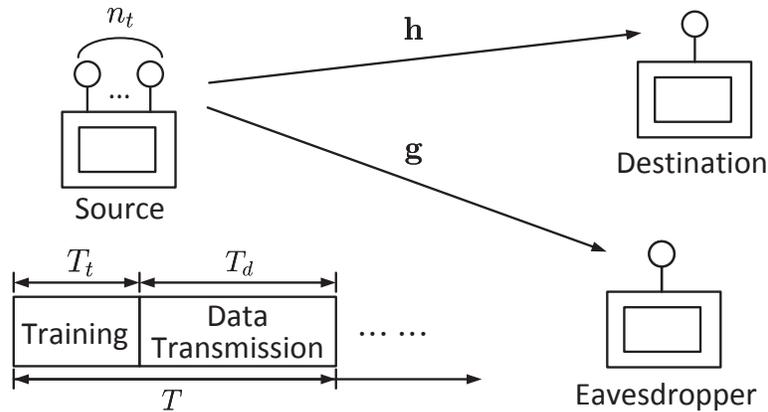}\\
    \vspace{-0.2in}
  \caption{Training-based secret transmission scheme that consists of a training phase and a data transmission phase.
   }\label{fig:system}
\end{center}
\vspace{-0.3in}
\end{figure}

Let us consider a wireless secret communication system that consists of a source, a destination, and an eavesdropper. The source is assumed to have $n_t$ antennas whereas both the destination and the eavesdropper are assumed to have only a single antenna each. The main and eavesdropper channels (i.e., the channel from the source to the destination and to the eavesdropper, respectively) can be described by the vectors $\mathbf{h}=[h_1,\ldots, h_{n_t}]^T$ and $\mathbf{g}=[g_1,\ldots, g_{n_t}]^T$, respectively, where the entries are assumed to be independent and identically distributed (i.i.d.) complex Gaussian random variables with mean $0$
variances $\sigma_h^2$ and $\sigma_g^2$, respectively (i.e., ${\cal CN}(0,\sigma_h^2)$ and ${\cal CN}(0,\sigma_g^2)$).
We consider a block fading scenario where the channel vectors remain constant over a coherence interval of duration $T$, but vary independently from block to block. By adopting a training-based transmission scheme, each coherence interval is divided into a training phase with duration $T_t$ and a data transmission phase with duration $T_d$, as illustrated in Fig. \ref{fig:system}. In the training phase, pilot signals are emitted by the source (and/or the destination) to enable channel estimation at the destination; and, in the data transmission phase, confidential messages are transmitted utilizing the estimated channel obtained in the previous phase.
Following methods proposed in \cite{Chang_Chi2010_DCE,CWHuang_Hong2013_TwoWay} for training and in \cite{SGoel_Negi2008} for data transmission, AN is utilized in the respective phases to degrade the reception at the eavesdropper. Our goal is thus to determine the optimal resource allocation between signal and AN, and examine the role of AN in these two phases.



\subsection{Training Phase - AN-Assisted Training}\label{sec.syst.training}

In conventional point-to-point communication systems, training is typically performed by having the source emit pilot signals to enable channel estimation at the destination. Most works in the literature on physical layer secrecy, e.g., \cite{AKhisti2010_MISOME1,SGoel_Negi2008,PHLin_SCLin2013,TYLiu_Mag}, inherit such an assumption and, thus, assume that
the eavesdropper can also benefit from the pilot transmission and can obtain a channel estimate that is no worse than the destination.
Interestingly, it has been shown more recently in \cite{Chang_Chi2010_DCE,CWHuang_Hong2013_TwoWay} that secrecy can be further enhanced by embedding AN in the pilot signal to degrade the channel estimation performance at the eavesdropper. By doing so, the difference between the effective channel qualities experienced by the destination and the eavesdropper can be enhanced and, thus, a higher secrecy rate can be achieved. Here, we consider in particular the two-way discriminatory channel estimation (DCE) scheme proposed in \cite{CWHuang_Hong2013_TwoWay}.
In the DCE scheme, training is performed in two stages, i.e., the reverse and the forward training stages.
In the reverse training stage, a pure pilot signal is sent in the reverse direction by the destination to enable channel estimation at the source; in the forward training stage, a pilot signal masked by AN is emitted by the source to facilitate channel estimation at the destination while preventing reliable channel estimation at the eavesdropper. Here, the channel is assumed to be reciprocal, that is, the reverse channel can be represented as the transpose of the forward channel vector, i.e., $\bh^t$. Therefore, estimation of the reverse channel provides the source with information about the forward channel.
Note that DCE can also be used in non-reciprocal channels, as shown in \cite{CWHuang_Hong2013_TwoWay}, but is not considered here for simplicity.


Let $T_r$ and $T_f$ be the length of the reverse and the forward training stages, respectively, where $T_r+T_f=T_t$.
In the reverse training stage, the pilot signal $\sr \in \mathcal{C}^{T_{r} \times 1}$ with $\sr^{\dag}\sr=T_r$ is first emitted by the destination and the received signal at the source can be written as
\begin{align}\label{reverse training}
\yr=\sqrt{\pr}\,\sr \mathbf{h}^{t}+\vr
\end{align}
where
$\pr$ is the power of the pilot signal in the reverse training stage,
$\mathbf{h}^{t}$ is the channel vector from the destination to the source, and $\vr\in\mathcal{C}^{T_r\times n_t}$ is the additive white Gaussian noise (AWGN) matrix with entries that are i.i.d.
$\mathcal{CN}(0, \sigma^2)$.
Following the procedures given in \cite{CWHuang_Hong2013_TwoWay}, the source first computes the minimum mean square error (MMSE) estimate of the channel based on the knowledge of $\sr$. The channel estimate at the source is denoted by $\mathbf{\widetilde{h}}$ and the channel estimation error is $\dhr=\mathbf{h}-\mathbf{\widetilde{h}}$. The variance of each entry of $\dhr$ can be written as
\begin{align}\label{var error h reverse training}
\sdhr=\left(\frac{1}{\sh}+\frac{P_rT_r}{\sigma^2}\right)^{-1}.
\end{align}

Then, in the forward training stage, the source emits a training signal with AN placed in the null space of the estimated forward channel, i.e., $\mathbf{\widetilde{h}}$. The signal transmitted in the forward training stage is given by
\begin{equation}\label{eq_forward_training_signal}
\mathbf{X}_f=\sqrt{P_f} \mathbf{S}_f +\mathbf{A}_f\mathbf{N}_{\widetilde \bh},
\end{equation}
where $\mathbf{S}_f\in \mathcal{C}^{T_f \times n_t}$ is the pilot signal in the forward training stage with $\mathbf{S}^{\dag}_f\mathbf{S}_f=\frac{T_f}{n_t}\mathbf{I}$, $P_f$ is the power of the pilot signal in the forward training stage,
$\mathbf{N}_{\widetilde \bh}\in \mathcal{C}^{(n_t-1)
\times n_t}$ is a
matrix whose rows span the null space of $\mathbf{\widetilde h}$ and
satisfies $\mathbf{N}_{\widetilde \bh}\mathbf{N}^{\dag}_{ \widetilde
\bh}=\mathbf{I}_{n_t-1}$, and $\mathbf{A}_f \in \mathcal{C}^{T_f
\times (n_t-1)}$ is the AN with entries that are
i.i.d. $\mathcal{CN}(0,\frac{P_{f_a}}{n_t-1})$.
Hence, the total
AN power in the forward training stage is $P_{f_a}$.
The signals received at the destination and the eavesdropper can then be written respectively as
\begin{align} \label{eq forward training_rx_LR}
\mathbf{y}_f&=\mathbf{X}_f\mathbf{ h}+\mathbf{v}_f
=\sqrt{P_f}\mathbf{S}_f\bh+\mathbf{A}_f\mathbf{N}_{\widetilde \bh}\dhr+\mathbf{v}_f,\\
\mathbf{z}_f&= \mathbf{X}_f\mathbf{  g}+\mathbf{w}_f
=\sqrt{P_f}\mathbf{S}_f\bg+\mathbf{A}_f\mathbf{N}_{\widetilde \bh}\bg+\mathbf{w}_f, \label{eq forward training_rx_EVE}
\end{align}
where $\mathbf{v}_f$ and $\mathbf{w}_f$ are the AWGN with entries that are
i.i.d. $\mathcal{CN}(0, \sigma^2)$ at the destination and the eavesdropper, respectively.
The destination and the eavesdropper are then able to compute MMSE estimates $\mathbf{\hat h}$ and $\mathbf{\hat g}$ of their respective channels. The channel estimation error vectors are $\mathbf{\Delta h}\triangleq\mathbf{h}-
\mathbf{\hat h}$ and $\mathbf{\Delta g}\triangleq\mathbf{g}-
\mathbf{\hat g}$, whose entries are $0$ mean with
variances
\begin{align}\label{var error h forward training}
\sdh&=\left(\frac{1}{\sh}+\frac{P_fT_f/n_t}{P_{f_a}\sdhr+\sigma^2}\right)^{-1},
\end{align}
and
\begin{align}
\sdg&=\left(\frac{1}{\sg}+\frac{P_fT_f/n_t}{P_{f_a}\sg+\sigma^2}\right)^{-1},\label{var error g forward training}
\end{align}
respectively. The channel estimate $\hat\bh$ is fed back to the source for use in data transmission.

It is interesting to remark that, in the DCE scheme described above,
reverse training is first performed to provide the source with knowledge of the channel between itself and the destination (but does not help the eavesdropper obtain information about its channel from the source). This knowledge is then used by the source to determine the AN placement in the forward training stage so as to minimize its interference at the destination.  In conventional training, only the forward training stage is required since AN is not utilized. In this case, the training length is $T_t=T_f$ (since $T_r=0$)
and the forward training signal can be expressed simply as $\mathbf{X}_f=\sqrt{P_f}\mathbf{S}_f$. Even though the time required for conventional training is less than that of DCE (leaving more channel uses for data transmission in each coherence interval), the achievable secrecy rate may not necessarily be higher due to increased CSI leakage \cite{TYLiu_Mag} to the eavesdropper.


\subsection{Data Transmission Phase - AN-Assisted Secrecy Beamforming}\label{sec.syst.data}

Suppose that the source is able to obtain knowledge of the channel estimate $\hat\bh$ through feedback from the destination but has only statistical knowledge of the eavesdropper's channel $\bg$ (and also $\hat\bg$). Based on this channel knowledge, the source can then utilize in the data transmission phase an AN-assisted secrecy beamforming scheme \cite{SGoel_Negi2008} where the data-bearing signal is directed towards the destination while AN is placed in the null space of $\hat\bh$ to jam the reception at the eavesdropper. The transmit signal is thus given by
\begin{equation}\label{eq_data_signal}
\mathbf{X}_d=\sqrt{P_d}\,\mathbf{s}_d \mathbf{\frac{\hat h^{\dag}}{\|\hat h\|}}+\mathbf{A}_d\mathbf{N}_{ \hat \bh}
\end{equation}
where $\mathbf{s_d} \in \mathcal{C}^{T_d \times 1}$ is
the data-bearing signal vector whose entries are i.i.d. $\mathcal{CN}(0,1)$, $P_d$ is the power of the data signal,
$\mathbf{N}_{\hat \bh} \in \mathcal{C}^{(n_t-1) \times n_t}$ is the
matrix that spans the null space of $\mathbf{\hat h}$ and
satisfies $\mathbf{N}_{\hat \bh}\mathbf{N}^{\dag}_{\hat
\bh}=\mathbf{I}_{n_t-1}$, and $\mathbf{A}_d \in \mathcal{C}^{T_d
\times (n_t-1)}$ is the AN matrix whose entries are
i.i.d. $\mathcal{CN}(0,\frac{P_a}{n_t-1})$. Hence, the total AN power in the data transmission phase is $P_a$.

The signals received at the destination and the eavesdropper are given by
\begin{align} \label{eq data_rx_LR}
\mathbf{y}_d&=\mathbf{X}_d\mathbf{ \hat h}+ \mathbf{X}_d\mathbf{ \Delta h}+\mathbf{v}_d=\sqrt{P_d}\,\mathbf{s}_d \mathbf{\|\hat h\|}+\sqrt{P_d}\,\mathbf{s}_d \mathbf{\frac{\hat h^{\dag}}{\|\hat h\|}}\mathbf{\Delta h}+\mathbf{A}_d\mathbf{N}_{\hat \bh}\mathbf{\Delta h}+\mathbf{v}_d,\\
\mathbf{z}_d&= \mathbf{X}_d\mathbf{ \hat g}+ \mathbf{X}_d \mathbf{\Delta g}+\mathbf{w}_d=\sqrt{P_d}\,\mathbf{s}_d \mathbf{\frac{\hat h^{\dag}}{\|\hat h\|}}\mathbf{\hat g}+\sqrt{P_d}\,\mathbf{s}_d \mathbf{\frac{\hat h^{\dag}}{\|\hat h\|}}\mathbf{\Delta g}+\mathbf{A}_d\mathbf{N}_{\hat \bh}\bg+\mathbf{w}_d,\label{eq_data_rx_EVE}
\end{align}
where
$\mathbf{v}_d\sim\mathcal{CN}({\bf 0}, \sigma^2\bI)$ and $\mathbf{w}_d\sim\mathcal{CN}({\bf 0}, \sigma^2\bI)$ are the AWGN vectors.
The signal and AN powers in both training and data transmission should satisfy the total power constraint
\begin{equation}\label{eq.power_constraint}
(P_rT_r+P_fT_f+P_{f_a}T_f+P_dT_d+P_aT_d)/T\leq P.
\end{equation}

%

\section{Bounds on the Achievable Secrecy Rate with Channel Estimation Error}\label{sec.secrecy rate}

In this work, we are interested in studying the impact of AN in both training and data transmission phases on the achievable secrecy rate of the scheme described in the previous section. In particular, to communicate the confidential message from the source to the destination, we consider a $(2^{nTR},nT)$ wiretap code that spans over the data transmission phases of $n$ coherence intervals. The code consists of an encoder $\phi_n$ that maps the message $W \in \mathcal{W} \triangleq \{1,2,...,2^{nTR}\}$ to a length-$n$ block codeword
$\bs_d^n$ and a decoder
$\psi_n$ that maps the received signal $\by_d^n$ into the message $\hat W\in \mathcal{W}$ at the destination. A secrecy rate $R$ is said to be achievable if there exists a sequence of $(2^{nTR}, nT)$ codes such that
the average error probability at the destination goes to zero, i.e.,
$P_e^{(n)}\triangleq \frac{1}{2^{nTR}} \sum_{w \in \mathcal{W} }\Pr(\hat W \neq w |W=w)\rightarrow 0$,
and the so-called equivocation rate \cite{AKhisti2010_MISOME1,XHe_Yener2014}
converges to the average entropy of $W$, i.e.,
$R_e^{(n)}\triangleq \frac{1}{nT}H(W| \mathbf{z}_d^n,\hat {\mathbf{h}}^n , \hat{\mathbf{g}}^n) \rightarrow \frac{1}{nT}H(W)$,
as the codeword length $n\rightarrow \infty$. Here, $\mathbf{z}_d^n$ is the channel output at
the eavesdropper over $n$ coherence intervals, and $\hat{\mathbf{h}}^n$ and $\hat{\mathbf{g}}^n$ are the
estimated channel vectors
at the destination and eavesdropper, respectively, over the $n$ coherence intervals. The equivocation rate
provides a measure of the information obtained by the eavesdropper and is computed here by conditioning on knowledge of both channel estimates $\hat{\mathbf{h}}^n$ and $\hat{\mathbf{g}}^n$ at the eavesdropper (i.e., a worst case assumption).

Following the results in \cite{ICsiszar1978}, an achievable secrecy rate of the proposed scheme
with imperfect CSI can be written as
\begin{align} \label{eq rate_1}
R =\frac{1}{T}  I(\sd;\yd,\hat \bh)-I(\sd;\zd,\hat \bh,\hat \bg)  =\frac{1}{T}I(\sd;\yd| \hat \bh) -I(\sd;\zd| \hat \bh, \hat \bg),
\end{align}
where the equality follows from the fact that $\mathbf{s}_d$ is independent of $\hat \bh$ and $\hat \bg$. Due to the presence of channel estimation errors, it is difficult to express the achievable secrecy rate in a more explicit form. However, we obtain, in the following theorem, upper and lower bounds that will later be shown to be asymptotically tight at high SNR in the cases under consideration.


\begin{Theo} \label{lem upper lower}
Suppose that channel estimation errors $\Delta\bh$ and $\Delta\bg$ are Gaussian with i.i.d. entries. Then,  for $n_t$ sufficiently large, the achievable secrecy rate $R$ of the AN-assisted secrecy beamforming scheme in Section \ref{sec.syst.data} can be bounded as
\begin{align} \label{eq upper and lower r}
  \tilde{R}- \dRlGen  \leq  R  \leq \tilde{R} + \dRuGen
\end{align}
where 
\begin{align} \label{R lemma 1}
\notag \tilde{R} &\triangleq\frac{T_d}{T}\mathbb{E}\left[ \log \left(1+\frac{\pd(\sh-\sdh)\effh}{\pd \sdh+P_a\sdh+{\sigma^2} }\right)\right]\\
 &~~~~~~~~~~-\frac{T_d}{T}\mathbb{E}\left[ \log \left(1+\frac{\pd(\sg-\sdg)\effg}{\pd \sdg+ P_a(\sg-\sdg)\frac{\ngg}{n_t-1}+P_a\sdg+{\sigma^2} }\right)\right],
\end{align}
\begin{align}
\dRuGen&\triangleq\frac{1}{T}\log \left(\frac{\left(\pd\sdh+P_a\sdh+{\sigma^2}\right)^{T_d}}
 {\left(P_a\sdh+{ \sigma^2}\right)^{T_d-1}}\right)-\frac{1}{T}\mathbb{E}\left[\log\left(\pd \|\sd\|^2\sdh+P_a\sdh+{\sigma^2} \right)\right],\label{Ru lemma 1}
\end{align}
and
\begin{align}
\notag \dRlGen&\triangleq\frac{1}{T}\mathbb{E}\left[\log \left(\frac{\left(\pd\sdg+P_a(\sg-\sdg)\frac{\ngg}{n_t-1}+P_a\sdg+{\sigma^2}\right)^{T_d}}
 {\left(P_a(\sg-\sdg)\frac{\ngg}{n_t-1}+P_a\sdg+{\sigma^2}\right)^{T_d-1}}\right)\right]\\
 &~~~~~~~~~~-\frac{1}{T}\mathbb{E}\left[\log\left(\pd \|\sd\|^2\sdg+P_a(\sg-\sdg)\frac{\ngg}{n_t-1}+P_a\sdg+{\sigma^2}\right)\right].\label{Rl lemma 1}
\end{align}
In the above, $\overline{\mathbf{h}}\triangleq \hat{ \mathbf{h}}/\sqrt{\sh-\sdh}$ and $\overline{\mathbf{g}}\triangleq \hat{ \mathbf{g}}/\sqrt{\sg-\sdg}$ are the normalized channel estimates whose entries are all i.i.d. $\mathcal{CN}(0,1)$. Notice that $\overline{\mathbf{h}}$ and $\overline{\mathbf{g}}$ are normalized so that they are independent of the power allocation, i.e., $P_r, P_f, P_{f_a},P_d$, and $P_a$.
\end{Theo}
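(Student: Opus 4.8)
The plan is to write $R=R_d-R_e$ with $R_d\triangleq\frac1T I(\sd;\yd\,|\,\hat{\bh})$ and $R_e\triangleq\frac1T I(\sd;\zd\,|\,\hat{\bh},\hat{\bg})$, and to prove the two-sided estimate by bounding each piece separately. Concretely, I would establish the four one-sided inequalities $\tilde{R}_d\le R_d\le\tilde{R}_d+\dRuGen$ and $\tilde{R}_e\le R_e\le\tilde{R}_e+\dRlGen$, where $\tilde{R}_d$ and $\tilde{R}_e$ denote the first (legitimate) and second (eavesdropper) mutual-information terms of $\tilde{R}$. Adding the lower bound for $R_d$ to the upper bound for $R_e$ gives $R\ge\tilde{R}-\dRlGen$, and adding the upper bound for $R_d$ to the lower bound for $R_e$ gives $R\le\tilde{R}+\dRuGen$. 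Throughout I would use the standing hypothesis that $\Delta\bh,\Delta\bg$ are i.i.d. Gaussian, the MMSE orthogonality $\Delta\bh\perp\hat{\bh}$ and $\Delta\bg\perp\hat{\bg}$, and the null-space identity $\bN_{\hat{\bh}}\bN_{\hat{\bh}}^{\dag}=\bI$.

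The two lower bounds are the easy direction and follow from the worst-case uncorrelated-noise principle. Writing $\yd=\sqrt{\pd}\,\|\hat{\bh}\|\,\sd+\mathbf{n}$ with effective noise $\mathbf{n}=\sqrt{\pd}\,\tfrac{\hat{\bh}^{\dag}}{\|\hat{\bh}\|}\Delta\bh\,\sd+\bA_d\bN_{\hat{\bh}}\Delta\bh+\bv_d$, one checks that $\mathbf{n}$ has zero conditional mean and is uncorrelated with $\sd$, so replacing it by a Gaussian of identical covariance can only lower $I(\sd;\yd|\hat{\bh})$. Computing $\ee[\mathbf{n}\mathbf{n}^{\dag}|\hat{\bh}]$ from $\ee[\Delta\bh\Delta\bh^{\dag}]=\sdh\bI$, $\ee[\sd\sd^{\dag}]=\bI$ and $\ee[\bA_d\mathbf{M}\bA_d^{\dag}]=\tfrac{P_a}{n_t-1}\tr(\mathbf{M})\bI$ yields the scaled identity $(\pd\sdh+P_a\sdh+\sigma^2)\bI$; together with $\|\hat{\bh}\|^2=(\sh-\sdh)\effh$ this reproduces exactly $\tilde{R}_d$. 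The eavesdropper bound is identical in spirit, the one new ingredient being that the AN reaches the eavesdropper through the \emph{full} channel $\bg=\hat{\bg}+\Delta\bg$; evaluating $\ee[\|\bN_{\hat{\bh}}\bg\|^2\,|\,\hat{\bh},\hat{\bg}]=(\sg-\sdg)\ngg+(n_t-1)\sdg$ produces the $P_a(\sg-\sdg)\tfrac{\ngg}{n_t-1}+P_a\sdg$ terms in the denominator of $\tilde{R}_e$.

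For the two upper bounds I would use the chain rule $I(\sd;\yd|\hat{\bh})=h(\yd|\hat{\bh})-h(\yd|\sd,\hat{\bh})$, bound the first term by the Gaussian maximum-entropy inequality with its (scaled-identity) covariance, and \emph{lower} bound the second term by revealing a genie. The natural genie is the interference direction $\mathbf{G}=\bN_{\hat{\bh}}\Delta\bh$, which is independent of $\sd$, so $I(\sd;\yd|\hat{\bh})\le I(\sd;\yd|\mathbf{G},\hat{\bh})$. Given $\mathbf{G}$ the AN term $\bA_d\mathbf{G}$ is conditionally Gaussian while the signal-times-error term $\tfrac{\hat{\bh}^{\dag}}{\|\hat{\bh}\|}\Delta\bh\,\sd$ remains Gaussian-in-$\sd$, so $\yd$ is \emph{exactly} Gaussian given $(\sd,\mathbf{G},\hat{\bh})$ with covariance $\pd\sdh\,\sd\sd^{\dag}+\xi\bI$, $\xi=\tfrac{P_a}{n_t-1}\|\mathbf{G}\|^2+\sigma^2$. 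Its log-determinant $(T_d-1)\log\xi+\log(\pd\|\sd\|^2\sdh+\xi)$ is precisely the rank-one structure appearing in $\dRuGen$, and after taking $\ee[\cdot]$ over $\mathbf{G}$ this delivers an upper bound on $R_d$ that coincides with $\tilde{R}_d+\dRuGen$ except that the deterministic leakage power $P_a\sdh+\sigma^2$ is replaced by the random $\xi$. The eavesdropper upper bound is obtained by the same genie $\bN_{\hat{\bh}}\bg$, giving $\dRlGen$ with $\sdh\mapsto\sdg$ and $P_a\sdh\mapsto P_a(\sg-\sdg)\tfrac{\ngg}{n_t-1}+P_a\sdg$.

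The crux — and where ``$n_t$ sufficiently large'' is indispensable — is reconciling this random leakage power with its mean. Because $\log$ is concave, averaging over $\mathbf{G}$ makes the genie bound \emph{exceed} the claimed surrogate $\tilde{R}_d+\dRuGen$ by a Jensen gap governed by $\mathrm{Var}(\xi)=\tfrac{P_a^2\sdhh}{n_t-1}$, i.e. an $O(1/(n_t-1))$ overshoot that \emph{vanishes} as the AN leakage hardens, $\tfrac{1}{n_t-1}\|\bN_{\hat{\bh}}\Delta\bh\|^2\to\sdh$. This vanishing overshoot must be absorbed by the genuine slack between $I(\sd;\yd|\hat{\bh})$ and the genie bound, namely the non-Gaussianity of the unconditioned output $h(\yd|\hat{\bh})$ induced by the signal-times-error product, which does \emph{not} vanish with $n_t$. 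Thus the hard step I would invest the most in is a quantitative comparison showing that, for $n_t$ large enough, this persistent $O(1)$ non-Gaussianity deficit dominates the $O(1/(n_t-1))$ Jensen overshoot, so that the exact inequalities $R_d\le\tilde{R}_d+\dRuGen$ and $R_e\le\tilde{R}_e+\dRlGen$ hold; all remaining steps are the routine covariance and determinant evaluations already indicated, expressed through the power-independent normalized estimates $\hov,\gov$.
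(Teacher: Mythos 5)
Your treatment of the two lower bounds matches the paper: the paper writes $I(\sd;\yd|\hat\bh)=h(\sd|\hat\bh)-h(\sd|\yd,\hat\bh)$ and bounds the second term via the LMMSE estimator plus Jensen, which is exactly the worst-case-uncorrelated-noise computation you describe, and the covariance evaluations you list (yielding $\pd\sdh+P_a\sdh+\sigma^2$ at the destination and the $P_a(\sg-\sdg)\frac{\ngg}{n_t-1}+P_a\sdg$ terms at the eavesdropper) are the ones that appear there. For the upper bounds, however, you diverge from the paper, and your route has a genuine gap. The paper handles the non-Gaussian term $\bA_d\bN_{\hat\bh}\Delta\bh$ in $h(\yd|\sd,\hat\bh)$ by a central limit theorem (its Lemma~\ref{Lemma_DCE_Gaussian}): as $n_t\to\infty$ this compound term converges in distribution to a Gaussian vector with i.i.d.\ ${\cal CN}(0,P_a\sdh)$ entries, so the conditional covariance is the \emph{deterministic} $\pd\sdh\sd\sd^\dag+(P_a\sdh+\sigma^2)\bI$ and $\dRuGen$ drops out directly, with no Jensen gap to repair. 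You instead reveal the genie $\mathbf{G}=\bN_{\hat\bh}\Delta\bh$, which makes the conditional law exactly Gaussian but with the \emph{random} leakage power $\xi=\frac{P_a}{n_t-1}\|\mathbf{G}\|^2+\sigma^2$, and since $\ee[\log\xi]\le\log\ee[\xi]$ the resulting bound is strictly \emph{weaker} than the claimed $\tilde R+\dRuGen$.

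The step you defer — showing that this Jensen overshoot is absorbed by a ``persistent $O(1)$ non-Gaussianity deficit'' of $h(\yd|\hat\bh)$ — is precisely the step that is missing, and it is not merely unproven but doubtful as stated. The entropy deficit of the unconditioned output relative to its Gaussian max-entropy bound is not uniformly bounded away from zero over the power allocations the theorem must cover: when $\pd\sdh/\sigma^2$ and $P_a\sdh/\sigma^2$ are small (e.g., heavy training power), the effective noise $\sqrt{\pd}\,\sd\,\hat\bh^\dag\Delta\bh/\|\hat\bh\|+\bA_d\mathbf{G}+\bv_d$ is dominated by the AWGN and the output is essentially Gaussian, so the deficit vanishes; whether it vanishes faster or slower than the overshoot $\approx\frac{(T_d-1)}{2T(n_t-1)}\cdot\frac{P_a^2\sdhh}{(P_a\sdh+\sigma^2)^2}$ depends on the regime and would require exactly the kind of uniform quantitative comparison you acknowledge but do not supply. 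Since the theorem is an inequality claimed for all (fixed, sufficiently large) $n_t$ and all power allocations, this comparison is the crux of your upper bound and cannot be waved through. The fix is to abandon the genie and argue, as the paper does, that $\bA_d\bN_{\hat\bh}\Delta\bh$ itself is asymptotically Gaussian with covariance $P_a\sdh\bI$ (uncorrelated entries plus a CLT across the $n_t-1$ null-space dimensions), so that $h(\yd|\sd,\hat\bh)$ is evaluated with the deterministic covariance from the outset.
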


Details of the proof can be found in Appendix \ref{sec.proof of bounds}. This theorem shows that the achievable secrecy rate can be bounded around $\tilde R$ given in \eqref{R lemma 1} when the aggregate of the channel estimation error and the AN interference terms are effectively Gaussian. These bounds are analogous to those derived in \cite{BHassibi_Hochwald2003_Training} and \cite{TYoo_Goldsmith} for conventional point-to-point channels.
However, the proof of our theorem requires large $n_t$ analysis to cope with the non-Gaussianity of the additional noise term caused by the combination of estimation error and AN. The bounds in Theorem \ref{lem upper lower} are applicable regardless of the training scheme as long as $\Delta\bh$ and $\Delta\bg$ are Gaussian.
In the following corollary, we show that the bounds are in fact applicable for
both the conventional and the AN-assisted training schemes considered in our work.

\begin{Cor}\label{cor upper and lower bound}
The bounds in Theorem \ref{lem upper lower} hold when either conventional or AN-assisted training (i.e., DCE) schemes with linear MMSE estimation is adopted in the training phase.
\end{Cor}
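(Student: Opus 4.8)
The plan is to notice that Theorem~\ref{lem upper lower} has already done the analytical work: its sole structural hypothesis is that $\Delta\bh$ and $\Delta\bg$ are Gaussian with i.i.d.\ entries. Proving the corollary therefore reduces to verifying this distributional property of the linear-MMSE error for each training scheme and checking that the resulting per-entry variances are exactly the $\sdh,\sdg$ appearing in \eqref{var error h forward training}--\eqref{var error g forward training}. The conventional case is immediate: with $\mathbf{X}_f=\sqrt{P_f}\mathbf{S}_f$ there is no AN, so \eqref{eq forward training_rx_LR}--\eqref{eq forward training_rx_EVE} collapse to the plain linear Gaussian models $\mathbf{y}_f=\sqrt{P_f}\mathbf{S}_f\bh+\mathbf{v}_f$ and $\mathbf{z}_f=\sqrt{P_f}\mathbf{S}_f\bg+\mathbf{w}_f$. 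Jointly Gaussian signal and noise make the LMMSE estimate coincide with the conditional mean, so $\Delta\bh,\Delta\bg$ are Gaussian; and the orthogonality $\mathbf{S}_f^{\dag}\mathbf{S}_f=\frac{T_f}{n_t}\mathbf{I}$ together with white $\mathbf{v}_f,\mathbf{w}_f$ forces the error covariance to be a scaled identity, giving i.i.d.\ entries.

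The DCE case is the substantive one, because the leakage terms $\mathbf{A}_f\mathbf{N}_{\widetilde \bh}\dhr$ and $\mathbf{A}_f\mathbf{N}_{\widetilde \bh}\bg$ in \eqref{eq forward training_rx_LR}--\eqref{eq forward training_rx_EVE} are \emph{bilinear} products of two Gaussian factors and hence not Gaussian, so the exact argument above fails. I would condition on the reverse-stage estimate $\ehr$, which fixes the projector $\mathbf{N}_{\widetilde \bh}$ and, by the orthogonality principle applied to the purely Gaussian reverse model \eqref{reverse training}, leaves $\dhr\sim\mathcal{CN}(0,\sdhr\mathbf{I})$ independent of $\ehr$. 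Writing the destination leakage as $\mathbf{A}_f\mathbf{u}$ with $\mathbf{u}\triangleq\mathbf{N}_{\widetilde \bh}\dhr$ and the eavesdropper leakage as $\mathbf{A}_f\mathbf{q}$ with $\mathbf{q}\triangleq\mathbf{N}_{\widetilde \bh}\bg$, each is, conditioned on $\mathbf{u}$ (resp.\ $\mathbf{q}$), exactly $\mathcal{CN}(0,\frac{P_{f_a}}{n_t-1}\|\mathbf{u}\|^2\mathbf{I})$ (resp.\ with $\|\mathbf{q}\|^2$). Because $\mathbf{N}_{\widetilde \bh}$ has orthonormal rows spanning an $(n_t-1)$-dimensional subspace, $\mathbf{u}$ and $\mathbf{q}$ have i.i.d.\ entries and $\frac{1}{n_t-1}\|\mathbf{u}\|^2\to\sdhr$, $\frac{1}{n_t-1}\|\mathbf{q}\|^2\to\sg$ almost surely. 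This is precisely where the ``$n_t$ sufficiently large'' assumption of Theorem~\ref{lem upper lower} enters: the random leakage covariance concentrates to the deterministic $P_{f_a}\sdhr\mathbf{I}$ and $P_{f_a}\sg\mathbf{I}$, so in the limit \eqref{eq forward training_rx_LR}--\eqref{eq forward training_rx_EVE} become ordinary Gaussian linear models with white effective noise of variance $P_{f_a}\sdhr+\sigma^2$ and $P_{f_a}\sg+\sigma^2$.

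With the leakage Gaussianized, the conventional-training argument reapplies verbatim to the limiting models and yields $\Delta\bh,\Delta\bg$ Gaussian with i.i.d.\ entries whose variances match \eqref{var error h forward training} and \eqref{var error g forward training}, so the hypothesis of Theorem~\ref{lem upper lower} is satisfied and its bounds hold for DCE as well. I expect the main obstacle to be making this Gaussianization rigorous at the eavesdropper, where the AN leaks through $\bg$ itself---the very quantity being estimated---so that the effective noise is \emph{signal-dependent} and correlated with the estimand. The argument must show that in the large-$n_t$ limit the normalized norm $\|\mathbf{N}_{\widetilde \bh}\bg\|^2/(n_t-1)$ concentrates to the deterministic $\sg$ strongly enough that the residual dependence between the leakage and $\bg$ is asymptotically negligible, leaving a genuine fixed-variance Gaussian channel for which the LMMSE error is Gaussian and i.i.d.
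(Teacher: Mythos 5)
Your proposal is correct and follows the paper's overall strategy: reduce the corollary to verifying the Gaussianity hypothesis of Theorem \ref{lem upper lower}, dispatch the conventional case by noting that without AN the forward training model is purely linear--Gaussian (so the LMMSE error is Gaussian with i.i.d.\ entries by the orthogonal pilot design), and handle DCE by an asymptotic large-$n_t$ Gaussianization of the AN leakage. Where you genuinely diverge is in the mechanism for that Gaussianization. The paper invokes Lemma \ref{Lemma_DCE_Gaussian}, which applies a central limit theorem entry-by-entry to $\bA_f\bN_{\ehr}\dhr$ (each entry is a normalized sum of $n_t-1$ i.i.d.\ products of independent Gaussians) and then passes from marginally Gaussian, uncorrelated entries to a jointly Gaussian limit. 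You instead condition on $\bN_{\ehr}\dhr$, observe that the leakage is then \emph{exactly} Gaussian with covariance $\frac{P_{f_a}}{n_t-1}\|\bN_{\ehr}\dhr\|^2\bI$, and let the law of large numbers collapse the random variance to $P_{f_a}\sdhr$. The two routes reach the same limit, but the scale-mixture view buys two things: joint (conditional) Gaussianity comes for free, whereas the paper's uncorrelatedness step really wants a multivariate CLT to be airtight; and it makes explicit the issue you correctly flag at the eavesdropper, namely that the leakage $\bA_f\bN_{\ehr}\bg$ is a function of the very vector being estimated, and that the argument is rescued precisely because $\|\bN_{\ehr}\bg\|^2/(n_t-1)$ concentrates to $\sg$, rendering the leakage asymptotically independent of $\bg$. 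The paper's one-paragraph proof only discusses the destination-side term $\bA_f\bN_{\ehr}\dhr$ and is silent on the eavesdropper's side, so on that point your write-up is more complete than the original.
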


The corollary can be shown as follows. In the conventional training scheme, no AN interference exists in the received forward training signals in \eqref{eq forward training_rx_LR} and \eqref{eq forward training_rx_EVE} and, thus, the estimation error $\Delta\bh$ (and also $\Delta\bg$) is indeed Gaussian and independent of $\hat\bh$ when employing the linear MMSE estimation (which is also the optimal MMSE estimation in this case) \cite{EstimationTheory}. However, this is not the case in AN-assisted training since the AN interference $\bA_f\bN_{\tilde\bh}\Delta\bh_r$ in \eqref{eq forward training_rx_LR} is non-Gaussian. Yet, by applying Lemma \ref{Lemma_DCE_Gaussian} in Appendix \ref{sec.proof of bounds}, we can also show that $\bA_f\bN_{\tilde\bh}\Delta\bh_r$ is asymptotically Gaussian as $n_t\rightarrow\infty$ since $\Delta\bh_r$ is again Gaussian as a result of the MMSE estimation at the source.
These bounds are utilized in Sections \ref{sec.no AN training} and \ref{Sec.DCE} to derive the optimal power allocation between pilot, data, and AN usage in cases with conventional and AN-assisted training, respectively.

\section{AN-Assisted Secrecy Beamforming with Conventional Training in the High SNR Regime}\label{sec.no AN training}

In this section, we first consider the case where AN is only applied in the data transmission phase, but not in the training phase. We first derive an approximate secrecy rate expression based on the bounds given in the previous section, and use it to determine the optimal power allocation between the  pilot signal in the training phase and the data and AN in the data transmission phase.

\subsection{Asymptotic Approximation of the Achievable Secrecy Rate}

In conventional training (i.e., in the case where AN is not utilized in the training phase), no reverse training is needed and the forward training signal can be written as $\mathbf{X}_f=\sqrt{P_f} \mathbf{S}_f$.
We assume that the training length is equal to the number of transmit antennas, i.e., $T_t=T_f=n_t$, which was shown to be optimal for conventional point-to-point systems without secrecy constraints \cite{BHassibi_Hochwald2003_Training}. Without AN, the signals received at the destination and the eavesdropper in the training phase can be written as
\begin{align} \label{eq forward training_rx_LR no DCE}
\mathbf{y}_f&=\sqrt{P_f}\mathbf{S}_f\mathbf{h}+\mathbf{v}_f,\\
\mathbf{z}_f&=\sqrt{P_f}\mathbf{S}_f\bg+\mathbf{w}_f,\label{eq forward training_rx_EVE no DCE}
\end{align}
By employing MMSE estimation at the destination, the channel estimation error variances in \eqref{var error h forward training} and \eqref{var error g forward training} reduce to
\begin{align}\label{sdh without AN}
\sdh = \frac{\sh{\sigma^2}}{P_f\sh+{\sigma^2}}
\end{align}
and
\begin{align}
\sdg =\frac{\sg{\sigma^2}}{P_f\sg+{\sigma^2}}, \label{sdg without AN}
\end{align}
respectively.
The signal model in the data transmission phase remains the same as in \eqref{eq_data_signal}, \eqref{eq data_rx_LR}, and \eqref{eq_data_rx_EVE}. Let us denote the achievable secrecy rate in this case (i.e., in the case with conventional
training) by $\RANdata$. Then, by
Theorem \ref{lem upper lower}
and Corollary \ref{cor upper and lower bound}, we know that
\begin{equation}
\tRANdata-\dRlANdata \leq \RANdata\leq \tRANdata+\dRuANdata,
\end{equation}
where $\tRANdata$, $\dRlANdata$, and $\dRuANdata$ are given by \eqref{R lemma 1}, \eqref{Ru lemma 1}, and \eqref{Rl lemma 1} with $\sdh$ and $\sdg$ substituted by \eqref{sdh without AN} and \eqref{sdg without AN}.

Let $\mathcal{P}^*(P)\triangleq (P^*_f(P), P^*_d(P), P^*_a(P))$ be the optimal power allocation (i.e., the power allocation that maximizes the achievable secrecy rate $R_{\rm conv}$) under power constraint $P$.
To derive the optimal power allocation, it is often necessary to obtain an explicit expression of the achievable secrecy rate, which is difficult to do in our case as remarked in the previous section. However, we show in the following that the achievable secrecy rate under $\calP^*(P)$, i.e., $\RANdata(\calP^*(P))$, can be closely approximated by $\tRANdata(\calP^*(P))$, for $P$ sufficiently large.
The dependence on $P$ is often neglected in the following for notational simplicity. To express the result, note that two functions $f$ and $g$ are asymptotically equivalent (denoted by $f\doteq g$) if $\lim_{x\rightarrow \infty} f(x)/g(x)=1$.


\begin{Theo}\label{lem small_term} The maximum achievable secrecy rate $\RANdata(\calP^*)$ under conventional training is asymptotically equivalent to $\tRANdata(\calP^*)$ (i.e.,
$\RANdata(\calP^*) \doteq \tRANdata(\calP^*)$)
as $P\rightarrow\infty$.
\end{Theo}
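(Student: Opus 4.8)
The plan is to establish the asymptotic equivalence $\RANdata(\calP^*)\doteq\tRANdata(\calP^*)$ by showing that the gap terms $\dRlANdata(\calP^*)$ and $\dRuANdata(\calP^*)$ both become negligible relative to $\tRANdata(\calP^*)$ as $P\rightarrow\infty$. Since Theorem \ref{lem upper lower} and Corollary \ref{cor upper and lower bound} already give
\[
\tRANdata(\calP^*)-\dRlANdata(\calP^*)\leq \RANdata(\calP^*)\leq \tRANdata(\calP^*)+\dRuANdata(\calP^*),
\]
it suffices to prove that
\[
\lim_{P\rightarrow\infty}\frac{\dRuANdata(\calP^*)}{\tRANdata(\calP^*)}=0
\quad\text{and}\quad
\lim_{P\rightarrow\infty}\frac{\dRlANdata(\calP^*)}{\tRANdata(\calP^*)}=0.
\]
Dividing the bracketing inequality through by $\tRANdata(\calP^*)$ and invoking the squeeze theorem then forces $\RANdata(\calP^*)/\tRANdata(\calP^*)\rightarrow 1$, which is exactly the claimed relation $\doteq$.

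To carry this out, I would first characterize the high-SNR growth of the denominator $\tRANdata(\calP^*)$. The key observation is that, under the optimal allocation, all three power components $P_f^*,P_d^*,P_a^*$ should scale linearly in $P$ (a constant fraction of the budget to each), so that $\sdh$ and $\sdg$ from \eqref{sdh without AN}--\eqref{sdg without AN} behave like $\Theta(1/P)$. Substituting into \eqref{R lemma 1}, the effective signal-to-interference-plus-noise terms inside both logarithms converge to nonzero constants determined by the power fractions, so each expectation tends to a finite positive limit; hence $\tRANdata(\calP^*)=\Theta(1)$ as $P\rightarrow\infty$, bounded away from zero (this is where the necessity of AN, asserted in the introduction, enters: without AN the secrecy rate would collapse). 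I would make this rigorous by exhibiting a specific feasible allocation achieving a strictly positive limiting rate, which lower-bounds $\tRANdata(\calP^*)$ by a positive constant for large $P$.

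Next I would show the numerators vanish. Examine $\dRuANdata$ in \eqref{Ru lemma 1}: with $\sdh=\Theta(1/P)$ and $P_a,P_d=\Theta(P)$, the quantities $P_d\sdh$, $P_a\sdh$, and $P_d\|\sd\|^2\sdh$ all tend to constants while $\sigma^2$ is fixed, so the ratio inside the first logarithm and the argument of the second both converge to finite constants; consequently $\dRuANdata=\frac{1}{T}\bigl(O(1)\bigr)$. Crucially, the prefactor is $\frac{1}{T}$ rather than $\frac{T_d}{T}$, and more importantly the terms inside are $P$-independent constants in the limit, so $\dRuANdata(\calP^*)\rightarrow c_u$ for some finite $c_u$. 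The same argument applied to \eqref{Rl lemma 1}, using $\sdg=\Theta(1/P)$ and the concentration $\frac{\ngg}{n_t-1}\rightarrow\sg-\sdg$ (or its appropriate limit) for large $n_t$, gives $\dRlANdata(\calP^*)\rightarrow c_l$ finite. Since both gaps are bounded while $\tRANdata(\calP^*)$ is bounded below by a positive constant, the ratios tend to zero, completing the squeeze.

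The main obstacle I anticipate is verifying that the optimal allocation $\calP^*$ genuinely induces the linear scaling $P_f^*,P_d^*,P_a^*=\Theta(P)$, rather than, say, allocating asymptotically all power to one component or letting a fraction vanish; the bounds above are only useful if $\sdh,\sdg$ truly decay like $1/P$ under $\calP^*$. One cannot simply assume this, because $\calP^*$ is defined as the maximizer of $\RANdata$, not of $\tRANdata$. The clean way around this is to argue that \emph{any} allocation making $\tRANdata$ bounded away from zero must keep each power fraction bounded below, and then show that the optimal $\calP^*$ cannot do worse than such an allocation (since it maximizes a quantity sandwiched between $\tRANdata\pm\Delta R$, and the reference allocation already yields $\Theta(1)$ rate with vanishing-ratio gaps). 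In other words, I would first fix a benchmark equal-fraction allocation, establish $\RANdata\doteq\tRANdata$ along it, and then use optimality of $\calP^*$ together with the uniform boundedness of the gap terms to transfer the conclusion to $\calP^*$. Handling this transfer carefully—ensuring the squeeze argument is valid at the specific (and a priori unknown) optimizer—is the technically delicate part of the proof.
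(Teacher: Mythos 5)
Your overall skeleton (sandwich the rate between $\tRANdata\pm$ gap terms, show the gaps are negligible relative to the main term, squeeze) is the same as the paper's, but the quantitative heart of your argument is wrong, and the error breaks the logic. You claim that under $\calP^*$ both SINR terms inside \eqref{R lemma 1} converge to nonzero constants so that $\tRANdata(\calP^*)=\Theta(1)$. That is false for the destination's term: with $P_d^*,P_a^*,P_f^*=\Theta(P)$ and $\sdh=\Theta(1/P)$, the denominator $P_d^*\sdh+P_a^*\sdh+\sigma^2$ stays bounded (AN is nulled at the destination and the estimation-error noise saturates) while the numerator $P_d^*(\sh-\sdh)\effh$ grows linearly in $P$, so the destination's log term is $\log P+O(1)$. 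Only the eavesdropper's SINR saturates, because its denominator contains the $\Theta(P)$ AN term. Hence $\tRANdata(\calP^*)=\frac{T_d}{T}\log P+O(1)$ diverges (this is \eqref{RABF approx used in optimization prob}). This matters because your concluding inference --- ``both gaps are bounded while $\tRANdata(\calP^*)$ is bounded below by a positive constant, so the ratios tend to zero'' --- is a non sequitur: if $\dRuANdata(\calP^*)\rightarrow c_u>0$ and $\tRANdata(\calP^*)\rightarrow L>0$ were both finite, the ratio would tend to $c_u/L\neq 0$ and asymptotic equivalence would fail. The theorem is true precisely because the secrecy rate diverges like $\frac{T_d}{T}\log P$ while the gaps stay $O(1)$; you have the mechanism backwards.

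On the ``transfer'' issue you flag at the end, your instinct (fix a benchmark linear allocation, then exploit optimality of $\calP^*$) is exactly the paper's device, but your version still requires controlling $\dRlANdata(\calP^*)$ at the unknown optimizer, which is circular since it presupposes the scaling of $\calP^*$ (that scaling is Corollary \ref{cor linear ratio CONV}, itself proved \emph{using} this machinery). The paper avoids this by sandwiching $\RANdata(\calP^*)$ between $\tRANdata(\calP_l)-\dRlANdata(\calP_l)=\frac{T_d}{T}\log P+c_1+o(1)$ (benchmark on the left) and $\tRANdata(\calP^*)+\dRuANdata(\calP^*)$ (optimizer on the right); comparing the two growth rates forces $P_d^*\sdh=O(1)$, which in turn makes $\dRuANdata(\calP^*)=O(1)$, and the lower gap at $\calP^*$ never needs to be touched. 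A minor further slip: $\ngg/(n_t-1)$ concentrates at $1$ (the entries of $\overline{\bg}$ are normalized to unit variance), not at $\sg-\sdg$.
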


Moreover, we can show that, to achieve the maximum achievable secrecy rate,
the powers assigned to all components, including the pilot in the training phase and the signal and AN in the data transmission phase, should scale at least linearly with $P$ (i.e., should not vanish with respect to $P$ as $P\rightarrow\infty$).
The result can be stated as follows.
\begin{Cor}\label{cor linear ratio CONV}
$P^*_f(P)=\Omega(P)$, $P^*_d(P)=\Omega(P)$, and $P^*_a(P)=\Omega(P)$, where $f(x)=\Omega(g(x))$ denotes the fact that there exists $k_1>0$ such that $k_1 g(x)\leq f(x)$ for all $x$ sufficiently large \cite{IntroToAlgorithms}.
\end{Cor}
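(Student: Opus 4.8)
The plan is to argue by contradiction, pitting $\calP^*$ against a simple balanced reference allocation that it must outperform. By Theorem \ref{lem small_term} the optimum of $\RANdata$ is captured asymptotically by the tractable surrogate $\tRANdata$, so I phrase everything in terms of $\tRANdata$. First I would fix the benchmark: take $(P_f,P_d,P_a)=(cP,cP,cP)$ with $c=T/(n_t+2T_d)$, so that \eqref{eq.power_constraint} holds with equality (recall $T=n_t+T_d$ here). For this choice $\sdh,\sdg=\Theta(1/P)$, so the destination SINR in \eqref{R lemma 1} is $\Theta(P)$ while the eavesdropper SINR stays $\Theta(1)$ (its numerator $\pd(\sg-\sdg)\effg$ and its dominant denominator term $P_a(\sg-\sdg)\frac{\ngg}{n_t-1}$ both scale like $P$). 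Hence $\tRANdata=\frac{T_d}{T}\log P+O(1)$, and by optimality $\tRANdata(\calP^*)$ is at least this large.

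Next I would force $P_f^*=\Omega(P)$ and $P_d^*=\Omega(P)$ using only the destination term, which caps the whole rate since the subtracted eavesdropper term is nonnegative. Bounding the first denominator below by $\sigma^2$ gives $\gamma_h=O(\pd)$; bounding it below by $(\pd+P_a)\sdh$ and using $1/\sdh=P_f/\sigma^2+1/\sh$ from \eqref{sdh without AN} gives $\gamma_h=O(P_f)$; together $\gamma_h=O(\effh\min(P_f,\pd))$. Jensen's inequality (concavity of $\log$) and $\ee[\effh]=n_t$ then yield $\tRANdata\le\frac{T_d}{T}\log\big(\min(P_f,\pd)\big)+O(1)$. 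If $\min(P_f,\pd)=o(P)$ along some sequence, this falls below the benchmark by $\frac{T_d}{T}\log\frac{P}{\min(P_f,\pd)}\to\infty$, contradicting optimality. Since each power is $O(P)$ by \eqref{eq.power_constraint}, this gives $P_f^*,P_d^*=\Theta(P)$.

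For $P_a^*$ I would argue the dual way, now lower bounding the eavesdropper term using the gains just secured. With $P_f^*,P_d^*=\Theta(P)$ we have $\sdg=\Theta(1/P)$, so $\pd\sdg=\Theta(1)$ and the destination term is $\frac{T_d}{T}\log P+O(1)$. Suppose $P_a^*=o(P)$. Writing $A=\effg$ and $B=\frac{\ngg}{n_t-1}$, for large $P$ one has $\gamma_g\ge\frac{\frac12\pd\sg A}{c''+P_a\sg B}$; taking $\ee[\log(1+\gamma_g)]\ge\ee[\log\gamma_g]$, using that $\ee[\log A],\ee[\log B]$ are finite and Jensen on the denominator ($\ee\,B=1$), gives an eavesdropper term of at least $\frac{T_d}{T}\big(\log\frac{\pd}{c''+P_a\sg}+O(1)\big)$. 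Subtracting from the destination term leaves $\tRANdata\le\frac{T_d}{T}\log(1+P_a)+O(1)$, below the benchmark by $\frac{T_d}{T}\log\frac{P}{1+P_a}\to\infty$. This contradiction forces $P_a^*=\Omega(P)$.

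The hard part is the delicacy of the comparison. For a slowly vanishing power such as $P/\log P$ the losing margin is only of order $\log\log P$, so the argument cannot be run on the looser bounds of Theorem \ref{lem upper lower} (whose slack $\dRuANdata$ can itself be of order $\log\log P$) and must instead be carried out directly on $\tRANdata$, which is the quantity optimized in forming $\calP^*$ and is legitimate by Theorem \ref{lem small_term}. The remaining technical care lies in the expectations: one must check that $\effh$, $\effg$ and $\frac{\ngg}{n_t-1}$ are $\Theta(1)$ with finite $\ee[\log(\cdot)]$ (they are chi-square--type quantities since $\hov,\gov$ have i.i.d.\ $\mathcal{CN}(0,1)$ entries), so that Jensen delivers the $O(1)$ error terms uniformly in the allocation. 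A final subtlety is ordering: the $P_a^*$ step relies on $\sdg=\Theta(1/P)$, available only after $P_f^*=\Theta(P)$ is established, so the three claims must be proved in sequence rather than simultaneously.
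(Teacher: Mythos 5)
Your overall strategy is the same as the paper's: compare $\calP^*$ against a fully linear benchmark allocation (which yields $\frac{T_d}{T}\log P+O(1)$), and derive contradictions from upper bounds on the destination term and lower bounds on the eavesdropper term. Your individual estimates are essentially sound and mirror the paper's (your $\gamma_h=O(\effh\min(P_f,P_d))$ device is a slightly tidier packaging of the paper's two separate steps, and your $P_a^*$ argument is the quantitative version of the paper's ``the second term must be $O(1)$'').

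The genuine gap is in the logical foundation of the comparison. The paper defines $\calP^*$ as the maximizer of $\RANdata$, not of $\tRANdata$, so your opening claim ``by optimality $\tRANdata(\calP^*)$ is at least as large as the benchmark'' does not follow: optimality only gives $\RANdata(\calP^*)\geq \RANdata(\calP_l)$, and to transfer this to $\tRANdata$ at $\calP^*$ you need $\RANdata(\calP^*)\leq \tRANdata(\calP^*)+\dRuANdata(\calP^*)$ together with control of $\dRuANdata(\calP^*)$ --- which is not known a priori and can grow without bound for bad allocations (e.g.\ when $P_d\sdh\to\infty$). Your own final paragraph identifies exactly this difficulty but resolves it incorrectly: you declare that Theorem \ref{lem small_term} ``legitimizes'' working directly on $\tRANdata$ and that $\tRANdata$ ``is the quantity optimized in forming $\calP^*$.'' The first is too weak (asymptotic equivalence of quantities growing like $\log P$ only controls the difference to $o(\log P)$, which, as you note, can swamp a $\log\log P$ margin), and the second contradicts the paper's definition of $\calP^*$. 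The paper closes this hole differently: it bounds the \emph{sum} $\tRANdata(\calP^*)+\dRuANdata(\calP^*)$ explicitly (equations \eqref{Pf linear ratio 0}--\eqref{Pf linear ratio 1}), exploiting the $-\frac{1}{T}\ee[\log(\pd\|\sd\|^2\sdh+\cdots)]$ term inside $\dRuANdata$ to extract $P_d^*\sdh=O(1)$ first; only then is $\dRuANdata(\calP^*)=O(1)$, which licenses the subsequent contradictions (and, incidentally, delivers $P_f^*=\Omega(P_d^*)$ for free). To repair your proof you would need to add this joint bound on $\tRANdata+\dRuANdata$ at $\calP^*$ rather than dismissing the Theorem \ref{lem upper lower} bounds as unusable.
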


The proofs of Theorem \ref{lem small_term} and Corollary \ref{cor linear ratio CONV} can be found in Appendix \ref{app.conv_case_proof}. Notice that, due to the total power constraint in \eqref{eq.power_constraint}, all power components are $O(P)$, where $f(x)=O(g(x))$ indicates that there exists $k_2>0$ such that $f(x)\leq k_2 g(x)$ for all $x$ sufficiently large. That is, all power components increase at most linearly with $P$. Hence, combined with Corollary \ref{cor linear ratio CONV}, it follows that the powers assigned to training, data, and AN should all scale exactly linearly with $P$. In this case,
the channel estimation error variances under $\calP^*$ can be written as $\sdh=\frac{\sh{\sigma^2}}{P_f^*\sh+{\sigma^2} }=  \frac{\sigma^2}{P_f^*}+o\left(\frac{1}{P}\right)$ and $\sdg=\frac{\sg{\sigma^2}}{ P_f^*\sg+{\sigma^2}} = \frac{\sigma^2}{P_f^*}+o\left(\frac{1}{P}\right)$, where $f(x)=o(g(x))$ indicates that $\lim\limits_{x\rightarrow \infty} f(x)/g(x)=0$, and, for $P$ sufficiently large, the achievable secrecy rate can be approximated as
\begin{align}
\notag \tRANdata(\calP^*)\!
=&\frac{T_d}{T}\ee\left[ \log \left(1+\frac{P_d^* (\sh+o(1))\effh}{(P_d^*+P_a^*) \left( \frac{\sigma^2}{P_f^*}\!+\!o\left(\frac{1}{P}\right)\right)\!+\!{\sigma^2} }\right)\right]\\
&~-\frac{T_d}{T}\ee\!\left[ \log\! \left(\!1\!+\!\frac{P_d^*(\sg\!+\!o(1))\effg}{(P_d^*\!+\!P_a^*)\left( \frac{\sigma^2}{P_f^*}\!+\!o\left(\frac{1}{P}\right)\right)\!+\! P_a^* (\sg+o(1))\frac{\|\bN_{\hat h}\mathbf{\overline{g}}\|^2}{n_t-1}\!+\!{\sigma^2} }\!\right)\!\right]\\
=&\frac{T_d}{T}\ee\left[ \log \frac{P_d^*\sh\effh}{\left(\frac{P_d^*+P_a^* }{P_f^* }+1\right){\sigma^2} }\right]
-\frac{T_d}{T}\ee\left[ \log \left(1+\frac{P_d^* \effg}{ P_a^* \frac{\|\bN_{\hat h}\mathbf{\overline{g}}\|^2}{n_t-1} }\right)\right]+o(1).\label{RABF approx used in optimization prob}
\end{align}
This follows from the fact that $(P_d^*(P)+P_a^*(P))/P_f^*(P)=O(1)$ since $P_d^*(P)+P_a^*(P)=O(P)$ by the total power constraint and $P_f^*(P)=\Omega(P)$ by Corollary \ref{cor linear ratio CONV}.

Notice that the approximate secrecy rate given in \eqref{RABF approx used in optimization prob} strictly increases with $P_f^*$, which implies that one can always achieve a higher secrecy rate by increasing the power used for training. This is because the increase of training power benefits the destination by reducing both the effective noise due to channel estimation error and the AN interference; whereas only the channel estimation error is reduced at the eavesdropper. Therefore, the total power constraint should be satisfied with equality at the optimal point, i.e., $P_f^*T_f+P_d^*T_d+P_a^*T_d=PT$.

In fact, for any $\epsilon>0$ and for $n_t$ sufficiently large, it can be further shown that
\begin{align}\label{high_AN_app_line_5}
\tRANdata\geq\frac{T_d}{T}\log \frac{\frac{P_d^* P_a^*}{P_a^*+P_d^*}}{\left(\frac{P_d^*+P_a^* }{P_f^* }+1\right)\sigma^2 }
+\frac{T_d}{T}\mathbb{E}\left[\log \left(\sh\effh(1-\epsilon) \right)\right]
+\frac{T_d}{T}\epsilon_{n_t}+o(1)
\end{align}
The derivations can be found in Appendix \ref{proof of tRconv lower bound}. This lower bound provides an explicit description of the relation between the achievable secrecy rate and the power allocated to each component.

\subsection{Joint Power Allocation between Training and Data Transmission}

In this subsection, we propose a power allocation for the pilot signal, the data signal, and AN with the goal of maximizing the achievable secrecy rate. However, instead of using the achievable secrecy rate $\RANdata$ (whose expression is unknown) as the objective function, we propose a power allocation policy based on the maximization of this lower bound.
More specifically, let us first set $P_a=(PT-P_fT_f-P_dT_d)/T_d$ since the total power constraint must be satisfied. Then, by removing all the terms that are irrelevant to the optimization and by the fact that the logarithm is a monotonically increasing function, we formulate the power allocation problem as follows:
\begin{subequations}\label{opt_problem_AN}
\begin{align}
\max_{P_f, P_d}~~&~~~ \frac{\frac{P_d(PT-P_fT_f-P_dT_d)}
{(PT-P_fT_f)}}{\frac{PT-P_fT_f}{P_fT_d}+1}
\triangleq J_{\rm conv}(P_f, P_d)\\
\mbox{subject to}& ~~~ P_f>0, P_d>0\\
&~~~PT-P_f T_f - P T_d>0.
\end{align}
\end{subequations}
Notice that the powers $P_f$, $P_d$, and  $P_a=(PT-P_fT_f-P_dT_d)/T_d$ are constrained to be greater than zero due to Corollary \ref{cor linear ratio CONV}.

By taking
the first-order derivative of $J_{\rm conv}$ and setting it to zero, we get the solution
\begin{align}\label{eq.opt_conv_approx_sol}
(\hat P_f^*,\hat P_d^*) =\left(\frac{PT\sqrt{T_f}}{T_f\left(\sqrt{T_f}+\sqrt{T_d}\right)},
\frac{PT\sqrt{T_d}}{2T_d\left(\sqrt{T_f}+\sqrt{T_d}\right)}\right).
\end{align}
To verify that $(\hat P_f^*, \hat P_d^*)$ is indeed the optimal solution of \eqref{opt_problem_AN}, it remains to be shown that the Hessian matrix at the
point $(\hat P_f^{*},\hat P_d^{*})$, i.e.,
\begin{align}\label{second order derivate}
\mathbf{H}_{J_{\rm conv}}=\nabla^2 J_{\rm conv} (\hat P_f^{*},\hat P_d^{*}) =
       \left(\begin{array}{ccr}
            -\frac{T_f^2+T_f\sqrt{T_fT_d}}{2PTT_d} &-\frac{T_f}{PT}\\
            -\frac{T_f}{PT} &-\frac{2T_d}{PT}
            \end{array}\right),
\end{align}
is negative semi-definite. Since $\mathbf{H}_{J_{\rm conv}}$ is real and symmetric, this follows from the fact that all principal minors of ${\bf H}_{J_{\rm conv}}$ are positive
i.e.,
\begin{align}
\notag (-1)^1\det \left([\mathbf{H}_{J_{\rm conv}}]_{\{1\},\{1\}}\right)&=-\det \left(-\frac{T_f^2+T_f\sqrt{T_fT_d}}{2PTT_d}\right)>0\\
\notag (-1)^2\det
\left([\mathbf{H}_{J_{\rm conv}}]_{\{1,2\},\{1,2\}}\right)&=\det
\left(\mathbf{H}_{J_{\rm conv}}\right)=\frac{T_f^2+T_f\sqrt{T_fT_d}}{P^2T^2}-\frac{T_f^2}{P^2T^2} >0
\end{align}
due to Sylvester's criterion \cite{MatrixAnalysis}.
Hence,
the proposed power allocation that maximizes the approximate secrecy rate in \eqref{high_AN_app_line_5} is given in the following theorem.

\begin{Prop}
The power allocation that maximizes the approximate secrecy rate in \eqref{high_AN_app_line_5} is
\begin{align}\label{optimal power ratio RConv}
(\hat P_f^*,\hat P_d^*,\hat P_a^*) =\left(\frac{PT\sqrt{T_f}}{T_f\left(\sqrt{T_f}+\sqrt{T_d}\right)},
\frac{PT\sqrt{T_d}}{2T_d\left(\sqrt{T_f}+\sqrt{T_d}\right)},
\frac{PT\sqrt{T_d}}{2T_d\left(\sqrt{T_f}+\sqrt{T_d}\right)}\right).
\end{align}
\end{Prop}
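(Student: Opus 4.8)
The plan is to treat the proposition as the optimization of the explicit surrogate in \eqref{high_AN_app_line_5} rather than of the intractable $\RANdata$, since the statement only asks for the maximizer of that lower bound. First I would impose the total power constraint \eqref{eq.power_constraint} with equality, which by Corollary~\ref{cor linear ratio CONV} and the discussion following it is admissible, and use it to eliminate $P_a$ via $P_a=(PT-P_fT_f-P_dT_d)/T_d$. In \eqref{high_AN_app_line_5} the term $\frac{T_d}{T}\mathbb{E}[\log(\sh\effh(1-\epsilon))]$, together with $\frac{T_d}{T}\epsilon_{n_t}$ and the $o(1)$, is independent of $(P_f,P_d)$, so by the monotonicity of the logarithm the problem reduces to maximizing the argument $J_{\rm conv}(P_f,P_d)$ in \eqref{opt_problem_AN} over the open triangle $\{P_f>0,\,P_d>0,\,PT-P_fT_f-P_dT_d>0\}$, where the last constraint is exactly positivity of $P_a$.

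Second, I would locate the stationary point. A clean route is to optimize sequentially: for fixed $P_f$ the objective $J_{\rm conv}$ is, up to positive factors constant in $P_d$, proportional to $P_d\,(PT-P_fT_f-P_dT_d)$, a downward parabola in $P_d$ whose maximizer splits the post-training budget evenly, forcing $\hat P_a^*=\hat P_d^*$. Substituting this back collapses the problem to a one-dimensional maximization that, after writing the budget as $PT=P_fT_f+(PT-P_fT_f)$, is equivalent to minimizing a sum of the form $T_f/u+T_d/v$ subject to $u+v=PT$ with $u=P_fT_f$; this is convex and solved in closed form by Cauchy--Schwarz (or AM--GM), yielding the square-root split $u\propto\sqrt{T_f}$, $v\propto\sqrt{T_d}$ and hence \eqref{eq.opt_conv_approx_sol}. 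Alternatively one may set $\nabla J_{\rm conv}=0$ directly and solve the coupled rational equations, which reproduce the same point; either way uniqueness of the interior stationary point follows.

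Third, I would certify global optimality. Checking the Hessian \eqref{second order derivate} at $(\hat P_f^*,\hat P_d^*)$ and applying Sylvester's criterion to $-\mathbf{H}_{J_{\rm conv}}$ shows the point is a strict local maximum. To upgrade this to a global statement I would note that the feasible triangle is bounded, that $J_{\rm conv}$ is continuous and strictly positive on its interior, and that it extends continuously to $0$ on every edge (as $P_f\to0$, $P_d\to0$, or $P_a\to0$). Hence the maximum over the compact closure is attained in the interior, necessarily at a stationary point; since the stationary point is unique, it is the global maximizer. Finally, substituting $(\hat P_f^*,\hat P_d^*)$ into $P_a=(PT-P_fT_f-P_dT_d)/T_d$ and using $\hat P_a^*=\hat P_d^*$ gives the third entry of \eqref{optimal power ratio RConv}.

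The main obstacle I anticipate is not the stationarity computation but the passage from a local to a global guarantee: the Hessian in \eqref{second order derivate} is only evaluated at the critical point, so negative-definiteness there does not by itself exclude other maximizers, and $J_{\rm conv}$ is not globally concave on the triangle. The boundary-vanishing plus compactness argument above is what closes this gap, and I would make sure the continuous extension of $J_{\rm conv}$ to each edge is verified explicitly --- in particular the edge $P_f\to0$, where the factor $\frac{PT-P_fT_f}{P_fT_d}$ in the denominator of $J_{\rm conv}$ blows up and drives the objective to $0$.
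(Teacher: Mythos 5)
Your proposal is correct and follows the same overall strategy as the paper: eliminate $P_a$ via the active power constraint, reduce the problem to maximizing $J_{\rm conv}(P_f,P_d)$ in \eqref{opt_problem_AN}, locate the stationary point \eqref{eq.opt_conv_approx_sol}, and certify it. The differences are worth noting. First, your sequential derivation is cleaner than the paper's direct gradient computation: for fixed $P_f$ the objective is indeed proportional to $P_d(PT-P_fT_f-P_dT_d)$, whose vertex forces $\hat P_a^*=\hat P_d^*$, and substituting $u=P_fT_f$, $v=PT-P_fT_f$ reduces $J_{\rm conv}$ to $\tfrac{uv}{4(vT_f+uT_d)}$, i.e.\ to minimizing the convex function $T_f/u+T_d/v$ on the segment $u+v=PT$, which Cauchy--Schwarz solves exactly at the square-root split. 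Second, and more substantively, you correctly identify that the paper's own argument only checks negative definiteness of the Hessian \eqref{second order derivate} at the critical point, which certifies a \emph{local} maximum; $J_{\rm conv}$ is not concave on the whole triangle, so this does not by itself rule out a larger value elsewhere. Your closing argument (continuity on the compact closed triangle, $J_{\rm conv}\to 0$ on each edge including $P_f\to 0$ where the denominator blows up, hence an interior maximizer that must be the unique stationary point) fills a gap the paper leaves open; alternatively, your sequential route already delivers global optimality directly, since each of the two nested one-dimensional problems is solved globally. In short, your proof is a strict improvement on the paper's in rigor while arriving at the same solution \eqref{optimal power ratio RConv}.
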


\vspace{.3cm}

The effectiveness of this solution compared to the optimal power allocation $\calP^*$ (i.e., the one that maximizes the achievable secrecy rate $R_{\rm conv}$) will be verified numerically in Section \ref{sec.simulation}. This solution
indicates that, with conventional training, the ratio between the energy used for training and that for data transmission, i.e.,
$\hat P_f^*T_f/(\hat P_d^*T_d+\hat P_a^*T_d)$, should be equal to $\sqrt{T_f/T_d}$. Recall that $T_f$ is equal to $n_t$ whereas $T_d$ increases with the coherence time. Hence, as the coherence time increases, more and more energy should be allocated to the data transmission phase to support the increasing number of channel uses.
Moreover, we can also see from \eqref{optimal power ratio RConv}  that equal power should be allocated to data and AN in the data transmission phase. It is interesting to observe that the solution does not depend on the channel variances $\sigma_h^2$ and $\sigma_g^2$ since, for $P$ sufficiently large, the AWGN terms are negligible and, thus, the SNR at both the destination and the eavesdropper are determined by the ratio between their own received data and AN powers, which experience the same channel gains when arriving at their respective receivers.

Furthermore, by \eqref{RABF approx used in optimization prob}, we can observe that the achievable secrecy rate increases without bound as $P$ increases. However, this is not always the case
when AN is not utilized in either training or data transmission as to be shown in our simulations. This implies that AN is necessary (at least in the data transmission phase) to achieve a secrecy rate that increases without bound with respect to $P$.
However, when the coherence time is large, the energy allocated to training becomes negligible and almost half the total energy is allocated to AN in the data transmission phase (according to \eqref{optimal power ratio RConv}). That is, only half the energy is left to transmit the actual message. However, if AN is further applied in the training phase (as done in
the DCE scheme
\cite{Chang_Chi2010_DCE, CWHuang_Hong2013_TwoWay}),
the difference between the effective channel qualities at the destination and at the eavesdropper can be enhanced, even before the data is actually transmitted. The proportion of AN needed in the data transmission phase can then be reduced.
This is discussed in the following section.



\section{AN-Assisted Secrecy Beamforming with DCE (i.e., AN-Assisted) Training in the High SNR Regime} \label{Sec.DCE}

In this section, we consider the case where AN is used in both the training and the data transmission phases. This refers to the DCE
and the AN-assisted secrecy beamforming schemes described in Sections \ref{sec.syst.training} and \ref{sec.syst.data}, respectively.
Similar to the previous section, we first derive an approximate expression of the achievable secrecy rate and then propose an efficient algorithm for determining the power allocation between pilot, data, and AN in both phases.

\subsection{Asymptotic Approximation of the Achievable Secrecy Rate}


Following Section \ref{sec.syst},
let the length of the reverse and the forward training signals be equal to the number of antennas at the destination and the source, respectively. That is, we set $T_r=1$ and $T_f=n_t$. To distinguish from $\RANdata$ in the previous section, we use $\RANboth$ to denote the achievable secrecy rate of the system considered here.
Similarly by Theorem \ref{lem upper lower}, we can obtain upper and lower bounds of $\RANboth$ as
\begin{equation}
\tRANboth-\dRlANboth\leq \RANboth\leq \tRANboth+\dRuANboth,
\end{equation}
where the terms are given by \eqref{R lemma 1}, \eqref{Ru lemma 1}, and \eqref{Rl lemma 1} with $\sdh$ and $\sdg$ equal to \eqref{var error h forward training} and \eqref{var error g forward training}.

Let $\mathcal{P}^*\triangleq (P^*_r,P^*_f, P^*_{f_a},P^*_d, P^*_a)$ be the optimal power allocation that maximizes the achievable secrecy rate $\RANboth$.
Similar to the case with conventional training, we can also show that $\RANboth(\calP^*)$ can be closely approximated by $\tRANboth(\calP^*)$, for $P$ sufficiently large.

\begin{Theo}\label{asymptotically equivalent DCE}The maximum achievable secrecy rate $\RANboth(\calP^*)$ under DCE training is asymptotically equivalent to $\tRANboth(\calP^*)$ (i.e.,
$\RANboth(\calP^*) \doteq \tRANboth(\calP^*)$)
as $P\rightarrow\infty$.
\end{Theo}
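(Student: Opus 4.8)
The plan is to follow the same route as the proof of Theorem~\ref{lem small_term} for the conventional case. By Theorem~\ref{lem upper lower} and Corollary~\ref{cor upper and lower bound}, the rate is sandwiched as $\tRANboth(\calP^*)-\dRlANboth(\calP^*)\le \RANboth(\calP^*)\le \tRANboth(\calP^*)+\dRuANboth(\calP^*)$, so that
\begin{align}
1-\frac{\dRlANboth(\calP^*)}{\tRANboth(\calP^*)}\le \frac{\RANboth(\calP^*)}{\tRANboth(\calP^*)}\le 1+\frac{\dRuANboth(\calP^*)}{\tRANboth(\calP^*)}.
\end{align}
Hence it suffices to prove two facts: (i) $\tRANboth(\calP^*)\to\infty$, and (ii) the gap terms $\dRuANboth(\calP^*)$ and $\dRlANboth(\calP^*)$ stay bounded, i.e. are $O(1)$, as $P\to\infty$. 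The ratio of a bounded quantity to an unbounded one then vanishes, yielding $\RANboth(\calP^*)\doteq\tRANboth(\calP^*)$. Throughout I write $f=\Theta(g)$ to mean $f$ is simultaneously $O(g)$ and $\Omega(g)$.

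First I would establish the scaling of the optimal powers, the DCE counterpart of Corollary~\ref{cor linear ratio CONV}. Arguing by contradiction (reallocating a linear slice of the budget to any essential component that vanished relative to $P$ strictly raises the achievable lower bound for large $P$, contradicting optimality) one shows that $P_d^*,P_a^*,P_f^*,P_{f_a}^*,P_r^*$ all scale as $\Theta(P)$. Substituting these orders into \eqref{var error h reverse training}, \eqref{var error h forward training} and \eqref{var error g forward training} with $T_r=1$, $T_f=n_t$ gives the crucial orders $\sdhr=\Theta(1/P)$, so that $P_{f_a}\sdhr=\Theta(1)$ and hence $\sdh=\Theta(1/P)$, whereas $\sdg=\Theta(1)$ because the AN term $P_{f_a}\sg$ in the denominator of \eqref{var error g forward training} grows at the same $\Theta(P)$ rate as the numerator $P_fT_f/n_t$. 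This is precisely the DCE effect: the main-channel error vanishes while the eavesdropper error stays bounded away from zero. Consequently $\pd\sdh=\Theta(1)$ and $P_a\sdh=\Theta(1)$, while $\pd\sdg$, $P_a\sdg$, and $P_a(\sg-\sdg)\frac{\ngg}{n_t-1}$ are all $\Theta(P)$.

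These orders make both gaps $O(1)$. Evaluating $\dRuANboth$ from \eqref{Ru lemma 1} at $\calP^*$, every argument of every logarithm, namely $\pd\sdh+P_a\sdh+\sigma^2$, $P_a\sdh+\sigma^2$ and $\pd\|\sd\|^2\sdh+P_a\sdh+\sigma^2$, is $\Theta(1)$, so $\dRuANboth(\calP^*)=O(1)$ directly. In $\dRlANboth$ from \eqref{Rl lemma 1} the arguments are all $\Theta(P)$ and each logarithm is $\Theta(\log P)$; however the exponents combine as $T_d-(T_d-1)-1=0$, so the leading $\log P$ terms cancel and only an $O(1)$ remainder survives, giving $\dRlANboth(\calP^*)=O(1)$. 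This cancellation requires $\pd\sdg$ and $P_a\sdg$ to share the same order, which is exactly why the $P_a^*=\Theta(P_d^*)$ part of the scaling result is needed.

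Finally, to see $\tRANboth(\calP^*)\to\infty$, I would lower bound $\RANboth(\calP^*)$ by $\tRANboth-\dRlANboth$ evaluated at any feasible allocation with linear scaling: the main-channel term of $\tRANboth$ in \eqref{R lemma 1} behaves as $\frac{T_d}{T}\ee[\log(1+\Theta(P)\effh)]=\Theta(\log P)$, while its eavesdropper term reduces to $\frac{T_d}{T}\ee[\log(1+\Theta(1)\effg)]=O(1)$ and the gap is $O(1)$, so $\RANboth(\calP^*)=\Theta(\log P)$. Combined with $\RANboth(\calP^*)\le\tRANboth(\calP^*)+O(1)$ this forces $\tRANboth(\calP^*)=\Theta(\log P)\to\infty$, and the sandwich above gives the claim. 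The main obstacle is the scaling step: unlike the conventional case there are five coupled power variables feeding the two-stage MMSE error expressions \eqref{var error h reverse training}--\eqref{var error g forward training}, and one must verify that optimality simultaneously drives $\sdh$ to order $1/P$ while keeping $\sdg$ and the ratio $P_d^*/P_a^*$ at order one; once this is in place, everything else is order bookkeeping.
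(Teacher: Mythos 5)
There is a genuine gap, and it sits exactly where you flag the ``main obstacle.'' Your route requires evaluating \emph{both} gap terms at the optimal allocation, and in particular $\dRlANboth(\calP^*)=O(1)$; to get this you assert that all five optimal powers $P_r^*,P_f^*,P_{f_a}^*,P_d^*,P_a^*$ scale as $\Theta(P)$, justified by a reallocation-by-contradiction argument. That argument does not go through: moving a linear slice of the budget raises an \emph{achievable lower bound} on the rate, which does not contradict optimality of the true rate. Indeed the paper's own Corollary \ref{cor linear ratio DCE} only establishes the weaker statement that \emph{either} $P_{f_a}^*=\Omega(P)$ \emph{or} $P_a^*=\Omega(P)$ (and $P_r^*=\Omega(P_{f_a}^*)$, not $\Theta(P)$), and the discussion after \eqref{eq.g_error_var} explicitly allows $\sigma^2_{\Delta g}=o(1)$ when $P_{f_a}^*$ grows slower than $P_f^*$ --- contradicting your claim that $\sdg=\Theta(1)$ at the optimum. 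Without the full $\Theta(P)$ scaling your cancellation in $\dRlANboth(\calP^*)$ can fail: if, say, $P_a^*=o(P)$ while $P_{f_a}^*=\Theta(P)$, the numerator argument $\pd^*\sdg+P_a^*(\cdots)+\sigma^2$ is $\Theta(P)$ while the denominator argument is only $\Theta(P_a^*)$, the exponents no longer cancel the leading terms, and $\dRlANboth(\calP^*)$ can be $\Theta(\log P)$ --- the same order as $\tRANboth(\calP^*)$ --- so your sandwich becomes inconclusive.

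The paper avoids this entirely by a one-sided construction: it lower-bounds $\RANboth(\calP^*)$ by $\tRANboth(\calP_l)-\dRlANboth(\calP_l)$ evaluated at a \emph{fixed} linear allocation $\calP_l$ (where every scaling is known by construction, so $\dRlANboth(\calP_l)=O(1)$ is immediate), and upper-bounds it by $\tRANboth(\calP^*)+\dRuANboth(\calP^*)$, where $\dRuANboth(\calP^*)=O(1)$ follows from $P_d^*\sigma^2_{\Delta h}=O(1)$, itself deduced by comparing the two bounds rather than by characterizing $\calP^*$. Showing the outer bounds are both $\frac{T_d}{T}\log P+O(1)$ then gives the result without ever touching $\dRlANboth(\calP^*)$. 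Your argument for $\dRuANboth(\calP^*)=O(1)$ and for $\tRANboth(\calP^*)=\Theta(\log P)$ is essentially sound, but to repair the proof you should either adopt the paper's asymmetric bounding strategy or supply a genuine proof that $\dRlANboth(\calP^*)=O(1)$ under only the ``either--or'' scaling that optimality actually guarantees.
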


The scaling of the optimal power allocation can also be derived as follows.

\begin{Cor}\label{cor linear ratio DCE}
$P^*_f(P)=\Omega(P)$ and $P^*_d(P)=\Omega(P)$, and that either $P^*_{f_a}(P)=\Omega(P)$ or $P^*_{a}(P)=\Omega(P)$. Moreover, we have $P^*_r(P)=\Omega(P^*_{f_a}(P))$.
\end{Cor}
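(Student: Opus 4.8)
The plan is to work throughout with the approximation $\tRANboth$ rather than the intractable $\RANboth$: by Theorem \ref{asymptotically equivalent DCE} we have $\RANboth(\calP^*)\doteq\tRANboth(\calP^*)$, and since the power constraint \eqref{eq.power_constraint} forces every component to be $O(P)$, each assertion reduces to showing that the indicated power does not decay faster than $P$ (respectively than $P_{f_a}^*$) at the optimum. The anchor of every argument is a benchmark allocation that mimics conventional training: take $P_f,P_d,P_a=\Theta(P)$, a negligible $P_{f_a}$, and $P_r\asymp P_{f_a}$. Repeating the high-SNR expansion of Section \ref{sec.no AN training} for this allocation gives $\tRANboth=\frac{\td}{T}\log P+O(1)$, whence $\tRANboth(\calP^*)\geq\frac{\td}{T}\log P-o(\log P)$. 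Each claim is then proved by contradiction: if the stated scaling fails along some subsequence $P_j\to\infty$, I bound $\tRANboth(\calP^*(P_j))$ strictly below the benchmark by a divergent amount, violating optimality.

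For $P_f^*=\Omega(P)$ and $P_d^*=\Omega(P)$ I exploit that the eavesdropper (second) term in \eqref{R lemma 1} is nonnegative, so $\tRANboth$ is upper bounded by its destination (first) term. Two monotone bounds on the destination SINR isolate $P_f$ and $P_d$: dropping the noise and AN-leakage terms in the denominator gives $\mathrm{SINR}_{\mathrm{dest}}\le\sh\effh/\sdh$, and since $1/\sdh\le 1/\sh+P_f/\sigma^2$ by \eqref{var error h forward training}, a sublinear $P_f$ caps the first term at $\frac{\td}{T}\log P_f+O(1)$, short of the benchmark by the divergent $\frac{\td}{T}\log(P/P_f)$; similarly, dropping $\sdh$ from the denominator gives $\mathrm{SINR}_{\mathrm{dest}}\le P_d\sh\effh/\sigma^2$, so a sublinear $P_d$ costs $\frac{\td}{T}\log(P/P_d)\to\infty$. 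Either case contradicts optimality.

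The AN claim is the statement that the eavesdropper must be degraded either in training or in data transmission. If, along a common subsequence, both $P_{f_a}$ and $P_a$ were $o(P)$ while $P_f,P_d=\Theta(P)$, then by \eqref{var error g forward training} $\sdg\to 0$ and the jamming term $P_a(\sg-\sdg)\ngg/(n_t-1)$ in \eqref{R lemma 1} is $o(P)$, so the eavesdropper's effective SNR grows at the same $\Theta(P)$ rate as the destination's; the secrecy rate, being the difference of two $\frac{\td}{T}\log P$ terms whose pre-log and $P_d$-dependence cancel, collapses to $O(1)$, again below the benchmark. For $P_r^*=\Omega(P_{f_a}^*)$ I use that forward-training AN leaks to the destination through the source's reverse-estimation error: substituting $\sdhr$ from \eqref{var error h reverse training} into \eqref{var error h forward training} gives $\sdh\approx(P_{f_a}\sdhr+\sigma^2)/P_f$ with $P_{f_a}\sdhr\approx P_{f_a}\sigma^2/(P_rT_r)$. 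When $P_{f_a}$ sits at its linear scale but $P_r=o(P_{f_a})$, this leakage dominates, throttling $\mathrm{SINR}_{\mathrm{dest}}=O(P_fP_r/P_{f_a})=o(P)$ and costing the destination term $\frac{\td}{T}\log(P_{f_a}/P_r)\to\infty$, a contradiction.

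The main obstacle is that these comparisons live at the level of the \emph{additive} gap to the benchmark, not merely the leading order: an ``almost linear'' choice such as $P_f=P/\log P$ still yields pre-log one yet loses a growing $\frac{\td}{T}\log\log P$, so the $\Omega(\cdot)$ conclusions cannot be read off from the leading $\log P$ term alone. Consequently I must control the correction terms $\dRuANboth$ and $\dRlANboth$ of \eqref{Ru lemma 1}--\eqref{Rl lemma 1} — themselves logarithms of polynomial-in-$P$ quantities, hence $O(\log\log P)$ — carefully enough that they never mask the divergent penalties above, and I must bound the expectations over $\effh$, $\effg$ and $\ngg$ uniformly in the allocation. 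The coupling of all five powers through $\sdh$, $\sdg$ and $\sdhr$, together with the subsequence bookkeeping needed to turn the disjunctive ``$P_{f_a}^*=\Omega(P)$ or $P_a^*=\Omega(P)$'' into a single common-subsequence contradiction, is where the delicate work lies.
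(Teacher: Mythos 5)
Your overall strategy --- a linear benchmark allocation achieving $\frac{T_d}{T}\log P + O(1)$, followed by contradiction arguments showing that any violation of the claimed scalings pushes the rate divergently below the benchmark --- is exactly the paper's, and the mechanisms you identify for the disjunctive AN claim and for $P^*_r(P)=\Omega(P^*_{f_a}(P))$ are the right ones. The genuine gap is in your treatment of the correction term $\dRuANboth$. You assert it is ``$O(\log\log P)$'' because it is a logarithm of polynomial-in-$P$ quantities; but the logarithm of a polynomial in $P$ is $\Theta(\log P)$, not $O(\log\log P)$. Concretely, \eqref{Ru lemma 1} contains $\frac{T_d}{T}\log\left(P_d\sdh+P_a\sdh+\sigma^2\right)$, and without prior control of the product $P_d^*\sdh$ this can be as large as $\frac{T_d-1}{T}\log P+O(1)$ (e.g.\ $P_d^*=\Theta(P)$ with $P_f^*=O(1)$, so $\sdh=\Theta(1)$). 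Since the quantity you must compare against the benchmark is the full upper bound $\tRANboth(\calP^*)+\dRuANboth(\calP^*)$, a correction of order $\log P$ can absorb the divergent penalties $\frac{T_d}{T}\log(P/P_f^*)$ or $\frac{T_d}{T}\log(P/P_d^*)$ that drive your contradictions; your two separate SINR bounds on the first term of \eqref{R lemma 1} alone do not close the argument.

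The paper's resolution, which your sketch is missing, is to bound the \emph{sum} $\tRANboth(\calP^*)+\dRuANboth(\calP^*)$ jointly, as in \eqref{Pf linear ratio 0}--\eqref{Pf linear ratio 1}: dropping the eavesdropper term and upper-bounding $P_d^*$, $P_a^*$, $\sigma^2$ by $k'P$ gives $\tRANboth(\calP^*)+\dRuANboth(\calP^*)\le\frac{T_d}{T}\log P-\frac{1}{T}\log\left(P_d^*\sdh\right)+O(1)$, and comparison with the benchmark immediately yields the pivotal fact $P_d^*\sdh=O(1)$. This single fact then does all the work: it forces $\dRuANboth(\calP^*)=O(1)$, after which your destination-SINR bound $\frac{T_d}{T}\ee[\log(1+P_d^*\sh\effh/\sigma^2)]$ gives $P_d^*(P)=\Omega(P)$; it gives $P_f^*(P)=\Omega(P)$ directly (no second SINR bound is needed), since $\sdh$ is of order $(P_{f_a}^*\sdhr+\sigma^2)/P_f^*$; and via \eqref{pd times sdh finite DCE} it gives $P_{f_a}^*\sdhr=O(1)$ and hence $P_r^*(P)=\Omega(P_{f_a}^*(P))$, replacing your SINR-throttling argument. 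The remaining disjunctive claim is then obtained exactly as you propose, from the requirement that the eavesdropper term be $O(1)$. You should restructure the proof around the $P_d^*\sdh=O(1)$ step rather than around separate additive penalties for $P_f$ and $P_d$.
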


The proofs of the theorem and the corollary can both be found in Appendix \ref{app.DCE_case_proof}. The corollary shows that, to achieve the maximum acheivable secrecy rate, the power allocated to the forward pilot signal in the training phase and the message-bearing signal in the data transmission phase, i.e., $P^*_f(P)$ and $P^*_d(P)$, should both increase linearly with $P$, and so should the power of at least one of the AN terms (either in the training or data transmission phases, or both). Moreover, the reverse training power $P^*_r(P)$ should scale at least as fast as the AN power $P^*_{fa}(P)$ in the training phase. This is because, with larger AN power $P^*_{fa}(P)$, more power should be invested in reverse training to ensure more accurate placement of AN  in the forward training stage.

By Corollary \ref{cor linear ratio DCE}, the channel estimation error variances in \eqref{var error h forward training} and \eqref{var error g forward training} can be written as
\begin{equation}
\sigma^2_{\Delta h}=\frac{\sh\left(P_{f_a}^*\frac{\sh\sigma^2}{\sigma^2+P_r^*\sh}+\sigma^2\right)}{P_{f_a}^*\frac{\sh\sigma^2}{\sigma^2+P_r^*\sh}+\sigma^2+P_f^*\sh}
= \frac{\sigma^2}{P_f^*}\left(1+\frac{P_{f_a}^*\sh}{\sigma^2+P_r^*\sh}\right)(1+o(1))=o(1),
\end{equation}
since $P_r^*(P)=\Omega(P_{f_a}^*(P))$ and $P_f^*(P)=\Omega(P)$,
and
\begin{equation}\label{eq.g_error_var}
\sigma^2_{\Delta g} =\frac{P_{f_a}^*\sigma^4_{g}+{\sigma^2}\sg}{P_{f_a}^*\sg+ \sigma^2+P_f^*\sg}
=\frac{P_{f_a}^*\sg+{\sigma^2}}{P_{f_a}^*+P_{f}^*}(1+o(1)),
\end{equation}
respectively.
Notice that, in \eqref{eq.g_error_var},
the ratio $\frac{P_{f_a}^*\sg+ \sigma^2}{P_{f_a}^*+P_{f}^*}$ is at least $O(1)$, but may also be $o(1)$ if $P_{f_a}^*$ does not scale as fast as $P_f^*$.
Then, for $P$ sufficiently large, the achievable secrecy rate can be approximated as
\begin{align}
\notag\tRANboth \!=&\frac{T_d}{T}\mathbb{E}\left[ \log \left(1\!+\!\frac{P_d^*(\sh+o(1))\effh}{\left(P_d^*\!+\!P_a^*\right) \frac{\sigma^2}{P_f^*}\left(1\!+\!\frac{P_{f_a}^*\sh}{\sigma^2+P_r^*\sh}\right)(1+o(1))+\sigma^2 }\right)\right]\\
&\!\!-\!\frac{T_d}{T}\mathbb{E}\!\left[ \!\log\! \left(\!1\!+\!\frac{P_d^*\left(\frac{P_f^*\sg}{P_{f_a}^*\!+P_{f}^*}(1+o(1))\right)\effg}{ (P_d^*\!+\!P_a^*)\!\left[\frac{P_{f_a}^*\!\sg\!+\!\sigma^2}{P_{f_a}^*\!+P_{f}^*}(1\!+\!o(1))\!\right]\!+\! P_a^*\!\left[\frac{P_{f}^*\!\sg}{P_{f_a}^*\!+P_f^*}(1\!+\!o(1))\!\right]\!\frac{\ngg}{n_t-1}\!+\!{\sigma^2} }\!\right)\!\right]\!\!\\
 \notag=&\frac{T_d}{T}\mathbb{E}\left[ \log \left(\frac{P_d^*P_f^*(\sigma^2+P_r^*\sh)\sh\effh/\sigma^2}{\left(P_d^*+P_a^*\right){ \left(\sigma^2+P_r^*\sh+P_{f_a}^*\sh\right)}+P_f^*(\sigma^2+P_r^*\sh) }\right)\right]\\
&-\!\!\frac{T_d}{T}\mathbb{E}\!\left[\log \left(\!1\!+\!\frac{P_d^*P_{f}^*\sigma_g^2\effg}{ (P_d^*\!+\!P_a^*)(P_{f_a}^*\!\sigma_g^2\!+\!\sigma^2)\!+\! P_a^*P_f^*\sigma_g^2\frac{\ngg}{n_t-1}\!+\!\sigma^2 (P_{f_a}^*\!+\!P_f^*) }\right)\!\right]\!+\!o(1).\label{high SNR RDCE approximation}
\end{align}

Following the approach used to obtain \eqref{high_AN_app_line_5} (c.f. Appendix \ref{proof of tRconv lower bound}), we can show that, for any $\epsilon'>0$ and for $n_t$ sufficiently large, the second term in \eqref{high SNR RDCE approximation} can be approximated as
\begin{align}
\notag&\frac{T_d}{T}\mathbb{E}\left[\log \left(1+\frac{P_d^*P_f^*\sigma_g^2\effg}{ (P_d^*+P_a^*)(P_{f_a}^*\sigma_g^2+\sigma^2)+ P_a^*P_f^*\sigma_g^2(1-\epsilon')+\sigma^2(P_{f_a}^*+P_f^*)}\right)\right]+\frac{T_d}{T}\epsilon_{n_t}'\\
&=\frac{T_d}{T}\mathbb{E}\left[\log \left(1+\frac{P_d^*P_f^*\sigma_g^2\effg}{ (P_d^*+P_a^*)(P_{f_a}^*\sigma_g^2+\sigma^2)+ P_a^*P_f^*\sigma_g^2(1-\epsilon')}\right)\right]+\frac{T_d}{T}\epsilon_{n_t}'+o(1)\label{DCE second term epsilon form}
\end{align}
where  $\epsilon_{n_t}' \triangleq \mathbb{E}\!\left[\log \left(\!1\!+\!\frac{P_d^*P_{f}^*\sigma_g^2\overline{\bg}^{\dag}\overline{\bh \bh}^{\dag}\overline{\bg}/||\overline{\bh}||^2}{ (P_d^*\!+\!P_a^*)(P_{f_a}^*\!\sigma_g^2\!+\!\sigma^2)\!+\! P_a^*P_f^*\sigma_g^2\ngg/(n_t-1)\!+\!\sigma^2 (P_{f_a}^*\!+\!P_f^*) }\right)\bigg| A^{c}_{\epsilon'}\!\right]\Pr(A^{c}_{\epsilon'}) \rightarrow 0$ as $n_t\rightarrow\infty$ and $A^{c}_{\epsilon'}\triangleq \left\{\left|\ngg/(n_t-1)-1\right|> \epsilon' \right\}$. The equality holds since, by Corollary \ref{cor linear ratio DCE}, either $P^*_{f_a}(P)=\Omega(P)$ or $P^*_{a}(P)=\Omega(P)$. Then, by further applying Jensen's inequality to \eqref{DCE second term epsilon form}, we obtain from \eqref{high SNR RDCE approximation} the following lower bound on $\tRANboth$:
\begin{align}
\notag\tRANboth\geq&  \frac{T_d}{T}\log \left(\frac{P_d^*P_f^*(\sigma^2+P_r^*\sh)}{\left(P_d^*\!+\!P_a^*\right){ \left(\sigma^2\!+\!P_r^*\sh\!+\!P_{f_a}^*\sh\right)}\!+\!P_f^*(\sigma^2\!+\!P_r^*\sh) }\right)+\frac{T_d}{T}\ee\left[\log\frac{\sh\effh}{ \sigma^2}\right]\\
&-\frac{T_d}{T}\log \left(1+\frac{P_d^*P_{f}^*\sigma_g^2}{ (P_d^*\!+\!P_a^*)(P_{f_a}^*\sigma_g^2+\sigma^2)\!+\! P_a^*P_f^*\sigma_g^2(1-\epsilon') }\right)\!+\!\frac{T_d}{T}\epsilon_{n_t}'\!+\!o(1)
\label{RDCE approximation final ver}
\end{align}
It is worthwhile to note that, in this case, the length of the data transmission phase is $T_d=T-T_r-T_f$, which is different from that in the conventional case.

\subsection{Joint Power Allocation between Training and Data Transmission}

Similar to the case with conventional training,  we determine the optimal power allocation by maximizing the lower bound in \eqref{RDCE approximation final ver}. By the fact that the logarithm is a monotonically increasing function and by removing all the terms that are irrelevant to the optimization,
we formulate the power allocation problem as follows:
\begin{subequations}\label{High SNR DCE optimization problem_2}
\begin{align}
\notag \!\!\!\max \limits_{P_r, P_f, P_{f_a},P_d, P_a} &~\frac{P_dP_f(\sigma^2+P_r\sh)}{\left(P_d\!+\!P_a\right)
\left(\sigma^2\!+\!P_r\sh\!+\!P_{f_a}\sh\right)\!+\!P_f(\sigma^2\!+\!P_r\sh) }\\
\notag&~\times\frac{(P_d\!+\!P_a)(P_{f_a}\sigma_g^2\!+\!\sigma^2)\!+\!P_aP_{f}\sigma_g^2(1\!-\!\epsilon')}{(P_d\!+\!P_a)(P_{f_a}\sigma_g^2\!+\!\sigma^2)\!\!+\! P_aP_{f}\sigma_g^2(1\!-\!\epsilon')\!+\!P_dP_{f}\sigma_g^2}\\
&~\triangleq J_{\rm DCE}(P_r,\!P_f,\!P_{f_a},\!P_d,\!P_a)\\
\!\!\!\mbox{subject to}~&P_r>0, P_{f}>0, P_{f_a}>0, P_d>0, P_a>0,\\
& P_r T_r + P_fT_f+P_{f_a} T_f+P_d T_d+P_a T_d=PT.\label{eq.DCEpower_constraint_equalty}
\end{align}
\end{subequations}

Notice that the approximate secrecy rate expression in \eqref{RDCE approximation final ver} follows from
Corollary \ref{cor linear ratio DCE} where it was shown that at least one of the two AN powers (either $P^*_{f_a}(P)$ or $P^*_{a}(P)$, or both) scale linearly with $P$. However, by the proof of Theorem \ref{asymptotically equivalent DCE} in Appendix \ref{app.DCE_case_proof}, we know that the same asymptotic secrecy rate can also be achieved by having all power components $P_r$, $P_f$, $P_{f_a}$, $P_d$, and $P_a$  scale linearly with $P$. In this case, the objective function can be further approximated as
\begin{equation}\label{eq.DCE_approx_obj}
\tilde J_{\rm DCE}(P_r,\!P_f,\!P_{f_a},\!P_d,\!P_a)\!=\!
\frac{P_rP_{f}P_d}
{(P_d\!+\!P_a) (P_{f_a}\!\!+\!P_r)\!+\!P_rP_{f}}\cdot
\!\frac{(P_d\!+\!P_a)P_{f_a}\!\!+\!P_aP_{f}(1\!-\!\epsilon')}{(P_d\!+\!P_a)P_{f_a}\!\!+\! P_aP_{f}(1\!-\!\epsilon')\!+\!P_dP_{f}}.
\end{equation}
Moreover, in \eqref{High SNR DCE optimization problem_2}, the total power constraint is replaced with an equality in \eqref{eq.DCEpower_constraint_equalty} since the objective function increases monotonically with respect to $P_r$ (regardless of whether $J_{\rm DCE}$ or $\tilde J_{\rm DCE}$ is considered). This is because the increase of reverse training power does not benefit the eavesdropper and can be set as large as possible. However, this problem is nonconvex and, thus, is difficult to solve efficiently. To obtain an efficient solution for this problem, we take a successive convex approximation (SCA) approach where we turn the problem into a sequence of geometric programming (GP) problems using the monomial approximation and the condensation method, similar to that done in \cite{MChiang_GeometricProgramming} and \cite{Chang_Chi2010_DCE}. In the following, we describe the procedures of the SCA algorithm briefly using $\tilde J_{\rm DCE}$ as the objective function. The same can be done with $J_{\rm DCE}$ as well. Further details can be found in \cite{MChiang_GeometricProgramming} and \cite{Chang_Chi2010_DCE}.

For convenience, let us consider equivalently the minimization of the inverse of the objective function, i.e.,
\begin{equation}
\tilde J_{\rm DCE}^{-1}(P_r,\!P_f,\!P_{f_a},\!P_d,\!P_a)\!=\!
\frac
{[(P_d\!+\!P_a) (P_{f_a}\!\!+\!P_r)\!+\!P_rP_{f}][(P_d\!+\!P_a)P_{f_a}\!\!+\! P_aP_{f}(1\!-\!\epsilon')\!+\!P_dP_{f}]}{P_rP_{f}P_d[(P_d\!+\!P_a)P_{f_a}\!\!+\!P_aP_{f}(1\!-\!\epsilon')]}.
\end{equation}
Notice that the denominator of $\tilde J_{\rm DCE}^{-1}$ is a posynomial function that can be lower-bounded as
\begin{equation}
P_rP_{f}P_d[(P_d\!+\!P_a)P_{f_a}\!\!+\!P_aP_{f}(1\!-\!\epsilon')]\geq P_rP_{f}P_d\left(\frac{P_dP_{f_a}}{\xi_1}\right)^{\xi_1}
\left(\frac{P_aP_{f_a}}{\xi_2}\right)^{\xi_2}
\left(\frac{(1-\epsilon') P_aP_f}{\xi_3}\right)^{\xi_3}
\end{equation}
for any $\xi_1$, $\xi_2$, $\xi_3\geq 0$, where the right-hand-side is a monomial function. By substituting the term with its monomial lower bound, we obtain a standard GP problem that is solvable in polynomial time. In the SCA algorithm,
this is done iteratively until the solution converges. In particular, suppose that $(P_r^{(i-1)},P_f^{(i-1)},P_{f_a}^{(i-1)},P_d^{(i-1)},P_a^{(i-1)})$ is the solution obtained in the $(i-1)$-th iteration. Then, in the $i$-th iteration, the denominator of $J_{\rm DCE}^{-1}$ is replaced by the monomial function
\begin{equation}
P_rP_{f}P_d\left(\frac{ P_dP_{f_a}}{\xi_1^{(i)}}\right)^{\xi_1^{(i)}}
\left(\frac{ P_aP_{f_a}}{\xi_2^{(i)}}\right)^{\xi_2^{(i)}}
\left(\frac{(1-\epsilon') P_aP_f}{\xi_3^{(i)}}\right)^{\xi_3^{(i)}},
\end{equation}
where $\xi_0^{(i)}=P_d^{(i-1)}P_{f_a}^{(i-1)}+P_a^{(i-1)}P_{f_a}^{(i-1)}+(1-\epsilon') P_a^{(i-1)}P_f^{(i-1)}$, $\xi_1^{(i)}=P_d^{(i-1)}P_{f_a}^{(i-1)}/\xi_0^{(i)}$, $\xi_2^{(i)}=P_a^{(i-1)}P_{f_a}^{(i-1)}/\xi_0^{(i)}$, and $\xi_3^{(i)}=(1-\epsilon') P_a^{(i-1)}P_f^{(i-1)}/\xi_0^{(i)}$. The algorithm is guaranteed to converge to a stationary point of the problem \cite{MChiang_GeometricProgramming}. The procedures are summarized in Algorithm \ref{DCE algorithm} and the resulting solution is denoted by $\hat\calP^*=(\hat P_r^*, \hat P_{f}^*, \hat P_{f_a}^*, \hat P_{d}^*, \hat P_{a}^*)$.


\begin{algorithm}[h]
\caption{Power Allocation for AN-Assisted Training and Data Transmission}
\begin{algorithmic}[1] \label{DCE algorithm}
\STATE {\bf Initialize:} Give an initial set of feasible values $\left(P_r^{(0)},P_f^{(0)},P_{f_a}^{(0)},P_d^{(0)},P_a^{(0)}\right)$ and a convergence threshold $\epsilon_0 > 0$. Set iteration number $i:=0$.

\REPEAT

\STATE $i:=i+1$;

\STATE Set $\xi_0^{(i)}=P_d^{(i-1)}P_{f_a}^{(i-1)}+P_a^{(i-1)}P_{f_a}^{(i-1)}+(1-\epsilon') P_a^{(i-1)}P_f^{(i-1)}$, $\xi_1^{(i)}=P_d^{(i-1)}P_{f_a}^{(i-1)}/\xi_0^{(i)}$, $\xi_2^{(i)}=P_a^{(i-1)}P_{f_a}^{(i-1)}/\xi_0^{(i)}$, and $\xi_3^{(i)}=(1-\epsilon') P_a^{(i-1)}P_f^{(i-1)}/\xi_0^{(i)}$.

\STATE Find $\left(P_r^{(i)},P_f^{(i)},P_{f_a}^{(i)},P_d^{(i)},P_a^{(i)}\right)$ by solving the GP problem
\begin{align*}
\min \limits_{P_r, P_f, P_{f_a}, P_d, P_a}&~~
\frac
{[(P_d\!+\!P_a) (P_{f_a}\!\!+\!P_r)\!+\!P_rP_{f}][(P_d\!+\!P_a)P_{f_a}\!\!+\! P_aP_{f}(1\!-\!\epsilon')\!+\!P_dP_{f}]}{ P_rP_{f}P_d(P_dP_{f_a}/\xi_1^{(i)})^{\xi_1^{(i)}}
({P_aP_{f_a}}/{\xi_2^{(i)}})^{\xi_2^{(i)}}
[{(1-\epsilon') P_aP_f}/{\xi_3^{(i)}}]^{\xi_3^{(i)}}}\\
\mbox{subject to}~&~~~P_r>0, P_{f}>0, P_{f_a}>0, P_d>0, P_a>0,\\
& ~~~P_r T_r + P_fT_f+P_{f_a} T_f+P_d T_d+P_a T_d= PT.
\end{align*}

\UNTIL $
   \frac{J_{\rm DCE}^{-1}\left(P_r^{(i)},P_f^{(i)},P_{f_a}^{(i)},P_d^{(i)},P_a^{(i)}\right)
   -J_{\rm DCE}^{-1}\left(P_r^{(i-1)},P_f^{(i-1)},P_{f_a}^{(i-1)},P_d^{(i-1)},P_a^{(i-1)}\right) }
   {J_{\rm DCE}^{-1}\left(P_r^{(i-1)},P_f^{(i-1)},P_{f_a}^{(i-1)},P_d^{(i-1)},P_a^{(i-1)}\right) }  < \epsilon_0
$.

\STATE {\bf Output} $(\hat P_r^*, \hat P_{f}^*, \hat P_{f_a}^*, \hat P_{d}^*, \hat P_{a}^*)=\left(P_r^{(i)},P_f^{(i)},P_{f_a}^{(i)},P_d^{(i)},P_a^{(i)}\right)$ .
\end{algorithmic}
\end{algorithm}

\subsection{Comparison with the Conventional Training Case}

It is worthwhile to remark that, compared to the conventional training scheme in the previous section, the DCE scheme requires an additional symbol period in the training phase for reverse training. This results in a smaller pre-log factor and, thus, a significant loss in secrecy rate at high SNR. However, in the following corollary, we show that the DCE scheme can always achieve a higher secrecy rate as long as the coherence time is sufficiently long.



\begin{Cor}\label{lem sufficient condition}
Let $\hat\calP^*_{\rm conv}$ be the solution given in \eqref{optimal power ratio RConv}. Then, for $P$ and $n_t$ sufficiently large, there exists $\calP=(P_r, P_{f_a}, P_f, P_d, P_a)$ such that $\tRANboth(\calP) > \tRANdata(\hat\calP^*_{\rm conv})$ if
\begin{align}\label{sufficient condition of T}
T\geq \max\left\{\frac{(4n_t+10)^2}{n_t}, 22\log_{10}\left(\frac{P\sh n_t}{ 4\sigma^2}\right)+1\right\}+n_t.
\end{align}
\end{Cor}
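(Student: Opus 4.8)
The plan is to prove this existence claim by exhibiting a single convenient DCE allocation $\calP$ and bounding both sides of the claimed inequality in closed form, rather than by solving the nonconvex program \eqref{High SNR DCE optimization problem_2}. Since the statement only asserts that \emph{some} $\calP$ works, I am free to choose it so as to decouple the two effects DCE exploits: I would place a constant fraction of the energy budget into the reverse training power $P_r$ and a large fraction into the forward AN power $P_{f_a}$, split the remainder between the forward pilot $P_f$ and the data signal $P_d$, and keep the data-phase AN power $P_a$ just positive enough for feasibility. The rationale is read directly off \eqref{RDCE approximation final ver}: dividing the numerator and denominator of the destination term by $(\sigma^2+P_r\sh)$ (as done between \eqref{high SNR RDCE approximation} and \eqref{RDCE approximation final ver}) shows that a large $P_r$ forces the leakage factor $P_{f_a}\sh/(\sigma^2+P_r\sh)$ to zero, so the forward AN does not degrade the destination; at the same time a large $P_{f_a}$ drives the subtracted eavesdropper term toward $\log(1+P_f/P_{f_a})\approx 0$, since $P_{f_a}\sigma_g^2$ then dominates its denominator.

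First I would substitute this allocation into the explicit lower bound \eqref{RDCE approximation final ver} to obtain a bound of the schematic form
\begin{equation*}
\tRANboth(\calP)\;\geq\;\frac{T_d}{T}\,\ee\!\left[\log\frac{\sh\effh}{\sigma^2}\right]+\frac{T_d}{T}\log\!\big(c_1 P\big)-\delta_{n_t,\epsilon'},
\end{equation*}
where $T_d=T-T_r-T_f=T-1-n_t$, the constant $c_1$ is fixed by the chosen power fractions, and $\delta_{n_t,\epsilon'}$ collects the Jensen gap, the term $\epsilon_{n_t}'$, and the $o(1)$ residue, all controllable for $n_t$ large. Next I would evaluate the conventional rate at $\hat\calP^*_{\rm conv}$ through \eqref{RABF approx used in optimization prob} and \eqref{optimal power ratio RConv}; using $P_d^*=P_a^*$ and $(P_d^*+P_a^*)/P_f^*=\sqrt{T_f/T_d}$ gives an upper bound
\begin{equation*}
\tRANdata(\hat\calP^*_{\rm conv})\;\leq\;\frac{T-n_t}{T}\,\ee\!\left[\log\frac{\sh\effh}{\sigma^2}\right]+\frac{T-n_t}{T}\log\!\big(c_2 P\big)-\frac{T-n_t}{T}\,\ee\big[\log(1+\effg)\big]+\delta'_{n_t},
\end{equation*}
in which the eavesdropper penalty $\ee[\log(1+\effg)]$ survives as a fixed positive constant precisely because the conventional scheme is forced to spend equal power on the data signal and on data-phase AN.

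Forming the difference $\tRANboth(\calP)-\tRANdata(\hat\calP^*_{\rm conv})$, the two $\ee[\log(\sh\effh/\sigma^2)]$ terms cancel up to the pre-log mismatch of order $\tfrac1T$, the conventional eavesdropper penalty $\ee[\log(1+\effg)]$ becomes a net constant gain for DCE, and the whole cost of DCE appears as a single pre-log penalty of order $\tfrac1T\log(c\,P)$ coming from the one extra reverse-training symbol (together with the bounded allocation overhead $\log(c_2/c_1)$). Positivity therefore requires the DCE constant gain to exceed this $\tfrac1T\log P$-type penalty; demanding that $T$ dominate a fixed multiple of $\log_{10}\!\big(P\sh n_t/4\sigma^2\big)$ makes the penalty small enough, which is exactly the second argument of the maximum in \eqref{sufficient condition of T}. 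The first argument $(4n_t+10)^2/n_t$ arises from insisting that the data phase be long enough for $T_d/T$ and $(T-n_t)/T$ to be near one and for the conventional loss term $\log(\sqrt{T_f/T_d}+1)$ to be controlled, while the additive $n_t$ simply absorbs the training overhead $T_r+T_f=1+n_t$.

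The main obstacle is not this qualitative comparison but the quantitative bookkeeping of constants. To recover the closed-form threshold in \eqref{sufficient condition of T} one must replace every asymptotic symbol by an explicit estimate: the $o(1)$ residues in \eqref{high SNR RDCE approximation}--\eqref{RDCE approximation final ver}, the Jensen gap, the tail term $\epsilon_{n_t}'$ controlled through $\Pr(A^c_{\epsilon'})$, and the expectations $\ee[\log(\sh\effh/\sigma^2)]$ and $\ee[\log(1+\effg)]$ must all be bounded with explicit dependence on $n_t$. The delicate part is choosing the power fractions of $\calP$ so that these bounds combine into the particular constants $4$, $10$, and $22$ in the statement; any looser choice still establishes existence but yields a weaker threshold, so these numbers reflect one optimized chain of inequalities rather than anything canonical.
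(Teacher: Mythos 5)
Your overall architecture is the same as the paper's: the paper also proves the corollary by exhibiting a single witness allocation, substituting it into the lower bound \eqref{RDCE approximation final ver}, upper-bounding $\tRANdata(\hat\calP^*_{\rm conv})$ from \eqref{RABF approx used in optimization prob} via concentration of $\ngg/(n_t-1)$, and showing the difference is positive under \eqref{sufficient condition of T}. Where you diverge is the witness itself. The paper takes $\calQ^\sharp$ to be a perturbation of $\hat\calP^*_{\rm conv}$: it moves a fraction $\gamma=1/2$ of the forward-training energy into $Q_r$ and $Q_{f_a}$ while keeping $Q_f=(1-\gamma)\hat P_f^*$, $Q_d=\hat P_d^*$ and, crucially, $Q_a=\hat P_a^*=\Omega(P)$, so the data-phase AN is retained and the DCE eavesdropper penalty is only reduced from $\log 2$ to $\log(2-\gamma)=\log(3/2)$. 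You instead propose to drive $P_a$ to (near) zero and rely on a constant energy fraction in $P_{f_a}$ with $P_r\gg P_{f_a}$. That choice is admissible for using \eqref{RDCE approximation final ver} (you keep $P_{f_a}=\Omega(P)$ and $P_r=\Omega(P_{f_a})$, which is the scaling the paper checks before substituting $\calQ^\sharp$), and heuristically it may even yield a larger constant gain; but it is not the paper's witness and it does not automatically inherit the paper's arithmetic.

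That is where the genuine gap lies: the content of this corollary is the \emph{specific} threshold \eqref{sufficient condition of T}, and every constant in it is an artifact of the paper's particular chain of inequalities for $\calQ^\sharp$ with $\gamma=1/2$ --- the first argument $(4n_t+10)^2/n_t$ comes from forcing $12n_t+2\sqrt{(T-n_t)n_t}\geq 20(1+n_t)$ so that the destination-term ratio is at least $10/9$; the factor $22$ is $\lceil 1/\log_{10}(10/9)\rceil$; and the $4$ inside $\log_{10}(P\sh n_t/4\sigma^2)$ comes from bounding $4(\sqrt{n_t}+\sqrt{T-n_t})^2\geq 4T$ together with Jensen applied to $\ee[\log\effh]\leq\log n_t$. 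You explicitly leave this bookkeeping undone and concede that your witness might only produce ``a weaker threshold.'' A weaker (larger) threshold proves a strictly weaker statement, not the corollary as stated, since the claim must hold for \emph{every} $T$ satisfying \eqref{sufficient condition of T}. A secondary quantitative discrepancy: the paper lower-bounds the conventional eavesdropper term by $\log\bigl(1+\hat P_d^*(1-\epsilon)/(\hat P_a^*(1+\epsilon))\bigr)\approx\log 2$ using $\hat P_d^*=\hat P_a^*$, whereas your upper bound credits the conventional scheme with the penalty $\ee[\log(1+\effg)]$, a different (smaller) constant; this again feeds into the final numbers. So the proposal is a sound plan that would establish ``DCE wins for $T$ sufficiently large,'' but as written it does not establish the stated sufficient condition; to close it you must either carry out the explicit estimates for your witness and verify they imply \eqref{sufficient condition of T}, or switch to the paper's witness $\calQ^\sharp$ for which those estimates are already arranged to produce exactly these constants.
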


\vspace{.3cm}

The proof can be found in Appendix \ref{sec. proof sufficient condition DCE is better}. Corollary \ref{lem sufficient condition} implies that the DCE scheme can outperform conventional training when $T$ is sufficiently large, even though an additional channel use is occupied by reverse training. This is because, with DCE training, the effective channel qualities at the destination and the eavesdropper are already well-discriminated in the training phase and thus a larger portion of energy can be allocated to data rather than AN in the data transmission. Therefore, the achievable secrecy rate of the DCE scheme increases faster than that of conventional training as the coherence time increases.
Note that Corollary \ref{lem sufficient condition} provides only a sufficient condition on the coherence time $T$. The advantage of DCE can actually be observed for much smaller values of $T$ as shown in our simulations.


\section{Secrecy Rate in the Low SNR Regime}\label{Sec.low power regime}

In this section, we examine the achievable secrecy rate and the corresponding optimal power allocation in the low SNR regime, i.e., in the case where $\sigma^2\rightarrow \infty$.

Let ${\bf u}_r(\hat\bh)\triangleq \sqrt{P_d}\bs_d \|\hat\bh\|+\sqrt{P_d}\bs_d\frac{\hat\bh^\dagger}{\|\hat\bh\|}\Delta\bh+\bA_d\bN_{\hat \bh}\Delta\bh$ be the summation of all terms other than the AWGN in $\by_d$ of \eqref{eq data_rx_LR}. Then,
we have
\begin{align}
I(\sd;\by_d|\hat \bh)&=\int_{\underline{\hat h}}f(\underline{\hat h}) I(\sd;\by_d|\hat \bh=\underline{\hat h})d\underline{\hat h}\\
&=\int_{\underline{\hat h}} f(\underline{\hat h}) I(\sd;\mathbf{u}_r(\underline{\hat h})+\bv_d)d\underline{\hat h}\\
&=\int_{\underline{\hat h}} f(\underline{\hat h}) \left[\frac{\log e}{\sigma^2}G(\sd,\mathbf{u}_r(\underline{\hat h}))+\frac{\log e}{2\sigma^{4}}\Delta(\sd,\mathbf{u}_r(\underline{\hat h}))+o(\sigma^{-4})\right]d\underline{\hat h},\label{mutual info low SNR approx 1}
\end{align}
where $G(\mathbf{x},\mathbf{y})\triangleq\ee\left[\|\ee[\mathbf{y}|\mathbf{x}]-\ee[\mathbf{y}]\|^2\right]$ and $\Delta(\mathbf{x};\mathbf{y}) \triangleq \tr \{\ee\left[{\rm cov}^2(\mathbf{y}|\mathbf{x})\right]-{\rm cov}^2(\mathbf{y})\}$. The equality in \eqref{mutual info low SNR approx 1} follows from the asymptotic expression of the mutual information given in \cite[Theorem $1$]{VVPrelov_Verdu2004_LowSNR}.
By direct calculation of $G(\sd,\mathbf{u}_r(\underline{\hat h}))$ and $\Delta(\sd,\mathbf{u}_r(\underline{\hat h}))$, and by taking the expectation over $\underline{\hat h}$, it can be shown that
\begin{align}\label{low power mutual LR}
I(\sd;\by_d|\hat \bh)
&=\frac{\log e}{\sigma^4}\left(P_dT_dP_fT_f\shh+P_d^2\sdhh T_d^2/2\right)+o(\sigma^{-4}).
\end{align}
Similarly, we have
\begin{align}\label{low power mutual UR}
I(\sd;\bz_d|\hat \bh, \bg)
&=\frac{\log e}{\sigma^4}\left(P_dT_dP_fT_f\sgg/n_t+P_d^2\sdgg T_d^2/2\right)+o(\sigma^{-4}).
\end{align}
Notice that the first term in \eqref{low power mutual LR} is larger than that in \eqref{low power mutual UR} by a factor of $n_t$ due to the processing gain provided by transmit beamforming. By combining the above,
we obtain the following result.
\begin{Theo}
In the low SNR regime, the achievable secrecy rate of the training-based transmission schemes is
\begin{align}\label{low power secrecy rate}
R_s=\frac{\log e}{T\sigma^4}\left(P_dT_dP_fT_f(\shh-\sgg/n_t)+P_d^2 T_d^2(\sdhh-\sdgg)/2\right)+o(\sigma^{-4}).
\end{align}
\end{Theo}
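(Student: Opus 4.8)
Since the two second-order mutual-information expansions \eqref{low power mutual LR} and \eqref{low power mutual UR} have already been established in the preceding derivation, the theorem itself follows by a single substitution into $R_s=\frac{1}{T}[I(\sd;\yd|\hat\bh)-I(\sd;\zd|\hat\bh,\hat\bg)]$ from \eqref{eq rate_1}: subtracting \eqref{low power mutual UR} from \eqref{low power mutual LR}, dividing by $T$, and collecting the $o(\sigma^{-4})$ remainders reproduces \eqref{low power secrecy rate} verbatim. The substance of the argument therefore lies in obtaining those two expansions, and that is where I would focus.

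The plan is to apply the second-order low-SNR expansion of \cite{VVPrelov_Verdu2004_LowSNR} to each conditional mutual information as $\sigma^2\to\infty$. Writing $\yd=\mathbf{u}_r(\hat\bh)+\bv_d$ and decomposing $\zd$ analogously, I would condition on a realization $\hat\bh=\underline{\hat h}$ (and $\hat\bg$ for the eavesdropper) and invoke \eqref{mutual info low SNR approx 1}, reducing everything to the two functionals $G(\mathbf{x},\mathbf{y})=\ee[\|\ee[\mathbf{y}|\mathbf{x}]-\ee[\mathbf{y}]\|^2]$ and $\Delta(\mathbf{x};\mathbf{y})=\tr\{\ee[{\rm cov}^2(\mathbf{y}|\mathbf{x})]-{\rm cov}^2(\mathbf{y})\}$, followed by averaging over the channel estimates. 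For the mean-shift term $G$, only the beamformed data component $\sqrt{P_d}\sd\|\hat\bh\|$ in $\mathbf{u}_r$ responds to conditioning on $\sd$, giving $G=P_d\ee[\|\sd\|^2]\|\underline{\hat h}\|^2$ with $\ee[\|\sd\|^2]=T_d$; at the eavesdropper the corresponding coefficient is the scalar projection $\hat\bh^\dagger\hat\bg/\|\hat\bh\|$, whose averaged squared magnitude is $\sg-\sdg$ and carries no transmit-array gain, in contrast to $\ee[\|\hat\bh\|^2]=n_t(\sh-\sdh)$ at the destination --- this asymmetry is precisely the source of the $1/n_t$ factor separating \eqref{low power mutual UR} from \eqref{low power mutual LR}. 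The dispersion term $\Delta$ requires the fourth moments of the effective (estimation-error plus AN) noise and contributes the $P_d^2\sdhh T_d^2/2$ and $P_d^2\sdgg T_d^2/2$ terms.

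The key order-counting point that I would make explicit is that, although $G$ enters \eqref{mutual info low SNR approx 1} at order $\sigma^{-2}$ and $\Delta$ at order $\sigma^{-4}$, both ultimately land at order $\sigma^{-4}$ in the secrecy rate. The effective channel energies themselves vanish as $\sigma^{-2}$, since $\sh-\sdh=O(\sigma^{-2})$ and $\sg-\sdg=O(\sigma^{-2})$ in the low-SNR regime; hence $\frac{\log e}{\sigma^2}\ee[G]$ becomes an order-$\sigma^{-4}$ quantity, and with $T_f=n_t$ and $\sh^2=\shh$ it collapses exactly to $\frac{\log e}{\sigma^4}P_dT_dP_fT_f\shh$ (and its analogue with $\sgg/n_t$). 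By contrast $\sdh\to\sh$ and $\sdg\to\sg$ remain $O(1)$, so $\sdhh$ and $\sdgg$ are $O(1)$ and the $\Delta$ contribution is genuinely $\sigma^{-4}$. Averaging over the Gaussian estimates $\hat\bh,\hat\bg$ then produces \eqref{low power mutual LR} and \eqref{low power mutual UR}.

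I expect the main obstacle to be the explicit evaluation of $\Delta(\sd,\mathbf{u}_r(\underline{\hat h}))$. Because the noise beyond the AWGN is a non-Gaussian mixture of the estimation-error term $\sqrt{P_d}\sd(\hat\bh^\dagger/\|\hat\bh\|)\Delta\bh$ and the AN term $\bA_d\bN_{\hat\bh}\Delta\bh$, computing $\ee[{\rm cov}^2(\yd|\sd)]$ demands careful fourth-moment bookkeeping, and one must verify that the cross terms among data, estimation error, and AN neither contribute spurious $\sigma^{-4}$ terms nor interfere with the subsequent averaging over $\underline{\hat h}$.
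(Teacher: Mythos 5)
Your proposal is correct and follows essentially the same route as the paper: the theorem itself is just the difference of the two second-order expansions divided by $T$, and the substance lies in applying the Prelov--Verd\'u expansion, evaluating $G$ and $\Delta$ for $\mathbf{u}_r(\underline{\hat h})$, and averaging over the channel estimates, exactly as the paper does in the text preceding the theorem. Your order-counting observation --- that the $G$ term enters at $\sigma^{-2}$ but lands at $\sigma^{-4}$ because $\sh-\sdh$ and $\sg-\sdg$ are themselves $O(\sigma^{-2})$, while $\sdhh,\sdgg=O(1)$ feed the $\Delta$ term --- and your identification of the missing array gain at the eavesdropper as the source of the $1/n_t$ factor are precisely the points the paper relies on.
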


\vspace{.2cm}

Notice that the above secrecy rate does not depend on the AN power $P_a$ in the data transmission phase, and that $\sdh\rightarrow \sh$ and $\sdg\rightarrow \sg$ as $\sigma^2\rightarrow\infty$ regardless of the AN power $P_{fa}$ in the training phase. Hence, the same asymptotic secrecy rate can be achieved even without the use of AN
and, thus, all power can be allocated to the transmission of either the pilot or the data signals.
However, it should be noted that the secrecy rate in \eqref{low power secrecy rate} decays as $1/\sigma^4$ which is much worse than that achievable when the noncoherent transmission scheme, previously proposed in \cite{CRao_Hassibi2004_LowSNR} for conventional point-to-point channels (without secrecy constraints), is employed. In fact, by directly applying the transmission scheme in \cite{CRao_Hassibi2004_LowSNR} to the wiretap channel model under consideration, we can achieve a secrecy rate that  decays only as $1/\sigma^2$. This is because the secrecy beamforming and AN-assisted training and data transmission schemes considered in this paper all rely on accurate channel knowledge, which is difficult to obtain at low SNR, whereas the noncoherent transmission scheme in \cite{CRao_Hassibi2004_LowSNR} does not. This shows that one can actually do better without training in the low SNR regime.


\section{Numerical Results}\label{sec.simulation}

\begin{figure}[t]
\begin{center}
  \includegraphics[width=12cm]{./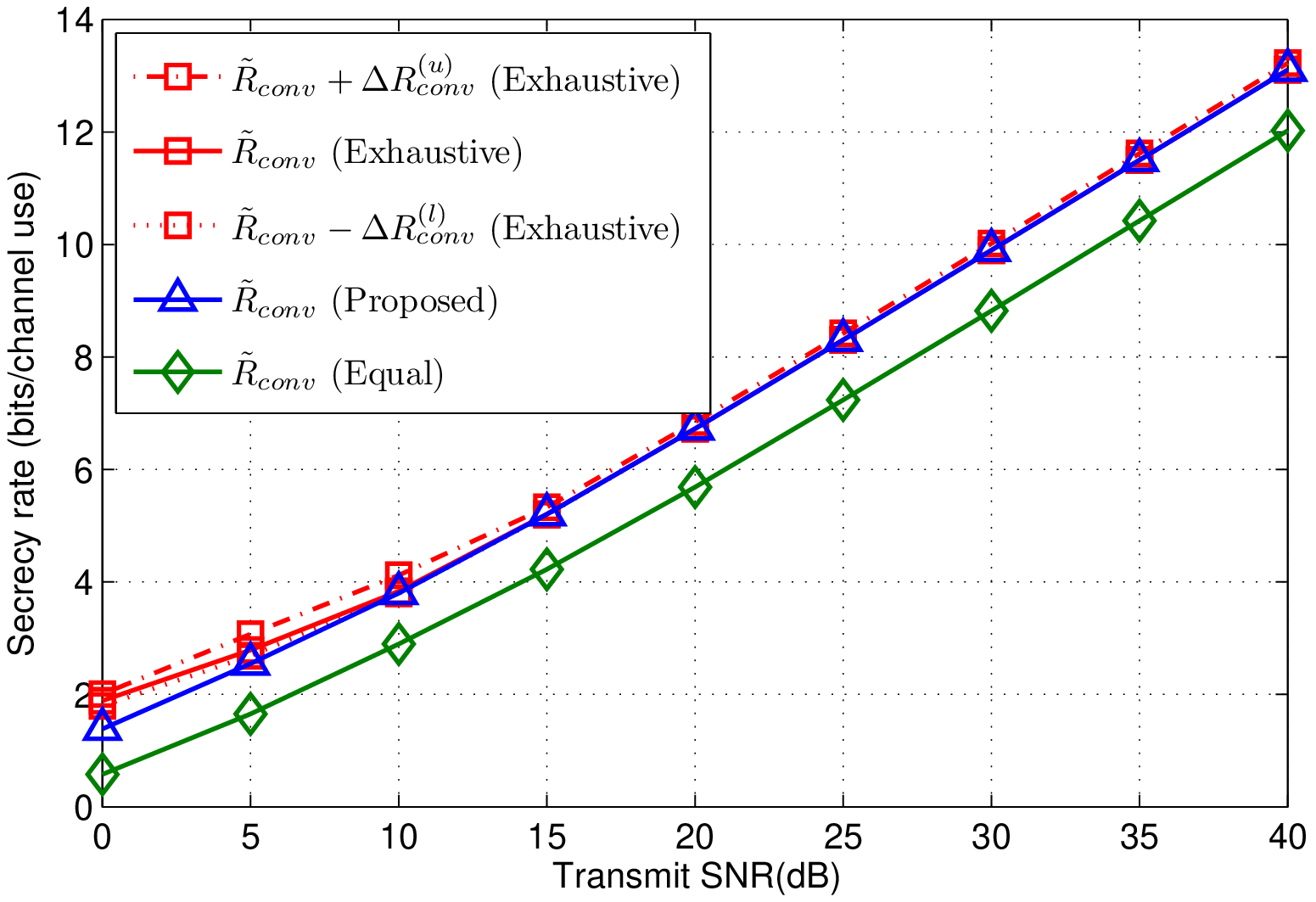}\\
    \vspace{-0.1in}
  \caption{The achievable secrecy rate $\tRANdata$ with different power allocations versus SNR.
   }\label{fig:highAN}
\end{center}
\vspace{-0.2in}
\end{figure}

In this section, we verify numerically our theoretical claims and compare the achievable secrecy rates of different training and power allocation schemes. Unless mentioned otherwise, the number of antennas at the source is $n_t=16$, the coherence interval is $T=480$, the forward training length is $T_f=16$, and the reverse training length is $T_r=1$ (when considering the DCE scheme). The transmit SNR is defined as $P/\sigma^2$ and the channel variances are
$\sh=\sg=0.5$.

In Fig. \ref{fig:highAN}, we show the approximate achievable secrecy rate $\tilde R_{\rm conv}(\hat\calP^*_{\rm conv})$ of the conventional training case with $\hat\calP^*_{\rm conv}$ being the proposed power allocation given in \eqref{optimal power ratio RConv} (labeled as ``$\tilde R_{\rm conv} \text{ (Proposed)}$'') and compare it with the maximum value $\max\limits_{\calP}\tilde R_{\rm conv}(\calP)$ obtained via exhaustive search (i.e., ``$\tilde R_{\rm conv} \text{ (Exhaustive)}$''). We can see that the approximate solution given in \eqref{optimal power ratio RConv} is indeed near optimal at high SNR and yields about $4$ dB improvement over the case with equal power allocation among all components (i.e., ``$\tilde R_{\rm conv} \text{ (Equal)}$''). Moreover, by comparing $\tilde R_{\rm conv}(\hat\calP^*_{\rm conv})$ with the optimized upper and lower bounds $\max\limits_{\calP} \{\tilde R_{\rm conv}(\calP)+\Delta R_{\rm conv}^{(u)}(\calP)\}$ and $\max\limits_{\calP} \{\tilde R_{\rm conv}(\calP)-\Delta R_{\rm conv}^{(l)}(\calP)\}$, respectively, (i.e., ``$\tilde R_{\rm conv}+\Delta R_{\rm conv}^{(u)} \text{ (Exhaustive)}$'' and ``$\tilde R_{\rm conv}-\Delta R_{\rm conv}^{(l)} \text{ (Exhaustive)}$''), we can also see that the approximate secrecy rate expression $\tilde R_{\rm conv}(\hat\calP^*_{\rm conv})$ indeed closely approximates the maximum achievable secrecy rate $R_{\rm conv}(\calP^*_{\rm conv})$ (i.e., Theorem \ref{lem small_term}), where $\calP^*_{\rm conv}$ is the power allocation that maximizes $R_{\rm conv}$,
since $\max\limits_{\calP} \{\tilde R_{\rm conv}(\calP)-\Delta R_{\rm conv}^{(l)}(\calP)\}\leq
R_{\rm conv}(\calP^*_{\rm conv})
\leq \max\limits_{\calP} \{\tilde R_{\rm conv}(\calP)+\Delta R_{\rm conv}^{(u)}(\calP)\}$ and $\tilde R_{\rm conv}(\hat\calP^*_{\rm conv})\approx\max\limits_{\calP} \{\tilde R_{\rm conv}(\calP)-\Delta R_{\rm conv}^{(l)}(\calP)\}\approx \max\limits_{\calP} \{\tilde R_{\rm conv}(\calP)+\Delta R_{\rm conv}^{(u)}(\calP)\}$ at high SNR, as shown in  Fig. \ref{fig:highAN}.

\begin{figure}[t]
\begin{center}
  \includegraphics[width=12cm]{./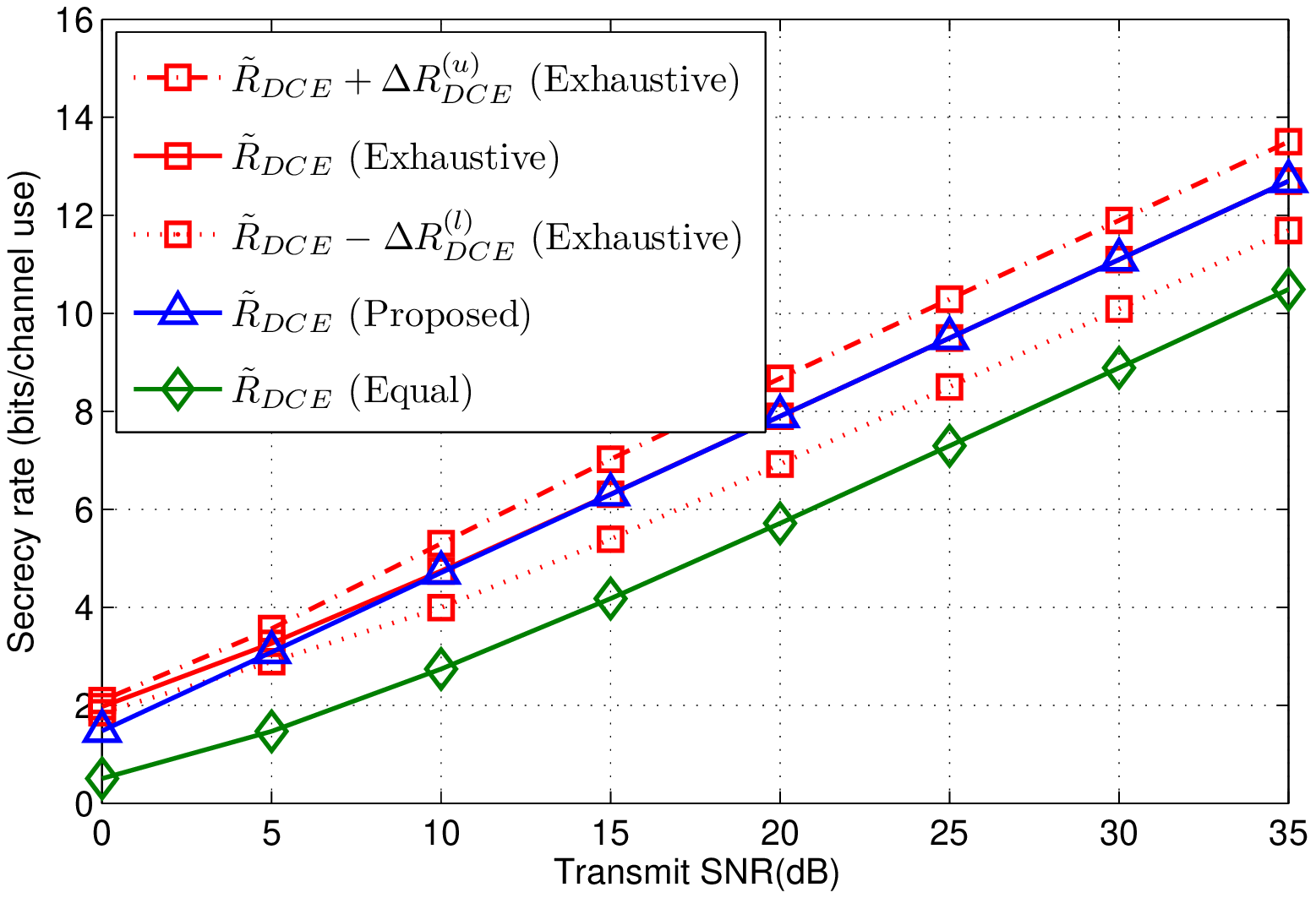}\\
    \vspace{-0.1in}
  \caption{The achievable secrecy rate $\tRANboth$ with different power allocations versus SNR.
   }\label{fig:R_DCE}
\end{center}
\vspace{-0.2in}
\end{figure}

In Fig. \ref{fig:R_DCE}, we show the approximate achievable secrecy rate $\tilde R_{\rm DCE}(\hat\calP^*_{\rm DCE})$ of the DCE training case with $\hat\calP^*_{\rm DCE}$ being the proposed solution obtained by Algorithm \ref{DCE algorithm}  (i.e. ``$\tilde R_{\rm DCE} \text{ (Proposed)}$'') and compare it with the maximum value $\max\limits_{\calP}\tilde R_{\rm DCE}(\calP)$ obtained via exhaustive search  (i.e., ``$\tilde R_{\rm DCE} \text{ (Exhaustive)}$''). Again, the secrecy rate obtained with the proposed solution rapidly converges towards the maximum value obtained via exhaustive search as the transmit SNR increases.  A $7$ dB improvement is also observed when compared to the case with equal power allocation. Moreover, since the optimized upper and lower bounds $\max\limits_{\calP} \{\tilde R_{\rm DCE}(\calP)+\Delta R_{\rm DCE}^{(u)}(\calP)\}$ and $\max\limits_{\calP} \{\tilde R_{\rm DCE}(\calP)-\Delta R_{\rm DCE}^{(l)}(\calP)\}$ maintains a constant difference as the transmit SNR increases and, by Fig. \ref{fig:R_DCE}, $\tilde R_{\rm DCE}(\hat\calP^*_{\rm DCE})$ maintains between the two bounds,  it follows that the difference between the approximate and the actual rates, i.e., $\tilde R_{\rm DCE}(\hat\calP^*_{\rm DCE})-R_{\rm DCE}(\calP^*_{\rm DCE})$, where $\calP^*_{\rm DCE}$ is the power allocation that maximizes $R_{\rm DCE}$, becomes negligible compared to $R_{\rm DCE}(\calP_{\rm DCE}^*)$.


\begin{figure}[t]
\begin{center}
  \includegraphics[width=12cm]{./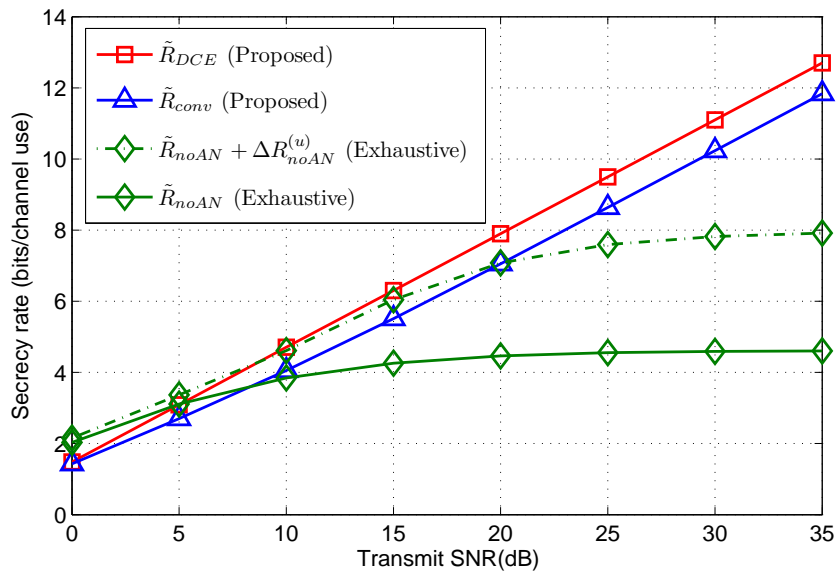}\\
    \vspace{-0.1in}
  \caption{The achievable secrecy rate with different schemes versus SNR.
   }\label{fig:DCE}
\end{center}
\vspace{-0.2in}
\end{figure}

In Fig. \ref{fig:DCE}, we compare the (approximate) achievable secrecy rate of different transmission schemes, namely, the case with conventional training (i.e., the case where AN is utilized only in the data transmission phase), the case with DCE training, and the case where no AN is used in either training or data transmission. Recall that $T_d=T-T_f-T_r$ where $T_r=1$ in the case with DCE training and is $0$ otherwise. We can observe that DCE training yields the best performance even though an additional channel use is required for reverse training. Moreover, we can see that, when AN is not used in either training and data transmission, the achievable secrecy rate becomes bounded as the transmit SNR increases, regardless of whether we are looking at $\tilde R_{\rm noAN}$ or the upper bound $\tilde R_{\rm noAN}+\Delta R_{\rm noAN}^{(u)}$. This indicates that the use of AN is critical to achieve good secrecy rate performance in the high SNR regime.

%

\begin{figure}[t]
\begin{center}
  \includegraphics[width=12cm]{./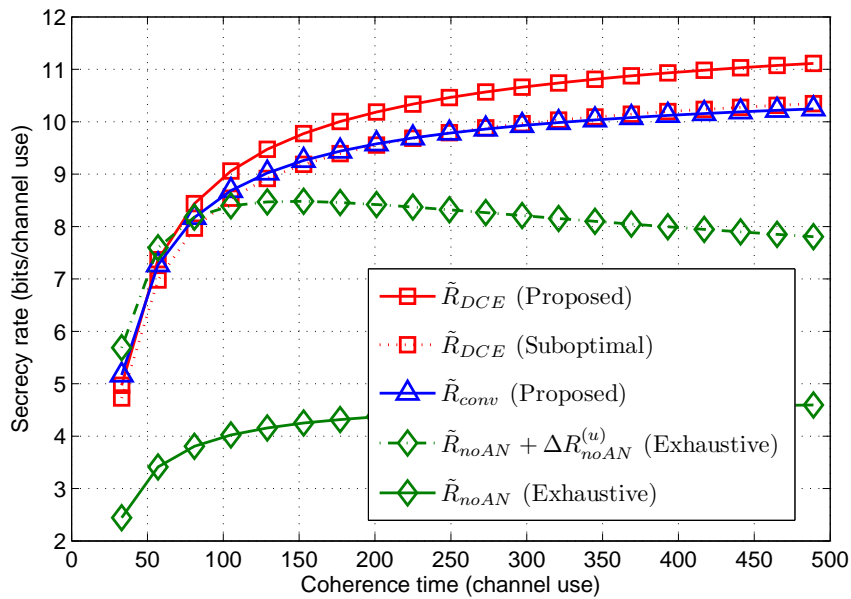}\\
  \vspace{-0.1in}
  \caption{The achievable secrecy rate with different schemes versus coherence time.
   }\label{fig:Rate with different schemes}
\end{center}
\vspace{-0.2in}
\end{figure}

In Fig. \ref{fig:Rate with different schemes}, we verify the effect of coherence time on the achievable secrecy rate of the different schemes. Here, the transmit SNR is fixed at $30$ dB. The DCE scheme with suboptimal power allocation refers to the power allocation used to prove the sufficient condition in Corollary \ref{lem sufficient condition}. The suboptimal solution performs significantly worse than the proposed solution, but was sufficient to yield the condition in Corollary \ref{lem sufficient condition}.
In fact, with the proposed power allocation, DCE is able to outperform conventional training with a coherence time of only $70$, which is
considerably smaller than the value $250$ required by the suboptimal power allocation. Yet, the latter is still smaller than the value $358.25$ predicted by Corollary \ref{lem sufficient condition}, where the result is more conservative.
Moreover, by comparing between ``$\tilde R_{\rm DCE}$ (Proposed)'' and ``$\tilde R_{\rm conv}$ (Proposed)'', we can also see that the advantage of utilizing AN in the training phase increases as the coherence time increases. This is because, by applying AN in the training phase, we can allocate less energy to AN in the data transmission phase and, thus, more energy to the actual message-bearing signal.



\section{Conclusion}\label{sec.conclusion}

In this paper, we examined the impact of both conventional and AN-assisted training on the achievable secrecy rate of the AN-assisted secrecy beamforming scheme. Bounds on the achievable secrecy rate were first derived and then utilized to obtain a closed-form approximation that is shown to be asymptotically tight at high SNR. The approximate expression was then adopted as the objective function to determine the power allocation between pilot signals, data signals, and AN in both training and data transmission phases. An asymptotically optimal closed-form solution was obtained for the case with conventional training whereas a successive convex approximation approach was proposed for the case with DCE training. Furthermore, in the low SNR regime, we showed that AN  provides no gains in secrecy rate  and, thus, is not needed   in either training or data transmission. Numerical simulations were provided to verify the tightness of the bounds and the advantages of DCE over conventional training.



\begin{appendices}
\section{Proof of Theorem \ref{lem upper lower}}\label{sec.proof of bounds}

Here, we first derive upper and lower bounds of $I(\sd;\yd| \hat \bh)$ and $I(\sd;\zd| \hat \bh, \hat \bg)$, and apply them directly to obtain the desired bounds for $R$, which is the difference of the two quantities.
The derivations of the upper and lower bounds are shown only for
$I(\sd;\yd| \hat \bh)$ whereas that of
$I(\sd;\zd| \hat \bh, \hat \bg)$ can be obtained similarly.

\subsection{Lower Bound of $I(\sd;\yd| \hat \bh)$}

To derive the lower bound of $I(\sd;\yd| \hat \bh)$, let us write
\begin{equation}\label{eq.Isd_yd}
I(\sd;\yd| \hat \bh)=h(\sd|\hat \bh)-h(\sd|\yd,\hat \bh),
\end{equation}
where $h(\sd|\hat\bh)=h(\sd)=T_d\log(\pi e)$ and $h(\sd|\yd,\hat\bh)\leq \ee_{\hat \bh,\yd}\left[\log\left((\pi e)^{\td}\left| \mathbf{C}_{\sd|\yd,\hat \bh}\right|\right)\right]$ since Gaussian maximizes entropy. Here, $\mathbf{C}_{\mathbf{a}|\mathbf{b}}$ represents the covariance matrix of $\mathbf{a}$ given $\mathbf{b}$, and $|\bA|$ represents the determinant of $\bA$. Moreover, for any estimate $\hat{\mathbf{s}}_d$ of $\sd$ given  $\yd$ and $\hat \bh$,  we have
$\mathbf{C}_{\sd|\yd,\hat \bh} \preccurlyeq \ee_{\sd|\yd,\hat \bh} \left[ (\sd -\hat{\mathbf{s}}_d)(\sd -\hat{\mathbf{s}}_d)^{\dag}\right]$,
where $A\preccurlyeq B$ denotes that $A-B$ is semi-negative definite, and thus
$\left|\mathbf{C}_{\sd|\yd,\hat \bh}\right| \leq \left|\ee_{\sd|\yd,\hat \bh} \left[ (\sd -\hat{\mathbf{s}}_d)(\sd -\hat{\mathbf{s}}_d)^{\dag}\right]\right|$.
Therefore, for $\hat{\mathbf{s}}_{d}^{L}=\mathbf{C}_{\sd\yd|\hat \bh}\mathbf{C}_{\yd|\hat \bh}^{-1}\yd$ (i.e., the LMMSE of $\sd$ given $\by_d$ while assuming that $\hat\bh$ is known), we have
\begin{align}
&\ee_{\hat \bh,\yd}\!\left[\log\left((\pi e)^{\td}\left| \mathbf{C}_{\sd|\yd,\hat \bh}\right|\right)\right]\notag\\
&\leq \ee_{\hat \bh}\left[\ee_{\yd|\hat \bh}\left[\log\left((\pi e)^{\td}\left| \ee_{\sd|\yd,\hat \bh}\left[(\sd-\hat{\mathbf{s}}_{d}^{L})(\sd-\hat{\mathbf{s}}_{d}^{L})^{\dag}\right]\right|\right)\right]\right]\\
&\leq \ee_{\hat \bh}\left[\log\left((\pi e)^{\td}\left| \mathbf{C}_{\sd|\hat \bh}-\mathbf{C}_{\sd\yd|\hat \bh}\mathbf{C}_{\yd|\hat \bh}^{-1}\mathbf{C}_{\yd\sd|\hat\bh} \right|\right)\right]\\
&= \ee_{\hat \bh}\!\left[\log\left((\pi e)^{\td}\left|\bI_{T_d}-\frac{P_d\|\hat \bh\|^2}{\pd\|\hat\bh\|^2+\pd\sdh+P_a\sdh+{\sigma^2}}\bI_{T_d} \right|\right)\right],\label{lower bound proof without AN 2}
\end{align}
where the last inequality follows from Jensen's inequality. Hence, by combining \eqref{eq.Isd_yd} and \eqref{lower bound proof without AN 2}, we have
\begin{equation}
I(\sd;\yd| \hat \bh)\geq
\td \ee_{\hat \bh}\left[ \log \left(1+\frac{\pd\hh}{\pd \sdh+P_a\sdh+{\sigma^2} }\right)\right].
\label{lower bound of mutual information of LR without AN}
\end{equation}

\subsection{Upper Bound of $I(\sd;\yd| \hat \bh)$}

To obtain the upper bound, we instead write
\begin{align} \label{upper bound proof without AN}
I(\sd;\yd| \hat \bh)=h(\yd|\hat \bh)-h(\yd|\sd,\hat \bh)
\end{align}
where $h(\yd|\hat \bh)\leq \ee_{\hat\bh}\left[\log\left((\pi e)^{\td}\left| \mathbf{C}_{\yd|\hat \bh}\right|\right)\right]$ since Gaussian maximizes entropy and
$h(\yd|\sd,\hat \bh)=h(\sqrt{P_d}\,\mathbf{s}_d \mathbf{\frac{\hat h^{\dag}}{\|\hat h\|}}\mathbf{\Delta h}+\mathbf{A}_d\mathbf{N}_{\hat \bh}\mathbf{\Delta h}+\mathbf{v}_d|\sd,\hat\bh)$ by \eqref{eq data_rx_LR}. Notice that $h(\yd|\sd,\hat \bh)$ is difficult to evaluate since $\mathbf{A}_d\mathbf{N}_{\hat \bh}\mathbf{\Delta h}$ is non-Gaussian. Hence, we resort to the following large $n_t$ analysis.


\begin{Lem} \label{Lemma_DCE_Gaussian}
Let $\bA$ be a $t\times (n-1)$ matrix with entries being i.i.d. ${\cal CN}\left(0,\frac{P}{n-1}\right)$, $\bN$ be a $(n-1) \times n$ semi-unitary matrix such that $\bN\bN^\dagger=\bI$, and $\Delta\bh$ be an $n\times 1$ vector with entries being i.i.d. ${\cal CN}(0,\sigma_{\Delta h}^2)$.
Then, $\bA\bN\Delta\bh$ converges in distribution to a Gaussian vector with entries being i.i.d. ${\cal CN}(0,P\sdh)$
as $n\rightarrow\infty$.
\end{Lem}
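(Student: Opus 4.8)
The plan is to exploit the semi-unitary structure of $\bN$ together with the independence of the AN matrix $\bA$ from the channel estimation error, reducing the claim to a \emph{conditional} Gaussian with a single random scale factor. First I would note that, for any fixed semi-unitary $\bN$ (and, in the application where $\bN$ is built from the channel estimate $\hat\bh$ or $\widetilde\bh$, $\bN$ is independent of $\Delta\bh$ by the orthogonality principle for the MMSE error), the vector $\bv\triangleq\bN\Delta\bh$ is a linear image of the Gaussian $\Delta\bh$ and is therefore itself Gaussian with covariance $\sdh\,\bN\bN^\dagger=\sdh\,\bI_{n-1}$. Hence $\bv$ has i.i.d. $\mathcal{CN}(0,\sdh)$ entries regardless of $\bN$, and the statement reduces to showing that $\bA\bv$ converges in distribution to $\mathcal{CN}(0,P\sdh\,\bI_t)$, where $\bA$ is independent of $\bv$.

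The main obstacle is that each entry of $\bA\bv$ is a sum of products of two independent Gaussian variables and is thus non-Gaussian for every finite $n$; a direct central-limit argument is awkward because the $n-1$ summands share the common vector $\bv$. The \emph{crux} is to condition on $\bv$. Given $\bv$, the matrix $\bA$ is still Gaussian, so $\bA\bv$ is an exactly (proper) complex Gaussian vector. Because the entries of $\bA$ are independent with variance $P/(n-1)$, the conditional covariance equals $\frac{P}{n-1}\|\bv\|^2\,\bI_t$: it is diagonal since distinct rows of $\bA$ are independent, and the pseudo-covariance vanishes by circular symmetry ($\ee[A_{ij}A_{kl}]=0$). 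Thus, conditioned on $\bv$, the only residual randomness enters through the scalar $\frac{P}{n-1}\|\bv\|^2$.

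It then remains to control this scalar and pass to the limit. Writing $\|\bv\|^2=\sum_{j=1}^{n-1}|v_j|^2$ as a sum of i.i.d. terms with mean $\sdh$, the law of large numbers gives $\frac{1}{n-1}\|\bv\|^2\to\sdh$ as $n\to\infty$. I would conclude via characteristic functions: the conditional characteristic function of $\bA\bv$ at a test vector $\omega$ is of the form $\exp\!\left(-\tfrac14\frac{P}{n-1}\|\bv\|^2\|\omega\|^2\right)$, which is bounded and continuous in $\frac{1}{n-1}\|\bv\|^2$. Taking the expectation over $\bv$ and invoking bounded convergence, the unconditional characteristic function converges to $\exp\!\left(-\tfrac14 P\sdh\|\omega\|^2\right)$, the characteristic function of $\mathcal{CN}(0,P\sdh\,\bI_t)$. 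L\'evy's continuity theorem then yields convergence in distribution, with i.i.d. $\mathcal{CN}(0,P\sdh)$ entries as claimed. After the conditioning reduction, each remaining ingredient (the Gaussian covariance computation, the law of large numbers, and the bounded-convergence passage to the limit) is routine.
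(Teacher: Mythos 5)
Your proof is correct, but it takes a genuinely different route from the paper's. Both arguments share the first step: writing $\mathbf{b}=\bN\Delta\bh$ and observing that, because $\bN\bN^\dagger=\bI$, this vector has i.i.d. ${\cal CN}(0,\sdh)$ entries regardless of $\bN$. From there the paper applies a central limit theorem directly to each entry $\sum_j\{\bA\}_{i,j}\mathbf{b}_j$, viewed as a normalized sum of $n-1$ i.i.d. products of independent Gaussians, and then argues that the entries of $\bA\mathbf{b}$ are uncorrelated to pass to the vector statement. You instead condition on $\mathbf{b}$: given $\mathbf{b}$, the vector $\bA\mathbf{b}$ is \emph{exactly} circularly symmetric Gaussian with covariance $\frac{P}{n-1}\|\mathbf{b}\|^2\,\bI_t$, so the whole problem reduces to the scalar statement $\frac{1}{n-1}\|\mathbf{b}\|^2\to\sdh$ (law of large numbers), after which bounded convergence of the characteristic function and L\'evy's continuity theorem finish the job. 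Your route buys something real: because the conditional law is exactly Gaussian with a diagonal covariance, joint convergence of the entire vector — including the asymptotic independence of its entries — falls out automatically, whereas the paper's step from ``entries are uncorrelated'' to ``the vector converges to an i.i.d. Gaussian vector'' glosses over the fact that uncorrelatedness alone does not give independence and would, strictly, require a multivariate CLT via the Cram\'er--Wold device. The paper's argument is more elementary in that it only invokes the classical CLT; yours is the cleaner and more airtight of the two. The only implicit hypothesis you rely on (as does the paper) is the independence of $\bA$ from $\Delta\bh$ and $\bN$, which holds in the application since the AN matrix is generated independently of the estimation error.
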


\begin{proof}
Let $\{\bN\}_{i,j}$ denote the $(i,j)$-th entry of matrix $\bN$ and let $\Delta\bh_j$ denote the $j$-entry of vector $\Delta\bh$. Then, we can define the vector ${\bf b}\triangleq \bN\Delta\bh$ whose $i$-th entry can be written as ${\bf b}_i=\sum_j\{\bN\}_{i,j}\Delta\bh_j$. Note that $\bf b$ is a Gaussian vector with entries that are i.i.d. with mean $0$ and variance $\sigma_{\Delta h}^2$ since, regardless of the value of $\bN$,  $\ee[{\bf b}_i{\bf b}_k^*|\bN]=\sum_j\sum_\ell \{\bN\}_{i,j}\{\bN\}_{k,\ell}^* \ee\left[\Delta\bh_j\Delta\bh_\ell^*\right]=\sigma_{\Delta h}^2$, for $i=k$, and $0$, otherwise.
Then, it follows by central limit theorem that the $i$-th entry of vector  $\bA\bN\Delta\bh=\bA{\bf b}$, i.e., $\sum_j\{\bA\}_{i,j}{\bf b}_j=\frac{1}{\sqrt{(n-1)}}\sum_j\{\overline{\bA}\}_{i,j}{\bf b}_j$ where $\{\overline{\bA}\}_{i,j}\triangleq\sqrt{n-1}\{\bA\}_{i,j}\sim {\cal CN}\left(0,P\right)$, converges in distribution to a Gaussian random variable with mean $0$ and variance $P\sdh$, as
$n \rightarrow\infty$.
Moreover, since $\sum_j\sum_\ell\ee[\{\bA\}_{i,j}{\bf b}_j\{\bA\}_{k,\ell}{\bf b}_\ell]=0$
for $i\neq k$ (i.e., the entries of $\bA{\bf b}$ are uncorrelated), it follows that $\bA{\bf b}$ converges in distribution to a Gaussian vector with entries that are i.i.d. ${\cal CN}(0,P\sdh)$, as $n\rightarrow\infty$.
\end{proof}

By Lemma \ref{Lemma_DCE_Gaussian} (with $n=n_t$, $t=T_d$, and $P=P_a$), we know that $\mathbf{A}_d\mathbf{N}_{\hat \bh}\mathbf{\Delta h}$ is asymptotically Gaussian as $n_t\rightarrow\infty$ if $\Delta\bh$ is Gaussian as well. Hence, for $n_t$ sufficiently large, we have
\begin{align}
I(\sd;\yd| \hat \bh)&\leq \ee_{\hat\bh}\left[\log\left((\pi e)^{\td}\left| \mathbf{C}_{\yd|\hat \bh}\right|\right)\right]-\ee_{\hat\bh}\left[\log\left((\pi e)^{\td}\left|\mathbf{C}_{\yd|\sd,\hat \bh}\right|\right)\right]\\
 &
=\ee_{\hat\bh}\left[\log\frac{\left|\left(\!P_d\hh\!+\!P_d\sdh\!+\!P_a\sdh\!+\!{ \sigma^2}\!\right)\mathbf{I}_{T_d}\right|}
{\left|P_d\sdh\sd\sd^{\dag}\!+\!(P_a\sdh\!+\!{ \sigma^2})\mathbf{I}_{T_d}\right|}\!\right]\\
&=\ee_{\hat\bh}\left[\log
\frac{\left(\!P_d\hh\!+\!P_d\sdh\!+\!P_a\sdh\!+\!{ \sigma^2}\!\right)^{T_d}(P_d\sdh\!+\!P_a\sdh+\!{ \sigma^2})^{T_d}}
{(P_d\sdh\!+\!P_a\sdh+\!{ \sigma^2})^{T_d}\left(\left(P_a\sdh\!+\!{ \sigma^2}\right)^{T_d}\left|\frac{P_d\sdh \sd\sd^{\dag}}{P_a\sdh\!+\!{ \sigma^2}}\!+\!\bI_{T_d}\right|\right)}\!\right]\\
&=\td \mathbb{E}_{\hat\bh}\left[ \log \left(1+\frac{\pd\hh}{\pd \sdh+P_a\sdh+{ \sigma^2} }\right)\right]+T \dRuGen. \label{upper bound proof without AN 3}
\end{align}

Similarly, it also holds, for $\Delta \bg$ Gaussian and $n_t$ sufficiently large, that
\begin{align}
\notag&
\td \mathbb{E}\left[ \log \left(1+\frac{\pd\gg}{\pd \sdg+ P_a\frac{\ng}{n_t-1}+P_a\sdg+\sigma^2 }\right)\right]\leq I(\sd;\zd|\hat \bh, \hat \bg) \\
\label{upper bound of mutual information of eve without AN}
& \hspace{1cm}\leq \td \mathbb{E}\left[ \log \left(1+\frac{\pd\gg}{\pd \sdg+ P_a\frac{\ng}{n_t-1}+P_a\sdg+\sigma^2 }\right)\right]+T \dRlGen
\end{align}
By combining the above bounds for  $I(\sd;\yd| \hat \bh)$ and $I(\sd;\zd| \hat \bh, \hat \bg)$, we obtain the bounds of the achievable secrecy rate in Theorem \ref{lem upper lower}.

\section{Proof of Theorem \ref{lem small_term} and Corollary \ref{cor linear ratio CONV}}\label{app.conv_case_proof}

\subsection{Proof of Theorem \ref{lem small_term}}

Let us consider the linear power allocation $\mathcal{P}_l\triangleq (P_{f,l}, P_{d,l}, P_{a,l})=(\alpha_f P, \beta_d P, \beta_a P)$ for some positive constants $\alpha_f$, $\beta_d$, and $\beta_a$ such that $\alpha_{f}PT_f+\beta_{d}PT_d+ \beta_{a}PT_d\leq PT$. Then, by Theorem \ref{lem upper lower}, it follows that
\begin{align}\label{bound of optimal rate CONV}
\tRANdata(\calP_l)\!-\!\dRlANdata(\calP_l)\!\leq\! \RANdata(\calP^*)\!\leq\! \tRANdata(\calP^*)\!+\!\dRuANdata(\calP^*).
\end{align}
Hence, to obtain Theorem \ref{lem small_term}, it is sufficient to show that $\tRANdata(\calP_l)-\dRlANdata(\calP_l)\doteq \tRANdata(\calP^*)+\dRuANdata(\calP^*)$, i.e., $\RANdata(\calP^*)\doteq \tRANdata(\calP^*)+\dRuANdata(\calP^*)$, and $\plim \dRuANdata(\calP^*)/\RANdata(\calP^*)=0$.

Specifically, by substituting $\calP_l$ into \eqref{sdh without AN} and \eqref{sdg without AN}, we can express the channel estimation error variances as
$\sdh=\frac{\sh\sigma^2 }{ \alpha_f P \sh+ \sigma^2} = \frac{\sigma^2}{\alpha_f P}+o\left(\frac{1}{P}\right)$ and $\sdg=\frac{\sg \sigma^2 }{\alpha_f P \sg+ \sigma^2 }= \frac{\sigma^2}{\alpha_f P}+o\left(\frac{1}{P}\right)$. Then, it follows that
\begin{align}
\notag  \tRANdata(\calP_l)\!
=&\frac{T_d}{T}\ee\left\{ \log \left[1+\frac{\beta_d P (\sh+o(1))\effh}{(\beta_d+\beta_a) P\left(\frac{\sigma^2}{\alpha_f P}\!+\!o\left(\frac{1}{P}\right)\right)\!+\!\sigma^2 }\right]\right\}\\
&~-\frac{T_d}{T}\ee\!\left\{ \log\! \left[\!1\!+\!\frac{\beta_d P(\sg\!+\!o(1))\effg}{(\beta_d\!+\!\beta_a)P\left(\frac{\sigma^2}{\alpha_f P}\!+\!o\left(\frac{1}{P}\right)\right)\!+\! \beta_a P(\sg+o(1))\frac{\|\bN_{\hat \bh}\mathbf{\overline{g}}\|^2}{n_t-1}\!+\!\sigma^2 }\!\right]\!\right\}\\
=&\frac{T_d}{T}\ee\!\left[\log\frac{\beta_d P\sh\effh\!+\!o(P)}{\frac{ \sigma^2(\beta_d+\beta_a)}{\alpha_f }\!+\!\sigma^2\!+\!o(1) }\!\right]\!-\!\frac{T_d}{T}\ee\left[ \log \!\left(\!1\!+\!\frac{\beta_d P\sg\effg\!+\!o(P)}{\beta_a P\sg\frac{\|\bN_{\hat \bh}\mathbf{\overline{g}}\|^2}{n_t-1}\!+\!o(P) }\!\right)\!\right]\\
=&\frac{T_d}{T}\ee\left[ \log \frac{\beta_d P\sh\effh}{\left(\frac{\beta_d+\beta_a }{\alpha_f }+1\right){ \sigma^2} }\right]
-\frac{T_d}{T}\ee\left[ \log \left(1+\frac{\beta_d \effg}{ \beta_a \frac{\|\bN_{ \hat \bh}\mathbf{\overline{g}}\|^2}{n_t-1} }\right)\right]+o(1)\label{RANdata small o 1}\\
=&\frac{T_d}{T}\log P+ c_1+o(1)\label{RANdata small o 2},
\end{align}
where
\begin{equation}
c_1\triangleq\frac{T_d}{T}\ee\left[ \log \frac{\beta_d \sh\effh}{\left(\frac{\beta_d+\beta_a }{\alpha_f }+1\right){ \sigma^2} }\right]
-\frac{T_d}{T}\ee\left[ \log \left(1+\frac{\beta_d \effg}{ \beta_a \frac{\|\bN_{ \hat \bh}\mathbf{\overline{g}}\|^2}{n_t-1} }\right)\right]
\end{equation}
is a finite constant that is independent of $P$, and
\begin{align}
\notag &\!\!\!\!\dRlANdata(\calP_l)\!\\
\notag=&~\frac{1}{T}\ee\!\left\{\log\frac{\left[(\beta_{d}\!+\!\beta_{a})P\left( \frac{\sigma^2}{\alpha_f P}\!+\!o\left(\frac{1}{P}\right)\right)\!+\!\beta_{a}P\left(\sg\!+\!o(1)\right)\frac{\ngg}{n_t-1}\!+\!\sigma^2\!\right]^{T_d}}
 {\left[\beta_{a}P\left(\sg\!+\!o(1)\right)\frac{\ngg}{n_t-1}\!+\!\beta_{a}P\left( \frac{\sigma^2}{\alpha_f P}\!+\!o\left(\frac{1}{P}\right)\right)\!+\! \sigma^2\!\right]^{T_d-1}}\right\}\\
&-\frac{1}{T}\ee\!\left\{\log\!\left[ (\beta_{d}\|\sd\|^2\!+\!\beta_{a})P\left( \frac{\sigma^2}{\alpha_f P}\!+\!o\left(\frac{1}{P}\right)\right)\!+\!\beta_{a}P\!\left(\sg\!+\!o(1)\right)\!\frac{\ngg}{n_t\!-\!1}\!+\!\sigma^2\!\right]\!\right\}\!\\
=&~\frac{1}{T}\ee\!\left[\log\frac{\left(\beta_{a}P\sg\frac{\ngg}{n_t-1}\!+\!o(P)\!\right)^{T_d}}
 {\left(\beta_{a}P\sg\frac{\ngg}{n_t-1}\!+\!o(P)\!\right)^{\!T_d-1}}\right]\!-\!\frac{1}{T}\ee\!\left[\log\!\left(\!\beta_{a}P\sg\frac{\ngg}{n_t\!-\!1}\!+\!o(P)\!\right)\!\right]
=o(1).\label{Rldelta small o 2}
\end{align}
Hence,
\begin{equation}\label{eq.Rconv_lowerboundproof}
\tRANdata(\calP_l)-\dRlANdata(\calP_l)=\frac{T_d}{T}\log P+c_1+o(1).
\end{equation}

Moreover, by \eqref{R lemma 1} and \eqref{Ru lemma 1}, we can write
\begin{align}
\notag &\tRANdata(\calP^*)+\dRuANdata(\calP^*)\\
\notag&\overset{(a)}{\leq}\frac{T_d}{T}\mathbb{E}\!\left[ \log \! \left(1\!+\!\frac{\pd^*(\sh\!-\!\sdh)\effh}{\pd^* \sdh\!+\!P_a^*\sdh\!+\!\sigma^2 }\!\right)\!\right]\\
&~~~~+\frac{1}{T}\log \left(\pd^*\sdh+P_a^*\sdh+ \sigma^2\right)^{T_d}\!-\!\frac{1}{T}\mathbb{E}\left[\log\!\left(\pd^* \|\sd\|^2\sdh\right)\right]\label{Pf linear ratio 0}\\
&\overset{(b)}{\leq} \frac{T_d}{T}\mathbb{E}\!\left[\log\! \left(k'P\left(2\sdh\!+\!1\!+\!(\sh\!-\!\sdh)\effh\right)\right)\right]
-\frac{1}{T}\mathbb{E}\left[\log\!\left(\pd^* \|\sd\|^2\sdh\right)\right]\\
&= \frac{T_d}{T}\log P-\frac{1}{T}\log\left(\pd^* \sdh \right)+c_2,\label{Pf linear ratio 1}
\end{align}
where $c_2\triangleq(1/T)\ee\!\left[\log\!\left( \left(k'\left(2\sdh\!+\!1\!+\!(\sh\!-\!\sdh)\effh\right)\right)^{T_d}/\|\sd\|^2\right)\right]$ is a finite constant. The inequality in (a) follows by eliminating the negative term of $\tRANdata(\calP^*)$ in \eqref{R lemma 1} and by eliminating some positive parts in the denominator of the first term as well as in the second term of $\dRuANdata$ in \eqref{Ru lemma 1}; and  (b)  is obtained by upper-bounding $P_d$, $P_a$, and $\sigma^2$ by $k'P$, where $k'$ is chosen such that $k'P\geq \max\{P_d, P_a,\sigma^2\}$.  By \eqref{bound of optimal rate CONV}, \eqref{eq.Rconv_lowerboundproof}, and \eqref{Pf linear ratio 1}, it follows that $P_d^*\sigma_{\Delta h}^2=O(1)$ (since otherwise the upper bound in \eqref{Pf linear ratio 1} would be smaller than the lower bound in \eqref{eq.Rconv_lowerboundproof}).
This implies that $\dRuANdata(\calP^*)$ is a finite constant and, thus,
we can write
\begin{align}
\tRANdata(\calP^*)+\dRuANdata(\calP^*)
&\leq \frac{T_d}{T}\mathbb{E}\!\left[ \log \left(1\!+\!\pd^*(\sh\!-\!\sdh)\effh\right)\!\right]+\dRuANdata(\calP^*)\\
&\leq \frac{T_d}{T}\log P +c_2'+o(1),\label{Rconv asy upperbound}
\end{align}
for some constant $c_2'\triangleq\frac{T_d}{T}\ee\left[\log k'(\sh-\sdh+1)
\effh\right]+\dRuANdata(\calP^*)$. By combining \eqref{eq.Rconv_lowerboundproof} and \eqref{Rconv asy upperbound}, we obtain the desired result $\tRANdata(\calP_l)-\dRlANdata(\calP_l)\doteq \tRANdata(\calP^*)+\dRuANdata(\calP^*)$.
Moreover, since $\dRuANdata(\calP^*)$ is finite, we have $\plim \dRuANdata(\calP^*)/ \RANdata(\calP^*)\!=\!0$, which completes the proof.

\subsection{Proof of Corollary \ref{cor linear ratio CONV}}\label{app.proof_cor_2}


The proof of the corollary relies on the fact that
\begin{align}\label{eq.corollary_proof_premise}
\tRANdata(\calP_l)-\dRlANdata(\calP_l) \leq \tRANdata(\calP^*)+\dRuANdata(\calP^*)
\end{align}
for any linear power allocation $\calP_l$.

Specifically, let
us first consider the upper bound
\begin{align}
\RANdata(\calP^*)\overset{(a)}{\leq}
& \frac{T_d}{T}\mathbb{E}\left[ \log  \left(1+\pd^*\sh\effh /\sigma^2\right)\right]+\dRuANdata\\
\overset{(b)}{\leq} & \frac{T_d}{T} \log \left(1+\pd^*\sh/\sigma^2\right)+\dRuANdata,
\end{align}
where (a) is obtained by eliminating the negative terms in $\tRANdata(\calP)$ and by lower-bounding the denominator of the first term by $\sigma^2$, and (b) follows from Jensen's inequality.
By the argument below \eqref{Pf linear ratio 1}, we know that $P_d^*\sigma_{\Delta h}^2=O(1)$, and thus, $R^{(u)}_{\rm conv}(\calP^*)=O(1)$. Then, together with \eqref{eq.corollary_proof_premise} and \eqref{eq.Rconv_lowerboundproof}, we have
$\frac{T_d}{T} \log \left(1+\pd^*\sh/\sigma^2\right)+O(1)\geq \frac{T_d}{T}\log P+c_1+o(1)$,
which implies that $P_d^*(P)=\Omega(P)$. Moreover, since $P_d^*\sigma_{\Delta h}^2=P_d^*\sigma_h^2\sigma^2/(P_f^*\sigma_h^2+\sigma^2)=O(1)$, it also follows that $P_f^*(P)=\Omega(P_d^*(P))=\Omega(P)$.
Furthermore, since $\sigma_{\Delta g}^2/\sigma_{\Delta h}^2=\frac{(P_f^*\sigma_h^2+\sigma^2)/\sigma_h^2}{(P_f^*\sigma_g^2+\sigma^2)/\sigma_g^2}=O(1)$, we know that  $P_d\sigma_{\Delta g}^2=(P_d\sigma_{\Delta h}^2)(\sigma_{\Delta g}^2/\sigma_{\Delta h}^2)=O(1)$. Therefore, the achievable secrecy rate can be upper-bounded as
\begin{align}
\RANdata\!&\leq \tRANdata+\dRuANdata \\
\notag &\leq \frac{T_d}{T}\mathbb{E}\left[ \log \left(1+\pd^*\sh\effh /\sigma^2\right)\right]\\
&~~-\frac{T_d}{T}\mathbb{E}\!\left[ \log\!\left(\!1\!+\!\frac{\pd^*(\sg\!-\!\sdg)\effg}{\pd^* \sdg\!+\! P_a^*(\sg\!-\!\sdg)\frac{\ngg}{n_t\!-\!1}\!+\!P_a^*\sdg\!+\!\sigma^2 }\right)\!\right]\!+\!\dRuANdata\\
&\overset{(a)}{=} \frac{T_d}{T}\mathbb{E}\!\left[ \log\! \left(1\!+\!\pd^*\sh\effh/\sigma^2\right)\right]
\!-\!\frac{T_d}{T}\mathbb{E}\!\left[ \log\! \left(1\!+\!\frac{\pd^*\sg\effg\!+\!o(P_d^*)}{ P_a^*\sg\frac{\ngg}{n_t\!-\!1}\!+\!o(P_d^*)}\right)\right]\!+\!\dRuANdata\\
&= \frac{T_d}{T} \log \pd^*-\frac{T_d}{T}\mathbb{E}\left[ \log \left(1+\frac{\pd^*\effg+o(P^*_d)}{ P^*_a\frac{\ngg}{n_t-1}+o(P^*_d)}\right)\right]+ c_3+o(1),\label{Pa linear 1}
\end{align}
where $c_3=(T_d/T)\ee\left[\log\left(\sh\effh/\sigma^2\right)\right]+\dRuANdata$ is a finite constant and (a) holds since $P^*_d \sdg=O(1)$ and $P^*_a \sdg=O(1)$. From \eqref{Pa linear 1}, we can observe that the second term is a negative finite constant only if $ P^*_a(P)=\Omega(P_d^*(P))$ which implies that $P^*_a(P)=\Omega(P)$.

\section{Proof of \eqref{high_AN_app_line_5} in Section \ref{sec.no AN training}}\label{proof of tRconv lower bound}

By the weak law of large numbers (WLLN), we know that $\ngg/(n_t-1) \rightarrow 1$ in probability as $n_t \rightarrow \infty$. That is, for any $\epsilon >0$, we have $\Pr(A_{\epsilon})\rightarrow 1$ (and, thus, $\Pr(A_{\epsilon}^c)\rightarrow 0$) as $n_t\rightarrow\infty$, where $A_{\epsilon}\triangleq\left\{\left|\frac{\ngg}{n_t-1}-1\right|\leq \epsilon\right\}$.
Therefore, for $n_t$ sufficiently large, the second expectation term in \eqref{RABF approx used in optimization prob} can be written as
\begin{align}
\notag &\!\!\!\!\mathbb{E}\left[\log \left(1+\frac{P^*_d\effg}{P^*_a\frac{\ngg}{(n_t-1)} }\right)\Bigg|A_{\epsilon}\right]\Pr\left(A_{\epsilon}\right)
+\mathbb{E}\left[\log \left(1+\frac{P^*_d\effg}{P^*_a\frac{\ngg}{(n_t-1)} }\right)\Bigg|A_{\epsilon}^c\right]\Pr\left(A_{\epsilon}^c\right)\\
\leq &~ \mathbb{E}\left[\log \left(1+\frac{P^*_d\effg}{P^*_a(1-\epsilon) }\right)\Bigg|A_{\epsilon}\right]\Pr\left(A_{\epsilon}\right)
+\mathbb{E}\left[\log \left(1+\frac{P^*_d\effg}{P^*_a\frac{\ngg}{(n_t-1)} }\right)\Bigg|A_{\epsilon}^c\right]\Pr\left(A_{\epsilon}^c\right)\\
\leq &~\mathbb{E}\left[\log \left(1+\frac{P^*_d\effg}{P^*_a(1-\epsilon) }\right)\right]
+\epsilon_{n_t}.\label{high_AN_app_line_3}
\end{align}
where
\begin{align}\label{epsilon term}
\epsilon_{n_t}\triangleq \mathbb{E}\left[\log \left(1+\frac{P^*_d\effg}{P^*_a\frac{\ngg}{(n_t-1)} }\right)\Bigg|A_{\epsilon}^c\right]\Pr\left(A_{\epsilon}^c\right).
\end{align}
Notice that $\epsilon_{n_t}\rightarrow 0$ as $n_t\rightarrow\infty$ since the expectation inside \eqref{epsilon term} is finite.
Then, by applying Jensen's inequality to the first term in \eqref{high_AN_app_line_3}, we have
\begin{align}\label{high_AN_app_line_4}
\mathbb{E}\left[\log \left(1+\frac{P^*_d\effg}{P^*_a(1-\epsilon) }\right)\right]
\leq \log \left(1+\frac{P^*_d}{P^*_a(1-\epsilon) }\right)
\leq \log \left(\frac{P^*_a+P^*_d}{P^*_a(1-\epsilon) }\right)
\end{align}
Finally, by \eqref{RABF approx used in optimization prob} and \eqref{high_AN_app_line_4}, we have
\begin{align}
\tRANdata\geq\frac{T_d}{T}\log\frac{\frac{P^*_d P^*_a}{P^*_a+P^*_d} }{\left(\frac{P^*_d+P^*_a }{P^*_f }+1 \right)\sigma^2}
+\frac{T_d}{T}\mathbb{E}\left[\log \left(\sh\effh(1-\epsilon) \right)\right]
+\frac{T_d}{T}\epsilon_{n_t}+o(1)
\end{align}

\section{Proof of Theorem \ref{asymptotically equivalent DCE} and Corollary \ref{cor linear ratio DCE}}\label{app.DCE_case_proof}

\subsection{Proof of Theorem \ref{asymptotically equivalent DCE}}

The proof of Theorem \ref{asymptotically equivalent DCE} is an extension of the proof of Theorem \ref{lem small_term} in Appendix \ref{app.conv_case_proof} and, thus, is explained more concisely in the following.

Specifically, let us also consider a linear power allocation $\calP_l\triangleq(P_{r,l},P_{f,l},P_{f_a,l},P_{d,l},P_{a,l})=(\alpha_rP,\alpha_fP,\alpha_{f_a}P,\beta_dP,\beta_aP)$, where $\alpha_r,\alpha_f,\alpha_{f_a},\beta_d$, and $\beta_a$ are positive constants chosen such that the total power constraint in \eqref{eq.power_constraint} is satisfied. Similar to Appendix \ref{app.conv_case_proof},
it is sufficient to show here that $\tRANboth(\calP_l)-\dRlANboth(\calP_l)\doteq \tRANboth(\calP^*)+\dRuANboth(\calP^*)$ and $ \dRuANboth(\calP^*)=O(1)$.

By substituting $\calP_l$ into \eqref{var error h forward training} and \eqref{var error g forward training}, we can write the channel estimation error variances as
\begin{align}\label{repeat sdhDCE}
 \sigma^2_{\Delta h}&=\frac{\sh\left(P_{f_a,l}\frac{\sh\sigma^2}{\sigma^2+P_{r,l}\sh}+\sigma^2\right)}{P_{f_a,l}\frac{\sh\sigma^2}{\sigma^2+P_{r,l}\sh}+\sigma^2+P_{f,l}\sh}
=\frac{(\alpha_{f_a}+\alpha_r)\sigma^2}{\alpha_r\alpha_{f}}P^{-1}+o(P^{-1}),\\
\sigma^2_{\Delta g} &=\frac{P_{f_a,l}\sigma^4_{g}+\sg}{P_{f_a,l}\sg+ \sigma^2+P_{f,l}\sg}
=\frac{\alpha_{f_a}\sg}{\alpha_{f_a}+\alpha_{f}}+o(1).
\end{align}
Thus, we have
\begin{align}
\notag\tRANboth(\calP_l) =&\frac{T_d}{T}\mathbb{E}\left[ \log \left(1+\frac{\beta_dP(\sh+o(1))\effh}{\left(\beta_d+\beta_a\right) \frac{(\alpha_{f_a}+\alpha_r) \sigma^2}{\alpha_r\alpha_{f}}+o(1)+ \sigma^2 }\right)\right]\\
&-\frac{T_d}{T}\mathbb{E}\!\left[ \!\log\! \left(\!1\!+\!\frac{\beta_dP\left(\frac{\alpha_{f}\sg}{\alpha_{f_a}+\alpha_{f}}+o(1)\right)\effg}{ (\beta_d+\beta_a)P\!\left(\frac{\alpha_{f_a}\sg}{\alpha_{f_a}+\alpha_{f}}\!+\!o(1)\!\right)\!+\! \beta_aP\!\left(\frac{\alpha_{f}\sg}{\alpha_{f_a}+\alpha_{f}}\!+\!o(1)\!\right)\!\frac{\ngg}{n_t-1}\!+\! \sigma^2 }\!\right)\!\right]\\
 \notag=&\frac{T_d}{T}\mathbb{E}\left[ \log \left(\frac{\beta_dP\sh\effh}{\left(\beta_d+\beta_a\right) \frac{(\alpha_{f_a}+\alpha_r)\sigma^2}{\alpha_r\alpha_{f}}+\sigma^2 }\right)\right]\\
 &-\frac{T_d}{T}\mathbb{E}\left[\log \left(1+\frac{\beta_d\alpha_{f}\effg}{ (\beta_d+\beta_a)\alpha_{f_a}+ \beta_a\alpha_{f}\frac{\ngg}{n_t-1} }\right)\right]+o(1)\label{repeat tRANboth with linear ratio}\\
=&\frac{T_d}{T}\log P +  c_4+o(1).
\end{align}
where $c_4\triangleq \frac{T_d}{T}\mathbb{E}\left[ \log \left(\frac{\beta_d\sh\effh}{\left(\beta_d+\beta_a\right) \frac{(\alpha_{f_a}+\alpha_r)\sigma^2}{\alpha_r\alpha_{f}}+ \sigma^2 }\right)\right]-\frac{T_d}{T}\mathbb{E}\left[\log \left(1+\frac{\beta_d\alpha_{f}\effg}{ (\beta_d+\beta_a)\alpha_{f_a}+ \beta_a\alpha_{f}\frac{\ngg}{n_t-1} }\right)\right]$. Moreover, by substituting $\calP_l$ into \eqref{Rl lemma 1}, it can also be shown that $ \dRlANboth(\calP_l)=O(1)$. Hence, we have $\tRANboth(\calP_l)-\dRlANboth(\calP_l)=\frac{T_d}{T}\log P + O(1).$

Furthermore, by following the derivations in \eqref{Pf linear ratio 0}-\eqref{Pf linear ratio 1}, it can also be shown that $P_d^*\sigma_{\Delta h}^2=O(1)$ and, thus,
$\tRANboth(\calP^*)+\dRuANboth(\calP^*) \leq \frac{T_d}{T}\log P+O(1).$
Hence, it follows that $\tRANboth(\calP_l)-\dRlANboth(\calP_l)\doteq \tRANboth(\calP^*)+\dRuANboth(\calP^*)$.

\subsection{Proof of Corollary \ref{cor linear ratio DCE}}

The proofs of  $P^*_f(P)=\Omega(P)$ and $P^*_d(P)=\Omega(P)$ are the same as those in Appendix \ref{app.conv_case_proof} for the case with conventional training. Hence, we prove here that either $P_{fa}^*(P)=\Omega(P)$ or $P_a^*(P)=\Omega(P)$, and that $P_r^*(P)=\Omega(P_{fa}^*(P))$.

First, let us recall that
\begin{align}\label{pd times sdh finite DCE}
P_d^*\sdh=\frac{ P_d^*\sh \left(P_{f_a}^*\sdhr+ \sigma^2\right)}{P_f^*\sh+P_{f_a}^*\sdhr+ \sigma^2}=O(1).
\end{align}
Due to the total power constraint, it holds that $P_f^*\sh+P_{f_a}^*\sdhr+\sigma^2=O(P)$. Therefore, together with the fact that $P^*_d(P)=\Omega(P)$, it follows that $P_{f_a}^*\sdhr=O(1)$. Then, by \eqref{var error h reverse training}, we have $P_{f_a}^*\sdhr=\frac{P_{f_a}^*\sh \sigma^2}{P_r^*\sh+ \sigma^2}=O(1)$, which implies that $P^*_r(P)=\Omega(P^*_{f_a}(P))$.


Finally, to show that
$P^*_{f_a}(P)=\Omega(P)$ or $P^*_{a}(P)=\Omega(P)$, let us rewrite the upper bound of the achievable secrecy rate as follow:
\begin{align}
&\tRANboth+\dRuANboth \\
\notag &\leq \frac{T_d}{T}\mathbb{E}\left[ \log \left(1+\pd^*\sh\effh\right)\right]\\
&~~~~-\frac{T_d}{T}\mathbb{E}\left[ \log \left(1+\frac{\pd^*(\sg-\sdg)\effg}{\pd^* \sdg+ P_a^*(\sg-\sdg)\frac{\ngg}{n_t-1}+P_a^*\sdg+\sigma^2 }\right)\right]+\dRuANboth\\
& =\frac{T_d}{T}\log \pd^* -\frac{T_d}{T}\mathbb{E}\left[ \log \left(1+\frac{\pd^*(\sg-\sdg)\effg}{\pd^* \sdg\!+\! P_a^*(\sg\!-\!\sdg)\frac{\ngg}{n_t-1}\!+\!P_a^*\sdg\!+\!\sigma^2 }\right)\right]\!+\!O(1).
\label{either Pa or Pfa proof}
\end{align}
The last equality comes from the fact that  $\dRuANboth(\calP^*)=O(1)$ since $ P_d^*\sdh=O(1)$. Then, by the fact that $\tilde R_{\rm DCE}(\calP^*)+\Delta R_{\rm DCE}^{(u)}(\calP^*)\geq \tilde R_{\rm DCE}(\calP_l)-\Delta R_{\rm DCE}^{(l)}(\calP_l)=\frac{T_d}{T}\log P + O(1)$, it follows that the second term in \eqref{either Pa or Pfa proof} must be $O(1)$. This implies that term inside the logarithm must be $O(1)$.  By substituting $\sigma_{\Delta g}^2$ with \eqref{var error g forward training} and using $T_f=n_t$, this term can be written more explicitly as
\begin{align*}
&\frac{\pd^*(\sg-\sdg)\effg}{\pd^* \sdg+ P_a^*(\sg-\sdg)\frac{\ngg}{n_t-1}+P_a^*\sdg+ \sigma^2 }\\
&=\frac{\pd^*P_f^*\sigma_g^4 \effg}{(\pd^*+P_a^*) \sg(P_{fa}^*\sg+\sigma^2)+ P_a^*P_f^*\sigma_g^4 \frac{\ngg}{n_t-1}+\sigma^2(P_{fa}^*\sg+\sigma^2+\sg P_f^* ) }.
\end{align*}
Since $P^*_f(P)=\Omega(P)$ and $P^*_d(P)=\Omega(P)$, it is necessary to have either $P_a^*(P)=\Omega(P)$ or $P_{fa}^*(P)=\Omega(P)$ (or both) in order for this term to scale as $O(1)$. This completes the proof.

\section{Proof of Corollary \ref{lem sufficient condition} in Section \ref{Sec.DCE}}\label{sec. proof sufficient condition DCE is better}

To distinguish between the conventional and the DCE cases, let us denote the power allocation in the conventional case as $\calP=(P_f, P_d, P_a)$ and that in the DCE case as $\calQ=(Q_r, Q_{f_a}, Q_f, Q_d, Q_a)$. The approximate achievable secrecy rate $\tilde R_{\rm conv}(\hat\calP^*)$ is given by \eqref{RABF approx used in optimization prob} and the optimal power allocation $\hat\calP^*$ in the conventional case is given in \eqref{optimal power ratio RConv}. Corollary \ref{lem sufficient condition} is proved by showing that a lower bound of $\tilde R_{\rm DCE}(\calQ^*)$, where $\calQ^*$ is the optimal power allocation in the DCE case, is greater than an upper bound of $\tilde R_{\rm conv}(\hat\calP^*)$ if the condition in \eqref{sufficient condition of T} is satisfied.

To obtain an upper bound for $\tilde R_{\rm conv}(\hat\calP^*)$, let us first note,
by WLLN, that $\|\bN_{\hat\bh}\bar\bg\|^2/(n_t-1)\rightarrow 1$ and $\bar\bg^\dag\bar\bh\bar\bh^\dag\bar\bg/\|\bar \bh\|^2\rightarrow 1$ in probability as $n_t\rightarrow 1$. Therefore, by defining $A_{\epsilon}'\triangleq \{|\frac{\|\bN_{\hat\bh}\bar\bg\|^2}{(n_t-1)}-1|\leq \epsilon, |\effg-1|\leq \epsilon\}$, the expectation inside second term of \eqref{RABF approx used in optimization prob} can be lower bounded by
$\mathbb{E}\left[\log \left(1+\frac{\hat P_d^*(1-\epsilon)}{\hat P_a^*(1+\epsilon) }\right)\Bigg|A_{\epsilon}'\right]\Pr\left(A_{\epsilon}'\right)
+\mathbb{E}\left[\log \left(1+\frac{\hat P_d^*\effg}{\hat P_a^*\frac{\ngg}{(n_t-1)} }\right)\Bigg|{A_{\epsilon}'}^c\right]\Pr\left({A_{\epsilon}'}^c\right)
=\log \left(1+\frac{\hat P_d^*(1-\epsilon)}{\hat P_a^*(1+\epsilon) }\right)
+\epsilon_{n_t}''$,
where $\epsilon_{n_t}''\rightarrow 0$ as $n_t\rightarrow \infty$. By substituting this 
into  \eqref{RABF approx used in optimization prob}, we have
\begin{align}\label{suff condtion ANdata derivation 1}
\tRANdata(\hat \calP^*)&\leq
\frac{T_d}{T}\ee\left[ \log \frac{\hat P_d^*\sh\effh}{\left(\frac{\hat P_d^*+\hat P_a^* }{\hat P_f^* }+1\right){\sigma^2} }\right]
-\frac{T_d}{T}\log \left(1+\frac{\hat P_d^*(1-\epsilon)}{\hat P_a^*(1+\epsilon) }\right)
-\frac{T_d}{T}\epsilon_{n_t}''+o(1)\\\label{suff condtion ANdata derivation 2}
&=
\frac{T-n_t}{T}\left\{\log\frac{\hat P^*_d}
{2\left(\frac{2\hat P_d^*}{\hat P^*_f}+1\right)}+\ee\left[\log\frac{\sh\effh(1+\epsilon)}
{\sigma^2}\right]+\epsilon_{n_t}''\right\}+o(1).
\end{align}
since $T_d=T-T_f=T-n_t$ in the conventional training based scheme and $\hat P_d^*=\hat P_a^*$ (c.f. \eqref{optimal power ratio RConv}).

To obtain a lower bound for $\tilde R_{\rm DCE}(\calQ^*)$, we consider the power allocation policy $\calQ^\sharp$ defined such that $Q_r^\sharp=\frac{\gamma T_f}{2T_r}\hat P_f^*=\frac{\gamma n_t}{2}\hat P_f^*$, $Q_f^\sharp=(1-\gamma)\hat P_f^*$, $Q_{f_a}^\sharp=\frac{\gamma}{2}\hat P_f^*$, $Q_d^\sharp=\hat P_d^*$, and $Q_a^\sharp=\hat P_a^*$, where $\gamma$ is a constant in $(0,1)$. Since $\calQ^\sharp$ is an arbitrarily chosen power allocation policy, it follows that $\tilde R_{\rm DCE}(\calQ^*)\geq \tilde R_{\rm DCE}(\calQ^\sharp)$. Notice that $\calQ^\sharp$ is similar to $\hat \calP^*$, but with $\gamma$ portion of the training energy moved to the reverse pilot signal and to AN in the training phase. It is also worthwhile to note that, even though the power allocated to signal and AN in the data transmission phase, i.e., $Q_d^\sharp=\hat P_d^*$, and $Q_a^\sharp=\hat P_a^*$, are the same, the total energy expended in the data transmission phase is smaller than that in the conventional training based scheme since $T_d=T-n_t-1$ in this case (i.e., an additional channel use is spent for reverse training). Hence, the total energy consumed by $\calQ^\sharp$ is actually strictly less than the constraint $PT$.

By the fact that all power components in $\calQ^\sharp$ scale linearly with $P$ as in $\hat \calP^*$ and by substituting $\calQ^\sharp$ into \eqref{RDCE approximation final ver}, we have
\begin{align}
\notag\tRANboth(\calQ^\sharp)\!&\!\geq\!
\frac{T\!-\!n_t-\!1\!}{T}\left\{\log \frac{\hat P^*_d}{\frac{2\hat P^*_d}{\hat P_f^*}\frac{1+n_t}{n_t(1-\gamma)}\!+\!1 }\!+\ee\left[ \log\frac{\sh\effh}{\sigma^2}\right]\!-\log \frac{2\!-\!\gamma\!-\!(1\!-\!\gamma)\epsilon'}{1\!-\!\epsilon'\!+\!\epsilon'\gamma}\!+\!\epsilon'_{n_t}\right\}\!+\!o(1)\\
&\geq\! \frac{T\!-\!n_t-\!1\!}{T}\left\{ \log\frac{\hat P^*_d}{\frac{2\hat P^*_d}{\hat P_f^*}\frac{1+n_t}{n_t(1-\gamma)}\!+\!1 }\!+\ee\left[ \log\frac{\sh\effh}{\sigma^2}\right]- \log \frac{2\!-\!\gamma}{1\!-\!\epsilon'}\!+\!\epsilon'_{n_t}\right\}\!+\!o(1).\label{suff condtion ANBoth derivation 2}
\end{align}

By \eqref{suff condtion ANdata derivation 2} and \eqref{suff condtion ANBoth derivation 2}, the difference between $\tRANboth(\calQ^*)$ and $\tRANdata(\hat \calP^*)$ can be lower bounded as
\begin{align}
\notag&\tRANboth(\calQ^*)\!-\!\tRANdata(\hat \calP^*)\\
 \notag &\geq \frac{T\!-\!n_t-\!1\!}{T}\left[\log \frac{2\left(\frac{2\hat P_d^*}{\hat P_f^*}+1\right)}{\frac{2\hat P^*_d}{\hat P_f^*}\frac{1+n_t}{n_t(1-\gamma)}+1}-\log \frac{ 1+\epsilon}{1-\epsilon'}-\log(2-\gamma)+\epsilon'_{n_t}\right]\\
&~~~~- \frac{1}{T} \ee\left[\log \frac{\hat P_d^* \sh\effh( 1+\epsilon)/\sigma^2}{2\left(\frac{2\hat P^*_d}{\hat P_f^*} +1\right) }\right]-\frac{T-n_t}{T}\epsilon_{n_t}''+o(1)\\
\notag &= \frac{T\!-\!n_t-\!1\!}{T}\left[\log \frac{2\left(\sqrt{\frac{n_t}{T-n_t}}+1\right)}{
\sqrt{\frac{n_t}{T-n_t}}\frac{1+n_t}{n_t(1-\gamma)}+1}-\log \frac{ 1+\epsilon}{1-\epsilon'}-\log(2-\gamma)+\epsilon'_{n_t}\right]\\
&~~~~- \frac{1}{T} \ee\left[\log \frac{PT \sh\effh( 1+\epsilon)/\sigma^2}{4\left(\sqrt{n_t}+\sqrt{T-n_t}\right)^2 }\right]-\frac{T-n_t}{T}\epsilon_{n_t}''+o(1)\label{suff condtion ANBoth derivation 3}
\end{align}

The expectation inside the second term of \eqref{suff condtion ANBoth derivation 3} can be upper bounded as
\begin{align}
 \ee\left[\log \frac{PT \sh\effh(1\!+\!\epsilon)/\sigma^2}{4\left(\sqrt{n_t}+\sqrt{T-n_t}\right)^2 }\right]\!\leq\! \ee\left[\log \left(\frac{P T\sh\effh( 1\!+\!\epsilon)}{4T\sigma^2 }\right)\right]\!\leq\! \log \left(\frac{P\sh n_t( 1\!+\!\epsilon)}{4\sigma^2 }\right),
\end{align}
where the last inequality follows from Jensen's inequality. Therefore, the difference $\tRANboth(\calQ^*)-\tRANdata(\hat \calP^*)$ can be further bounded as
\begin{align}
\notag &\!\!\!\!\!\!\tRANboth(\calQ^*)-\tRANdata(\hat \calP^*) \\
\notag\geq &  \frac{T\!-\!n_t\!-\!1}{T}\log \frac{2\left(\sqrt{\frac{n_t}{T-n_t}}+1\right)}
{\left(\sqrt{\frac{n_t}{T-n_t}}\frac{1+n_t}{n_t(1-\gamma)}+1\right)(2-\gamma)}- \frac{1}{T} \log { \frac{P \sh n_t(1\!+\!\epsilon)}{4\sigma^2 }}\\
&-\frac{T\!-\!n_t\!-\!1}{T}\log\frac{1+\epsilon}{1-\epsilon'}
+\frac{T\!-\!n_t\!-\!1}{T}\epsilon'_{n_t}-\frac{T-n_t}{T}\epsilon''_{n_t}+o(1)
\end{align}
When $P$ and $n_t$ are sufficiently large, we can choose arbitrary small $\epsilon, \epsilon'>0$ such that $\epsilon_{n_t}$, $\epsilon_{n_t}'$, and $o(1)$ can be neglected. Hence, for $\tRANboth(\calQ^*)-\tRANdata(\hat \calP^*)>0$, it is sufficient to have
\begin{align}\label{suff condtion ANBoth derivation 4}
(T-n_t-1)\log\left(\frac{2\left(n_t+\sqrt{(T-n_t)n_t}\right)}{1+n_t+(1-\gamma)\sqrt{(T-n_t)n_t}}\cdot \frac{1-\gamma}{2-\gamma}\right)> \log\left( \frac{P\sh n_t}{4\sigma^2}\right)
\end{align}
By selecting $\gamma = 1/2$,
the term inside the logarithm on the left-hand-side
can be rewritten as
\begin{align}\label{suff condtion ANBoth derivation 5}
\frac{4n_t\!+\!4\sqrt{(T\!-\!n_t)n_t}}{6(1\!+\!n_t)\!+\!3\sqrt{(T\!-\!n_t)n_t}}
=\frac{(12n_t\!+\!2\sqrt{(T\!-\!n_t)n_t})\!+\!10\sqrt{(T\!-\!n_t)n_t}}{18(1\!+\!n_t)\!+\!9\sqrt{(T\!-\!n_t)n_t}}.
\end{align}
Notice that, if $12n_t+2\sqrt{(T-n_t)n_t}\geq 20(1+n_t)$, i.e., if
\begin{equation}\label{eq.app_suffcond1}
T\geq \frac{(4n_t+10)^2}{n_t}+n_t,
\end{equation}
the left-hand-side of \eqref{suff condtion ANBoth derivation 4} is lower bounded by $(T-n_t-1)\log\left(10/9 \right)$. Therefore, we have $\tRANboth(\calQ^*)-\tRANdata(\calP^*)>0$ if
$(T-1-n_t)\log_{10}\left(10/9 \right)>\log_{10}\left(P\sh n_t/4\sigma^2\right)$, which yields the sufficient condition
\begin{align}\label{eq.app_suffcond2}
T\geq 22\log_{10}\left(\frac{P\sh n_t}{4\sigma^2}\right)+1+n_t
\end{align}
since $(\log_{10}(10/9))^{-1}\leq 22$. By \eqref{eq.app_suffcond1} and \eqref{eq.app_suffcond2}, we obtain the result in \eqref{sufficient condition of T}.

\end{appendices}


\end{document}